\definecolor{beaver}{rgb}{0.62, 0.51, 0.44}
\numberwithin{equation}{section}
\newtheorem{theorem}{Theorem}[section]
\newtheorem{corollary}[theorem]{Corollary}
\newtheorem{proposition}[theorem]{Proposition}
\newtheorem{lemma}[theorem]{Lemma}
\theoremstyle{definition}
\newtheorem{remark}[theorem]{Remark}
\newtheorem{definition}[theorem]{Definition}
\newcommand{\Z}{\mathbf Z}
\newcommand{\R}{\mathbf R}
\newcommand{\C}{\mathbf C}
\newcommand{\E}{{\mathbb E}}
\newcommand{\F}{\mathcal{F}}
\renewcommand{\S}{\mathbf{S}}
\newcommand{\B}{\mathbf{B}}
\newcommand{\vp}{\varepsilon}
\newcommand{\distas}[1]{\mathbin{\overset{#1}{\kern\z@\sim}}}%
\newsavebox{\mybox}\newsavebox{\mysim}
\newcommand{\distras}[1]{%
  \savebox{\mybox}{\hbox{\kern3pt$\scriptstyle#1$\kern3pt}}%
  \savebox{\mysim}{\hbox{$\sim$}}%
  \mathbin{\overset{#1}{\kern\z@\resizebox{\wd\mybox}{\ht\mysim}{$\sim$}}}%
}
\title[Sharp Declipping and Unlimited Sampling]{On Sharp stable recovery from clipped \\ and folded measurements}
\thanks{The first author was supported by NSF and Simons Research Collaborations on the Mathematical and Scientific Foundations of Deep Learning. The second author was supported by NSF award DMS 2154931. }
\begin{document}
\author{Pedro Abdalla}
\address{Department of Mathematics\\ University of California, Irvine, USA}
\email{pabdalla@uci.edu}
\author{Daniel Freeman}
\address{Department of Mathematics and Statistics\\
St Louis University\\
St Louis, MO   USA} \email{daniel.freeman@slu.edu}
\author{Jo\~ao P. G. Ramos}
\address{ IMPA\\  Rio de Janeiro, Brazil}
\email{joaopgramos95@gmail.com}
\author{Mitchell A. Taylor}
\address{Department of Mathematics\\
ETH Z\"urich, Switzerland
} \email{mitchell.taylor@math.ethz.ch}

\begin{abstract}

We investigate the stability of vector recovery from random linear measurements which have been either clipped or folded. This is motivated by applications where measurement devices detect inputs outside of their effective range.
As examples of our main results,  we prove  \emph{sharp} lower bounds on the recovery constant for both the declipping and unfolding problems  whenever samples are taken according to a uniform distribution on the sphere. Moreover, we show such estimates under (almost) the best possible conditions on both the number of samples and the distribution of the data. 
We then prove that all of the above results have suitable \emph{(effectively) sparse} counterparts. In the special case that one restricts the stability analysis to vectors which belong to the unit sphere of $\R^n$, we  show that the problem of declipping directly extends the \emph{one-bit compressed sensing} results of Oymak-Recht and Plan-Vershynin. 
\end{abstract}
\maketitle

\section{Introduction}

This manuscript deals with different problems of \emph{recovery} type from incomplete data. As a first example of a problem of such kind, we mention the classical problem of \emph{phase retrieval}: given sampled information on the absolute values of a signal, we wish to recover the original signal modulo multiplication by a constant factor of the form $e^{i \alpha}$, for some $\alpha \in \R.$ This problem, and related versions thereof, first arose from fundamental questions in physics, such as the classical Pauli problem, which asks whether recovery of a particular function should be possible from its absolute value together with the absolute value of its Fourier transform. In more recent times, however, phase retrieval problems have become of pivotal importance for fields such as signal processing, in particular due to numerous applications in crystallography, microscopy and imaging \cite{Drenth,allman,corbett,hurt,ismagilov,klibanov,millane,Rosenblatt,seelamantula}.  This has motivated significant research into the mathematics of phase retrieval and  we refer the reader to \cite{alaifari-daubechies,balan,bandeira-pr,  Christ-Mitch-Ben, grohs-liehr,jaming-liehr} for a sample of mathematical contributions to the subject.  

Applications of phase retrieval require being able to recover a vector $u\in\C^n$ from phaseless measurements  of the form $(|\langle X_j,u\rangle| )_{j=1}^m$.  It is thus important to understand both what measurement vectors $(X_j)_{j=1}^m$ in $\C^n$ allow this to be possible and how stable is the recovery.  Unfortunately, there are no known deterministic constructions of such $(X_j)_{j=1}^m$ which allow for both dimension independent stability and the choice of $m$ on the order of $n$. However, under mild assumptions on the distributions, a collection of random vectors $(X_j)_{j=1}^m$ in $\C^n$ (or $\R^n$) will with high probability allow for the uniformly stable recovery of any $u\in \C^n $ (or $\R^n$) from the measurements $(|\langle X_j,u\rangle| )_{j=1}^m$ when $m$ is on the order of the dimension $n$ \cite{candes-eldar, candes-xiaodong,eldarmendelson,complex-subgaussian-pr,Krahmer-Liu}.  Our goal is to prove corresponding stable recovery guarantees, except instead of recovering $u$ from random phaseless linear measurements, we consider  the recovery of $u$ from random clipped or folded  linear measurements instead. 
 
 Measurement devices in applications with a finite dynamic range cannot distinguish very large measurements from some fixed maximum and likewise cannot distinguish very small measurements from some fixed minimum.  In situations where a device receives an input outside of its range, it is typical for the device to simply output the maximum or minimum value, depending on whether the input was too large or too small.  When one is working with a continuous signal over time, the output results in a graph which is clipped at the top and the bottom.  Recovering the true signal from clipped measurements is a common and significant problem in many applications in audio processing and electrical engineering.  This problem becomes even more difficult when one is working with discrete samples as it is no longer possible to use local information around the clipped portions of the signal to recover the true values.
The following shows the graph of a continuous signal $(\langle x,x_t\rangle)_{t\in[0,1]}$, the clipped continuous signal $(\Phi_\lambda(\langle x,x_t\rangle))_{t\in[0,1]}$, and a random discrete sampling $(\Phi_\lambda(\langle x,x_{t_j}\rangle))_{j=1}^m$.

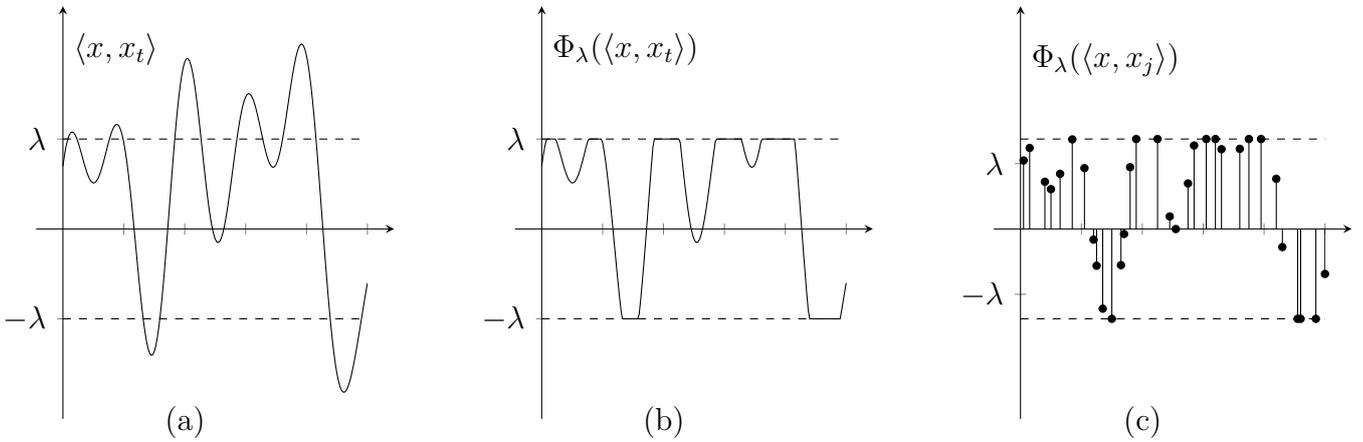
\begin{figure}[htp]
\begin{subfigure}[t]{0.3\textwidth}
%\vspace{10pt}
\centering
\begin{tikzpicture}
\begin{axis}[width=2.2in, height=2.5in, domain=0:30, xmin=0, xmax=1,
xtick={}, xticklabels={$0$,$1$}, extra x ticks={}, extra x tick labels={}, extra x tick style={xticklabel style={xshift=-0.9ex}}, 
ytick={-.55,0,.55}, yticklabels={$-\lambda$,0,$\lambda$}, ymin=-1, ymax=1.2
,legend pos=outer north east,legend cell align=left,axis lines=center,axis line style={shorten >=-10pt, shorten <=-10pt}%, xlabel={$t$}
,x label style={at={(current axis.right of origin)},anchor=north, below=3mm,right=5mm}%,ylabel={$2\langle x, x_{t}\rangle$}
%,minor tick num = 1
]

\addplot[domain=0:1,samples=201]{1.2*(5.5*x^2*(1-x)^2 +0.42*sin(2*3*pi*deg(x))+0.3*sin(2*2*pi*deg(x))+0.42*cos(2*5*pi*deg(x))-.1+.25*x+.42*e^(-100*(x-.9)^2)-1*e^(-150*(x-1)^2))};
\addplot[dashed,domain=0:1,samples=3]{0.55};
\addplot[dashed,domain=0:1,samples=3]{-0.55};

\node[scale=1,anchor=west] at (axis cs: 0,1.1) {$\langle x, x_{t}\rangle$};

\end{axis}
\end{tikzpicture}    

(a) 
\end{subfigure}
\hfill
\begin{subfigure}[t]{0.3\textwidth}
%\vspace{10pt}
\centering
\begin{tikzpicture}
\begin{axis}[width=2.2in, height=2.5in, domain=0:30, xmin=0, xmax=1, xtick={}, xticklabels={$0$,$1$}, extra x ticks={}, extra x tick labels={}, extra x tick style={xticklabel style={xshift=-0.9ex}}, ytick={-.55,0,.55}, yticklabels={$-\lambda$,0,$\lambda$}, ymin=-1, ymax=1.2
,legend pos=outer north east,legend cell align=left,axis lines=center,axis line style={shorten >=-10pt, shorten <=-10pt}%, xlabel={$t$}
,x label style={at={(current axis.right of origin)},anchor=north, below=3mm,right=5mm}%,ylabel={$2\langle x, x_{t}\rangle$}
%,minor tick num = 1
]

\addplot[domain=0:1,samples=201]{max(-.55,min(.55,1.2*(5.5*x^2*(1-x)^2 +0.42*sin(2*3*pi*deg(x))+0.3*sin(2*2*pi*deg(x))+0.42*cos(2*5*pi*deg(x))-.1+.25*x+.42*e^(-100*(x-.9)^2)-1*e^(-150*(x-1)^2))))};
\addplot[dashed,domain=0:1,samples=3]{0.55};
\addplot[dashed,domain=0:1,samples=3]{-0.55};

\node[scale=1,anchor=west] at (axis cs: 0,1.1) {$\Phi_\lambda(\langle x, x_{t}\rangle)$};

\end{axis}
\end{tikzpicture}    

(b) 
\end{subfigure}
\hfill
\begin{subfigure}[t]{0.3\textwidth}
%\vspace{10pt}
\centering
\begin{tikzpicture}
\begin{axis}[width=2.2in, height=2.5in, domain=0:1, xmin=0, xmax=1,  xtick={}, xticklabels={$0$,$1$}, extra x ticks={}, extra x tick labels={}, extra x tick style={xticklabel style={xshift=-0.9ex}}, ytick={-0.4,0.4}, yticklabels={$-\lambda$,$\lambda$}, ymin=-1, ymax=1.2,legend pos=outer north east,legend cell align=left,axis lines=center,axis line style={shorten >=-10pt, shorten <=-10pt}%,xlabel={$j$}
,x label style={at={(current axis.right of origin)},anchor=north, below=3mm,right=5mm},ylabel={$\Phi_{\lambda}(\langle x, x_{j}\rangle)$}]

\addplot+[ycomb,color=black,mark size=1.375pt,mark options={fill=black}] coordinates 
{ (	0.01	,	0.41889072503926	)
(	0.03	,	0.495443371659421	)
(	0.08	,	0.288434401953001	)
(	0.1	,	0.243220691732511	)
(	0.13	,	0.337180914260092	)
(	0.17	,	0.547696951411132	)
(	0.21	,	0.371953561557486	)
(	0.24	,	-0.06559312159928	)
(	0.25	,	-0.22546875	)
(	0.33	,	-0.221397815815752	)
(	0.38	,	0.55	)

(	0.49	,	0.0763265322388639	)
(	0.51	,	-0.000873748757645083	)
(	0.55	,	0.278341697799505	)
(	0.61	,	0.55	)
(	0.64	,	0.55	)
(	0.66	,	0.487689833409819	)
(	0.72	,	0.489949877995612	)
(	0.79	,	0.55	)
(	0.84	,	0.306447695405408	)
(	0.92	,	-0.55	)
(	0.97	,	-0.55	)
(	1	,	-0.275490634707995	)
(	0.27	,	-0.48808339018536	)
(	0.36	,	0.377792961541682	)
(	0.45	,	0.55	)
(	0.75	,	0.55	)
(	0.86	,	-0.111100537351346	)
(	0.91	,	-0.55	)
(	0.3	,	-0.55	)
(	0.34	,	-0.031249919280266	)
(	0.57	,	0.510215147492429	)
};

\addplot[dashed,domain=0:60,samples=3]{0.55};
\addplot[dashed,domain=0:60,samples=3]{-0.55};

\end{axis}
\end{tikzpicture}

(c)

\end{subfigure}

\caption{Effect of clipping and sampling: (a) Unclipped signal (b) Clipped signal (c) Clipped and randomly sampled signal.} \label{fig:saturation-continuous}
\end{figure}

Our first main goal is to study the following array of problems: let $\lambda>0$ be a given constant, and let $m$ independent copies $X_1,\ldots,X_m$ of a random vector $X$ be given.

\begin{enumerate}
    \item[(I)]What is the smallest value of $m=m(\lambda,n)$ so that for all $u,v\in \B_{\R^n}$ the following holds with high-probability:
\begin{equation}\label{declip def}
    \frac{1}{m}\sum_{i=1}^m \left|\Phi_{\lambda}(\langle X_i,u\rangle) - \Phi_{\lambda}(\langle X_i,v\rangle)\right|^2 \ge c\|u-v\|_{2}^2,
\end{equation}
for some constant $c=c(\lambda)$ depending only on $\lambda$?

\vspace{2mm}

    \item[(II)] Assuming that we know certain information on the random vector $X$ (e.g., that it is uniformly distributed on the Euclidean sphere $\sqrt{n}\S^{n-1}$) what is the best possible value of the constant $c(\lambda)$ that we may insert in the lower bound \eqref{declip def} above?

\vspace{2mm}

    \item[(III)] What conditions on the distribution of the random vector $X$ guarantee that for each $\lambda > 0$ there exists $\tilde{c}(\lambda)>0$ such that whenever $m \ge \tilde{c}(\lambda)n$, then \eqref{declip def} holds? What is the sharp dependence of $\tilde{c}(\lambda)$ on $\lambda$ and what are necessary and sufficient conditions on $X$ such that the best constant found in (II) may be achieved? 
\end{enumerate}  

In the above problems, $\Phi_\lambda$ is the ``clipping operator" at the threshold $\lambda$, which acts on vectors component-wise via 
$$\Phi_\lambda(x)= -\lambda\vee x\wedge \lambda.$$
In other words, $\Phi_\lambda$ acts as the identity operator when $x\in [-\lambda,\lambda]$ and records the value $\lambda$ if $x>\lambda$ and $-\lambda$ if $x<-\lambda$.  A frame theoretic approach to saturation recovery is given in \cite{alharbi2024declipping,Freeman-Haider} which characterizes when a given frame $(X_i)_{i=1}^m$ for $\R^n$ allows for the stable recovery of every $u\in B_{\R^n}$ from the values  $(\Phi_\lambda(\langle X_i,u\rangle)_{i=1}^m$, however, this assumes that the coordinate system $(X_i)_{i=1}^m$ is given.  Explicitly constructing frames for saturation recovery is very difficult, but  if $\lambda>0$ is sufficiently large, then a random frame of $m$ on the order of $n$ vectors which are uniformly distributed in $\sqrt{n}S^{n-1}$ will allow for uniformly stable saturation recovery with high probability \cite{laska2011democracy,foucart-li,foucart-needham}, which solves (I) as long as $\lambda$ is greater than some fixed constant $\alpha>0$.  Thus, we are particularly interested in cases where $\lambda$ may be small and a large proportion of the coordinates will be saturated.

The fact that we assumed $X$ to be distributed on the Euclidean sphere $\sqrt{n}\S^{n-1}$ in one of our main questions is not by chance. Indeed, in applications, the declipping problem takes the following general form: given a collection of vectors $\{\mathbf{x}_i\}$, recover our original signal/vector $\mathbf{u}$ from clipped information of the form $\{\Phi_\lambda(\langle \mathbf{x}_i, \mathbf{u}\rangle)\}_i$.  Two of the most common situations involve either a single device taking multiple measurements over a period of time, such as a microphone, or multiple identical devices in different positions which each take a single measurement, such as pixels in a digital camera.  Both of these situations correspond to the measurement vectors having the same norm, and thus it is natural to consider random vectors whose distribution is contained in a sphere. 
From a mathematical point of view, the normalization assumption that $\{\mathbf{x}_i\}$ all have norm $\sqrt{n}$  allows for the likelihood of clipping to be dimension independent and makes it possible to fairly compare the problem in any dimension.

Instead of considering the problem of recovering a signal from clipped measurements, one could instead adjust the measurement devices ahead of time  to avoid the possibility of clipping.  For example, turning down the gain on a microphone would make the threshold for clipping an audio signal relatively higher, but this must be balanced by increasing the volume when  playing back the audio recording.  Although this removes the problem of clipping, it has a significant drawback as any error or noise in the recording will be scaled up as well. 
The method of \emph{unlimited sampling} takes a different approach to adjusting measurement devices to avoid clipping. In  analog-to-digital converters (ADC's), a \emph{self-reset} ADC will detect when a signal is at one end of its range and then switch to the opposite.  Therefore, instead of the graph being clipped, the graph has a jump discontinuity where it switches from its maximum to its minimum or vice-versa.  
In that direction, Bhandari, Krahmer and Raskar \cite{bhandari2020unlimited} (see also \cite{MR4815053}) proposed the usage of a non-linear mapping of the form 
\begin{equation*}
    \mathcal{M}_{\lambda}(f)  = 2\lambda \left(\frac{f}{2\lambda}+\frac{1}{2}-\left\lfloor\frac{f}{2\lambda} +\frac{1}{2}\right\rfloor -\frac{1}{2}\right).
\end{equation*}
The following images show the effect of folding and sampling on a signal.

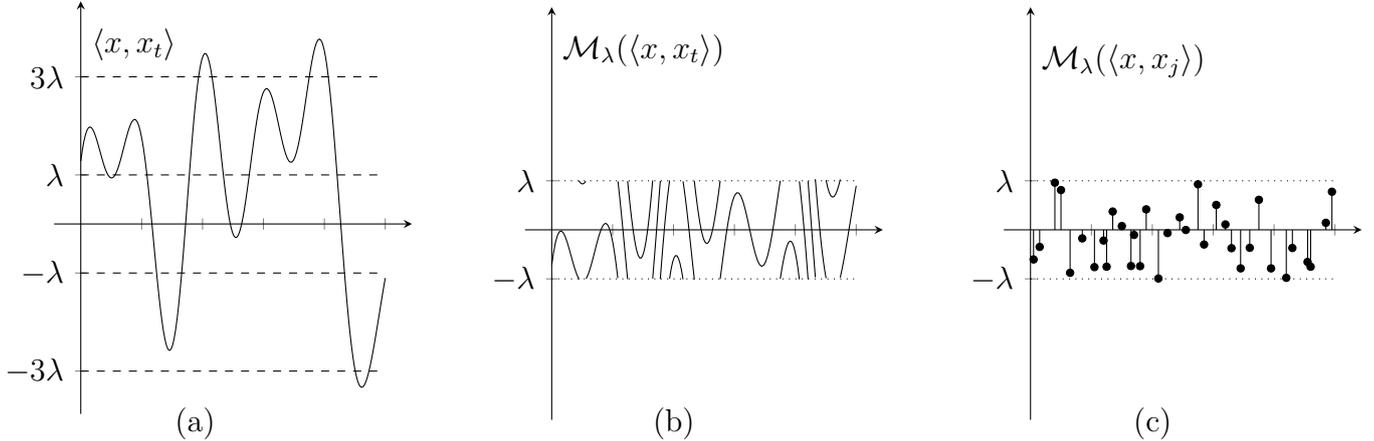
\begin{figure}[htp]\label{folding picture}
\begin{subfigure}[t]{0.3\textwidth}
%\vspace{10pt}
\centering
\begin{tikzpicture}
\begin{axis}[width=2.2in, height=2.5in, domain=0:30, xmin=0, xmax=1, xtick={}, xticklabels={$0$,$1$}, extra x ticks={}, extra x tick labels={}, extra x tick style={xticklabel style={xshift=-0.9ex}}, ytick={-.9,-0.3,0,0.3,.9}, yticklabels={$-3\lambda$,$-\lambda$,0,$\lambda$,$3\lambda$}, ymin=-1, ymax=1.2
,legend pos=outer north east,legend cell align=left,axis lines=center,axis line style={shorten >=-10pt, shorten <=-10pt}%, xlabel={$t$}
,x label style={at={(current axis.right of origin)},anchor=north, below=3mm,right=5mm}%,ylabel={$2\langle x, x_{t}\rangle$}
%,minor tick num = 1
]

    \addplot[domain=0:1,samples=200]{1.2*(5.5*x^2*(1-x)^2 +0.42*sin(2*3*pi*deg(x))+0.3*sin(2*2*pi*deg(x))+0.42*cos(2*5*pi*deg(x))-.1+.25*x+.42*e^(-100*(x-.9)^2)-1*e^(-150*(x-1)^2))};
\addplot[dashed,domain=0:1,samples=3]{0.3};
\addplot[dashed,domain=0:1,samples=3]{-0.3};
\addplot[dashed,domain=0:1,samples=3]{0.9};
\addplot[dashed,domain=0:1,samples=3]{-0.9};

\node[scale=1,anchor=west] at (axis cs: 0,1.1) {$\langle x, x_{t}\rangle$};

\end{axis}
\end{tikzpicture}    

(a) 
\end{subfigure}
\hfill
\begin{subfigure}[t]{0.3\textwidth}
%\vspace{10pt}
\centering
\begin{tikzpicture}
\begin{axis}[width=2.2in, height=2.5in, domain=0:30, xmin=0, xmax=1, xtick={}, xticklabels={$0$,$1$}, extra x ticks={}, extra x tick labels={}, extra x tick style={xticklabel style={xshift=-0.9ex}}, ytick={-0.3,0,0.3}, yticklabels={$-\lambda$,0,$\lambda$}, ymin=-1, ymax=1.2
,legend pos=outer north east,legend cell align=left,axis lines=center,axis line style={shorten >=-10pt, shorten <=-10pt}%, xlabel={$t$}
,x label style={at={(current axis.right of origin)},anchor=north, below=3mm,right=5mm}%,ylabel={$2\langle x, x_{t}\rangle$}
%,minor tick num = 1
]

\addplot[dotted,domain=0:1,samples=3]{0.3};
\addplot[dotted,domain=0:1,samples=3]{-0.3};

\addplot[domain=0:.09,samples=15]{-.6+1.2*(5.5*x^2*(1-x)^2+0.42*sin(2*3*pi*deg(x))+0.3*sin(2*2*pi*deg(x))+0.42*cos(2*5*pi*deg(x))-.1+.25*x+.42*e^(-100*(x-.9)^2)-1*e^(-150*(x-1)^2))};
\addplot[domain=.089:.112,samples=10]{1.2*(5.5*x^2*(1-x)^2+0.42*sin(2*3*pi*deg(x))+0.3*sin(2*2*pi*deg(x))+0.42*cos(2*5*pi*deg(x))-.1+.25*x+.42*e^(-100*(x-.9)^2)-1*e^(-150*(x-1)^2))};
\addplot[domain=.113:.217,samples=20]{-.6+1.2*(5.5*x^2*(1-x)^2+0.42*sin(2*3*pi*deg(x))+0.3*sin(2*2*pi*deg(x))+0.42*cos(2*5*pi*deg(x))-.1+.25*x+.42*e^(-100*(x-.9)^2)-1*e^(-150*(x-1)^2))};
\addplot[domain=.2175:.2492,samples=20]{1.2*(5.5*x^2*(1-x)^2+0.42*sin(2*3*pi*deg(x))+0.3*sin(2*2*pi*deg(x))+0.42*cos(2*5*pi*deg(x))-.1+.25*x+.42*e^(-100*(x-.9)^2)-1*e^(-150*(x-1)^2))};
\addplot[domain=.2492:.3315,samples=20]{.6+1.2*(5.5*x^2*(1-x)^2+0.42*sin(2*3*pi*deg(x))+0.3*sin(2*2*pi*deg(x))+0.42*cos(2*5*pi*deg(x))-.1+.25*x+.42*e^(-100*(x-.9)^2)-1*e^(-150*(x-1)^2))};
\addplot[domain=.3315:.3565,samples=5]{1.2*(5.5*x^2*(1-x)^2+0.42*sin(2*3*pi*deg(x))+0.3*sin(2*2*pi*deg(x))+0.42*cos(2*5*pi*deg(x))-.1+.25*x+.42*e^(-100*(x-.9)^2)-1*e^(-150*(x-1)^2))};
\addplot[domain=.3565:.387,samples=5]{-.6+1.2*(5.5*x^2*(1-x)^2+0.42*sin(2*3*pi*deg(x))+0.3*sin(2*2*pi*deg(x))+0.42*cos(2*5*pi*deg(x))-.1+.25*x+.42*e^(-100*(x-.9)^2)-1*e^(-150*(x-1)^2))};
\addplot[domain=.387:.4313,samples=10]{-1.2+1.2*(5.5*x^2*(1-x)^2+0.42*sin(2*3*pi*deg(x))+0.3*sin(2*2*pi*deg(x))+0.42*cos(2*5*pi*deg(x))-.1+.25*x+.42*e^(-100*(x-.9)^2)-1*e^(-150*(x-1)^2))};
\addplot[domain=.4313:.4685,samples=10]{-.6+1.2*(5.5*x^2*(1-x)^2+0.42*sin(2*3*pi*deg(x))+0.3*sin(2*2*pi*deg(x))+0.42*cos(2*5*pi*deg(x))-.1+.25*x+.42*e^(-100*(x-.9)^2)-1*e^(-150*(x-1)^2))};
\addplot[domain=.4683:.5533,samples=20]{1.2*(5.5*x^2*(1-x)^2+0.42*sin(2*3*pi*deg(x))+0.3*sin(2*2*pi*deg(x))+0.42*cos(2*5*pi*deg(x))-.1+.25*x+.42*e^(-100*(x-.9)^2)-1*e^(-150*(x-1)^2))};
\addplot[domain=.5533:.7505,samples=30]{-.6+1.2*(5.5*x^2*(1-x)^2+0.42*sin(2*3*pi*deg(x))+0.3*sin(2*2*pi*deg(x))+0.42*cos(2*5*pi*deg(x))-.1+.25*x+.42*e^(-100*(x-.9)^2)-1*e^(-150*(x-1)^2))};
\addplot[domain=.7505:.8133,samples=15]{-1.2+1.2*(5.5*x^2*(1-x)^2+0.42*sin(2*3*pi*deg(x))+0.3*sin(2*2*pi*deg(x))+0.42*cos(2*5*pi*deg(x))-.1+.25*x+.42*e^(-100*(x-.9)^2)-1*e^(-150*(x-1)^2))};
\addplot[domain=.8133:.8419,samples=10]{-.6+1.2*(5.5*x^2*(1-x)^2+0.42*sin(2*3*pi*deg(x))+0.3*sin(2*2*pi*deg(x))+0.42*cos(2*5*pi*deg(x))-.1+.25*x+.42*e^(-100*(x-.9)^2)-1*e^(-150*(x-1)^2))};
\addplot[domain=.8419:.8658,samples=10]{+1.2*(5.5*x^2*(1-x)^2+0.42*sin(2*3*pi*deg(x))+0.3*sin(2*2*pi*deg(x))+0.42*cos(2*5*pi*deg(x))-.1+.25*x+.42*e^(-100*(x-.9)^2)-1*e^(-150*(x-1)^2))};
\addplot[domain=.8658:.901,samples=10]{.6+1.2*(5.5*x^2*(1-x)^2+0.42*sin(2*3*pi*deg(x))+0.3*sin(2*2*pi*deg(x))+0.42*cos(2*5*pi*deg(x))-.1+.25*x+.42*e^(-100*(x-.9)^2)-1*e^(-150*(x-1)^2))};
\addplot[domain=.901:.9476,samples=10]{1.2+1.2*(5.5*x^2*(1-x)^2+0.42*sin(2*3*pi*deg(x))+0.3*sin(2*2*pi*deg(x))+0.42*cos(2*5*pi*deg(x))-.1+.25*x+.42*e^(-100*(x-.9)^2)-1*e^(-150*(x-1)^2))};
\addplot[domain=.9476:1,samples=10]{.6+1.2*(5.5*x^2*(1-x)^2+0.42*sin(2*3*pi*deg(x))+0.3*sin(2*2*pi*deg(x))+0.42*cos(2*5*pi*deg(x))-.1+.25*x+.42*e^(-100*(x-.9)^2)-1*e^(-150*(x-1)^2))};
\node[scale=1,anchor=west] at (axis cs: 0,1.1) {$\mathcal{M}_\lambda(\langle x, x_{t}\rangle)$};

\end{axis}
\end{tikzpicture}    

(b) 
\end{subfigure}
\hfill
\begin{subfigure}[t]{0.3\textwidth}
%\vspace{10pt}
\centering
\begin{tikzpicture}
\begin{axis}[width=2.2in, height=2.5in, domain=0:30, xmin=0, xmax=1, xtick={}, xticklabels={$0$,$1$}, extra x ticks={}, extra x tick labels={}, extra x tick style={xticklabel style={xshift=-0.9ex}}, ytick={-0.3,0,0.3}, yticklabels={$-\lambda$,0,$\lambda$}, ymin=-1, ymax=1.2
,legend pos=outer north east,legend cell align=left,axis lines=center,axis line style={shorten >=-10pt, shorten <=-10pt}%, xlabel={$t$}
,x label style={at={(current axis.right of origin)},anchor=north, below=3mm,right=5mm},ylabel={$\mathcal{M}_\lambda(\langle x, x_{j}\rangle)$}
%,minor tick num = 1
]

\addplot[dotted,domain=0:1,samples=3]{0.3};
\addplot[dotted,domain=0:1,samples=3]{-0.3};

\addplot+[ycomb,color=black,mark size=1.375pt,mark options={fill=black}] coordinates 
{ 
(	0.01	,	-0.18110927496074	)
(	0.03	,	-0.104556628340579	)
(	0.08	,	0.288434401953001	)
(	0.1	,	0.243220691732511	)
(	0.13	,	-0.262819085739908	)
(	0.17	,	-0.0523030485888684	)
(	0.21	,	-0.228046438442514	)
(	0.24	,	-0.06559312159928	)
(	0.25	,	-0.22546875	)
(	0.33	,	-0.221397815815752	)
(	0.38	,	0.125343257075586	)
(	0.42	,	-0.297689678031843	)
(	0.49	,	0.0763265322388639	)
(	0.51	,	-0.000873748757645083	)
(	0.55	,	0.278341697799505	)
(	0.61	,	0.152210916860581	)
(	0.64	,	0.0329790983779996	)
(	0.66	,	-0.112310166590181	)
(	0.72	,	-0.110050122004388	)
(	0.79	,	-0.235862253305047	)
(	0.84	,	-0.293552304594592	)
(	0.92	,	-0.225866126759093	)
(	0.97	,	0.0430613061356769	)
(	0.99	,	0.233018928859242	)
(	0.27	,	0.11191660981464	)
(	0.36	,	-0.222207038458318	)
(	0.45	,	-0.0197571873760678	)
(	0.75	,	0.183714106080737	)
(	0.86	,	-0.111100537351346	)
(	0.91	,	-0.196698885255246	)
(	0.3	,	0.0228546183494194	)
(	0.34	,	-0.031249919280266	)
(	0.69	,	-0.235676490801371	)
(	0.57	,	-0.0897848525075712	)

};

\end{axis}
\end{tikzpicture}

(c)

\end{subfigure}

\caption{Effect of folding and sampling: (a) Unfolded signal (b) Folded signal (c) Folded and randomly sampled signal.} \label{fig:saturation-continuous-fold}
\end{figure}

 We have used a smaller value for $\lambda$ in Figure 2 than in Figure 1 in order to illustrate what happens when parts of the signal are folded multiple times.  One can see how to obtain the original signal $(\langle x,x_t\rangle)_{t\in[0,1]}$ from the folded signal $(\mathcal{M}_\lambda(\langle x,x_t\rangle))_{t\in [0,1]}$ by sequentially unfolding each connected piece.  This unfolding procedure requires knowing each point where the signal crosses the threshold, but unfortunately, this information is lost when randomly sampling.  When looking at the graph of a clipped and randomly sampled signal as in Figure 1, one can tell where the signal was clipped and one knows the exact value for the unclipped portions.  On the other hand, in Figure 2 the graph of the folded and randomly sampled signal gives no information on which values were folded.    There are algorithms for stably recovering a signal from random saturated measurements \cite{alharbi2024declipping,foucart-li,foucart-needham,laska2011democracy}, however we are unaware of any algorithm for stably recovering a signal from random folded measurements.  In contrast to this, there exist algorithms outside of the random setting for recovering signals from folded measurements.  In particular, band-limited functions may be stably recovered from folded measurements when sampled above the Nyquist-rate \cite{bhandari2020unlimited,RO}.

Notice that both the declipping and unlimited sampling problems involve a parameter $\lambda$. This parameter ties the two problems together, as we wish to understand the precise dependence on $\lambda$ and answer the same questions that we asked in the case of the clipping operator only now  applying the non-linear mapping $\mathcal{M}_\lambda$ on the samples instead. More precisely, we wish to know how many samples are necessary in order to stably recover the signal, to identify the optimal stability constant, and to find general conditions on the distribution of the samples for such properties to hold. 
Since these three questions are quite different in nature, we shall divide the discussion of our main results into several parts, where at each step we highlight the similarities and differences between unlimited sampling and declipping.
\subsection{Declipping and unlimited sampling for general distributions} The first part of the manuscript deals with the question of declipping and unlimited sampling for \emph{general} distributions. That is, what are the mildest (distributional) conditions that we may impose on a random vector $X$ so that it satisfies stable recovery after clipping or folding with $m \ge c(\lambda)n$ measurements? 
Our first main theorem deals with the case of declipping and characterizes this in an \emph{almost sharp} way.
\begin{theorem}\label{General distributions} Let $\mathcal{V}$ be a set of random vectors in Euclidean space such that the following two conditions hold.  First, for all $\varepsilon>0$ there is a $\delta > 0$ so that for all $X\in \mathcal{V}$ if $X$ is a random vector in $\R^n$ then
\begin{equation}\label{eq:small-ball-og}
\sup_{v \in \S ^{n-1}} \mathbb{P}(|\langle X,v\rangle| \le \delta ) \le \varepsilon,
\end{equation}
Second,  for all $\delta > 0$ there is an $\tilde{\varepsilon}>0$ such that for all $X\in \mathcal{V}$ if $X$ is a random vector in $\R^n$ then
\begin{equation}\label{eq:reverse-small-ball}
\inf_{u \in \S ^{n-1}} \mathbb{P}(|\langle X,u \rangle| \le \delta) \ge \tilde{\varepsilon}.
\end{equation}
Then there are $C(\lambda),c(\lambda)>0$  such that, with probability at least $1 - e^{- c(\lambda)m},$ we have that, whenever $X\in \mathcal{V}$ is a random vector in $\R^n$ and $m \ge C(\lambda) \cdot n$ and $u,v \in \B_{\R^n},$
\[\frac{1}{m}\sum_{i=1}^m \left|\Phi_{\lambda}(\langle X_i,u\rangle) - \Phi_{\lambda}(\langle X_i,v\rangle)\right|^2 \ge c(\lambda)\|u-v\|_{2}^2,\]
where $X_i$ are i.i.d.~copies of the vector $X$.
\end{theorem}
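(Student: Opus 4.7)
My plan is to apply the small-ball method. Set $f_{u,v}(x) := \Phi_\lambda(\langle x, u\rangle) - \Phi_\lambda(\langle x, v\rangle)$. The proof proceeds in three steps: (i) a pointwise anti-concentration estimate for $f_{u,v}(X)$ in terms of $\|u - v\|$; (ii) a uniform empirical version of (i) via a VC-type concentration inequality; and (iii) a trivial passage from a lower bound on the number of ``large'' $|f_{u,v}(X_i)|$ to the desired quadratic lower bound.

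For step (i), I would combine the two hypotheses as follows. Given $u, v \in \B_{\R^n}$ with $w := u - v \ne 0$, apply \eqref{eq:reverse-small-ball} to $u/\|u\|$ with $\delta = \lambda/2$ to get $\P(|\langle X, u\rangle| \le \lambda/2) \ge \tilde\varepsilon(\lambda)$, and apply \eqref{eq:small-ball-og} to $w/\|w\|$ with a threshold $\delta_0 = \delta_0(\lambda)$ chosen small enough that the corresponding $\varepsilon \le \tilde\varepsilon(\lambda)/2$, yielding $\P(|\langle X, w\rangle| \ge \delta_0\|w\|) \ge 1 - \tilde\varepsilon(\lambda)/2$. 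On the intersection of these two events (probability at least $p := \tilde\varepsilon(\lambda)/2$), the elementary inequality
\begin{equation*}
|\Phi_\lambda(a) - \Phi_\lambda(b)| \ge \min(|a - b|,\, \lambda/2) \qquad \text{whenever } |a| \le \lambda/2,
\end{equation*}
(verified by a short case analysis on whether $|b| \le \lambda$) gives $|f_{u,v}(X)| \ge \min(\delta_0 \|w\|, \lambda/2)$. Since $\|w\| \le 2$, this is at least $\tau(\lambda)\|w\|$ with $\tau(\lambda) := \min(\delta_0(\lambda), \lambda/4)$.

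For step (ii), set $g_{u,v}(x) := \mathbf{1}\{|f_{u,v}(x)| \ge \tau(\lambda)\|u-v\|\}$. In each of the nine chambers cut out by the four hyperplanes $\langle x, u\rangle = \pm\lambda$, $\langle x, v\rangle = \pm \lambda$ the function $f_{u,v}$ is affine in $x$, so $\{g_{u,v} = 1\}$ is a Boolean combination of a bounded number of half-spaces in $\R^n$ whose normals and offsets are affine in $u$, $v$ and $\|u - v\|$. Hence the class $\{g_{u,v} : u,v \in \B_{\R^n}\}$ has VC dimension $O(n)$ by a standard Sauer--Shelah argument, and a Vapnik--Chervonenkis / Talagrand uniform deviation bound yields, with probability at least $1 - e^{-c(\lambda)m}$,
\begin{equation*}
\sup_{u,v \in \B_{\R^n}} \left| \frac{1}{m}\sum_{i=1}^m g_{u,v}(X_i) - \E\, g_{u,v}(X)\right| \le p/2,
\end{equation*}
provided $m \ge C(\lambda)n$. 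Together with (i) this forces $m^{-1}\sum_i g_{u,v}(X_i) \ge p/2$ uniformly, and since pointwise $|f_{u,v}(X_i)|^2 \ge \tau^2\|u-v\|^2\, g_{u,v}(X_i)$, step (iii) is immediate with constant $c(\lambda) := p\,\tau(\lambda)^2/2$.

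The main obstacle is the VC-dimension estimate in step (ii): the function $f_{u,v}$ is only piecewise linear in $x$, and the threshold $\tau\|u-v\|$ also depends on the index $(u,v)$. One has to write $\{g_{u,v} = 1\}$ as an explicit Boolean combination of polynomially many half-spaces with coefficients affine in $(u, v, \|u-v\|) \in \R^{2n+1}$ and verify that the standard VC calculus applies uniformly over this parametrization (and that the resulting Talagrand / bounded-difference inequality produces the stated exponential tail with $m$ on the order of $n$). Once this is carried out, steps (i) and (iii) reduce to elementary manipulations of $\Phi_\lambda$.
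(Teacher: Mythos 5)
Your proposal is correct, and its overall skeleton (pointwise small-ball estimate, then a uniform VC-type law of large numbers plus a bounded-difference tail, then the trivial passage to the quadratic bound) is the same as the paper's. The genuine difference is in how the pointwise anti-concentration is organized. The paper's proof of this theorem splits into cases according to whether $\mathbb{P}(|\langle X,u\rangle|\le\lambda,\ |\langle X,v\rangle|\le\lambda)$ is uniformly bounded below, and in the complementary case works on the event where one measurement is clipped and the other lies in $[-\lambda/2,\lambda/2]$; this forces a further sub-case distinction depending on the pair $(u,v)$. You instead lower-bound $|\Phi_\lambda(a)-\Phi_\lambda(b)|$ by $\min(|a-b|,\lambda/2)$ whenever $|a|\le\lambda/2$, and work on the single event $\{|\langle X,u\rangle|\le\lambda/2\}\cap\{|\langle X,u-v\rangle|\ge\delta_0\|u-v\|\}$, whose probability is at least $\tilde\varepsilon(\lambda)/2$ by combining \eqref{eq:reverse-small-ball} (applied in direction $u/\|u\|$, using $\|u\|\le 1$) with \eqref{eq:small-ball-og}. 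This is in fact the paper's own Lemma \ref{L:Clam} together with the event \eqref{eq:event-E(u,v)}, which the paper only deploys later for the sharp spherical and sparse results; using it here eliminates the case analysis entirely and is, if anything, cleaner. One small remark on your step (ii): you do not need the VC dimension of the full class $\{g_{u,v}\}$ (whose piecewise-affine analysis you correctly flag as the delicate point). Since the elementary inequality shows that the indicator $h_{u,v}$ of the two-slab event above satisfies $h_{u,v}\le g_{u,v}$ pointwise and $\E h_{u,v}(X)\ge p$, it suffices to run the uniform deviation bound for $\{h_{u,v}\}$, which is an intersection of indicators of four half-spaces and has VC dimension $O(n)$ directly from Proposition \ref{prop:intersection}; this sidesteps the parametrized Boolean-combination bookkeeping while yielding the same conclusion.
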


In order to prove Theorem \ref{General distributions}, we employ VC-dimension-based arguments. Our proof strategy is inspired by the small ball method \cite{koltchinskii2015bounding,mendelson2015learning}, but, surprisingly, it requires the new `reverse-type' small ball condition \eqref{eq:reverse-small-ball}, which, as we shall see below, is in a sense also \emph{necessary}. 

The first step in the proof of Theorem \ref{General distributions} is to estimate the empirical average of clipped measurements from below by an associated Bernoulli selector process -- that is, one that has a sum of indicators of a certain event. This follows by bounding the empirical average from below whenever we can guarantee that the clipping operator does not clip any of the terms. With that in mind, we are left with the task of obtaining a uniform quantified version of the law of large numbers. This can be potentially achieved in different ways, but one which we highlight as especially interesting is through the usage of VC dimension bounds. 

In succinct form, the VC dimension of a set measures the complexity of it in a combinatorial sense. In addition, the VC dimension is convenient to handle unions and intersections of sets, as we shall precisely clarify below. Since the sets of interest are intersections of up to three hyperplanes, accurate bounds on the VC dimensions of such sets are relatively simple to obtain; we cover the most relevant preliminary results in that regard in Section~\ref{sec:prelim}.

%This, together with a quantitative version of the (uniform) law of large numbers for Boolean functions, implies the result modulo some geometric considerations in order to bound the expected value of the actual process from below uniformly in $u,v$.

We note, furthermore, that the condition \eqref{eq:reverse-small-ball} is, as a matter of fact, \emph{necessary} in order to obtain a result such as Theorem \ref{General distributions} in the continuous case -- see Remark \ref{rmk:rsb-needed}. This marks a first fundamental difference between the problem of declipping and the by now well-studied problem of phase retrieval: reverse small ball conditions seem to play no role in the latter, while they are seen to be crucial for the former. 

In Theorem~\ref{General distributions} we have identified situations where we are able to perform $\lambda$-declipping for all $\lambda>0$. On the other hand, a natural question is to identify which distributions satisfy $\lambda$-stable declipping for \emph{some} $\lambda > 0$. As a matter of fact, we are able to characterize this result \emph{exactly} in the continuous setting: 

\begin{theorem}\label{thm:general-some-lambda}
Let $\mu$ be a probability measure and $1< p<\infty$. The following properties are equivalent for a subspace $E$ of $L^p(\mu)$.
\begin{enumerate}
\item[\normalfont(I)] $B_E$ does stable $\lambda$-declipping for some $\lambda>0$, i.e., there exists $\lambda, C>0$ so that for all $f,g\in B_E$ we have
\begin{equation*}
\|f-g\|_{L^p}\leq C \| \Phi_\lambda(f) -\Phi_\lambda(g)\|_{L^p}.
\end{equation*}
\item[\normalfont(II)] The $L^1$ and $L^p$ norms are equivalent on $E$.
\end{enumerate}
\end{theorem}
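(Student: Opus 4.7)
The plan is to prove the two implications separately. The direction (I) $\Rightarrow$ (II) will follow from an elementary pointwise $L^\infty$--$L^1$ interpolation for the clipping operator, while (II) $\Rightarrow$ (I) is the substantive content: I will upgrade the $L^1$--$L^p$ norm equivalence on $E$ to uniform $L^p$-integrability of $B_E$ via a Kadec--Pelczynski-type dichotomy and then choose $\lambda$ large relative to the modulus of integrability so that clipping is non-trivial only on a set of small measure.

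For (I) $\Rightarrow$ (II), I first specialize the declipping estimate to $g = 0 \in B_E$, obtaining $\|f\|_{L^p} \le C\|\Phi_\lambda(f)\|_{L^p}$ for every $f \in B_E$. Since $|\Phi_\lambda(f)| \le \min(\lambda,|f|)$ pointwise,
\begin{equation*}
\|\Phi_\lambda(f)\|_{L^p}^p \le \lambda^{p-1}\|\Phi_\lambda(f)\|_{L^1} \le \lambda^{p-1}\|f\|_{L^1},
\end{equation*}
so that $\|f\|_{L^p}^p \le C^p\lambda^{p-1}\|f\|_{L^1}$ for $f \in B_E$. Applying this to $f/\|f\|_{L^p}\in B_E$ and clearing denominators extends the bound, by rescaling, to $\|f\|_{L^p} \le C^p\lambda^{p-1}\|f\|_{L^1}$ on all of $E$, which is (II).

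For (II) $\Rightarrow$ (I), assume $\|h\|_{L^p} \le K\|h\|_{L^1}$ for every $h \in E$. The first and main step is to prove uniform $L^p$-integrability on $E$: for every $\eta > 0$ there is $\delta > 0$ such that $\mu(B) < \delta$ implies $\|h\mathbf{1}_B\|_{L^p} \le \eta\|h\|_{L^p}$ for every $h \in E$. Here I would invoke the Kadec--Pelczynski dichotomy: any closed subspace of $L^p(\mu)$ which fails to be strongly embedded must contain a normalized sequence $(\phi_n)$ of almost disjointly supported functions that is equivalent to the unit vector basis of $\ell^p$. In a probability space the supports $S_n$ have $\sum_n \mu(S_n) \le 1$, hence $\mu(S_n) \to 0$; H\"older then forces $\|\phi_n\|_{L^1} \le \mu(S_n)^{(p-1)/p}\|\phi_n\|_{L^p} \to 0$, directly contradicting the assumed norm equivalence. (One may harmlessly assume $E$ is closed, since both (I) and (II) pass to the closure.)

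Once uniform integrability is available, the stability estimate falls out. Fix $\eta \in (0,1)$, let $\delta$ be as above, and choose $\lambda$ large enough that $2\lambda^{-p} < \delta$. For any $f, g \in B_E$, set $A = \{|f| \le \lambda\} \cap \{|g| \le \lambda\}$; Chebyshev gives $\mu(A^c) \le 2\lambda^{-p} < \delta$. Since $\Phi_\lambda$ acts as the identity on $A$, applying the uniform integrability bound to $h = f - g \in E$ yields
\begin{equation*}
\|\Phi_\lambda(f) - \Phi_\lambda(g)\|_{L^p}^p \ge \|(f-g)\mathbf{1}_A\|_{L^p}^p = \|f-g\|_{L^p}^p - \|(f-g)\mathbf{1}_{A^c}\|_{L^p}^p \ge (1-\eta^p)\|f-g\|_{L^p}^p,
\end{equation*}
which is (I) with $C = (1-\eta^p)^{-1/p}$. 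The hard part will be establishing the uniform integrability; if one prefers to avoid invoking Kadec--Pelczynski as a black box, a self-contained route is to disjointify a candidate non-equi-integrable sequence and test the norm equivalence against a linear combination supported on the resulting disjoint sets, whose $L^p$ norm grows like a power of the number of terms while the $L^1$ norm remains bounded by a convergent geometric series.
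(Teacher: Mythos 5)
Your proof is correct in outline, but it is a genuinely different argument from the paper's, and amusingly the two proofs swap which direction gets the ``soft'' Banach-space-theoretic input and which gets the elementary one. For (I)~$\Rightarrow$~(II) you take $g=0$ and use the pointwise bound $|\Phi_\lambda(f)|^p\le \lambda^{p-1}|f|$ plus homogeneity; this is cleaner and more elementary than the paper, which instead invokes the dichotomy of \cite[Theorem 2.1]{Stablephaseretrieval} (no almost disjoint normalized sequence versus norm equivalence) and derives a contradiction from a normalized sequence tending to zero in measure. For (II)~$\Rightarrow$~(I) the roles reverse: the paper argues by contrapositive with $\lambda=n$, showing via Chebyshev and H\"older that a failure of declipping at every scale forces $f_n-g_n$ to concentrate its $L^p$ mass on sets of measure $O(n^{-p})$ and hence to have vanishing $L^1/L^p$ ratio -- a completely self-contained computation -- whereas you route through the classical equivalence between $L^1$--$L^p$ norm equivalence and $p$-equi-integrability of $B_E$ and then run a clean quantitative estimate with $\lambda^{-p}\lesssim\delta$. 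Your version has the advantage of producing an explicit constant $C=(1-\eta^p)^{-1/p}$ and an explicit admissible $\lambda$; the paper's has the advantage of needing no structural input beyond what it already cites for the other direction.

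The one step you should not leave as stated is the ``self-contained route'' to equi-integrability. The implication you actually need -- norm equivalence on $E$ implies $B_E$ is $p$-equi-integrable -- is true and classical (it is part of the standard circle of equivalences characterizing strongly embedded subspaces, valid for all $1\le p<\infty$, and is essentially the same source as the dichotomy the paper cites), so citing it is fine; but note that the Kadec--Pe{\l}czy\'nski dichotomy in its most familiar form is stated for $p\ge 2$, so cite the equi-integrability characterization rather than the $\ell^2$-versus-$\ell^p$ dichotomy. Your fallback sketch, however, does not work as written: if you disjointify a non-equi-integrable normalized sequence and sum $N$ terms, the disjoint bumps give $\|\sum h_n\|_{L^p}\gtrsim N^{1/p}$ while $\|\sum h_n\|_{L^1}$ can be as large as $N$, and weighting by a summable sequence $(a_n)$ only makes $(\sum a_n^p)^{1/p}\le\sum a_n$ -- in both cases the inequality points the wrong way to contradict $\|\cdot\|_{L^p}\le K\|\cdot\|_{L^1}$. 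The genuine contradiction requires first extracting an \emph{almost disjoint normalized sequence in $E$ itself} (whose $L^1$ norms then tend to zero by H\"older on the shrinking supports), which is exactly the content of the cited dichotomy, not something you recover by testing against a single long linear combination.
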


By considering the space of variables of the form $\langle X,u\rangle$, where $X$ is a fixed random vector and $u \in \R^n$, we have the following immediate consequence of the previous result: 

\begin{corollary}\label{declipping corollary some} Let $X$ be a random vector in $\R^n$. Then $X$ does $\lambda$-stable declipping for some $\lambda > 0$, in the sense that there exist $\lambda, C >0$ so that for all $u,v \in \S^{n-1}$, we have 
\[
\|u-v\|_2^2 \le C \cdot \mathbb{E} \left| \Phi_{\lambda}(\langle X,u\rangle) - \Phi_{\lambda}(\langle X,v\rangle) \right|^2,
\]
if, and only if, the \emph{weak small ball property} 
\[
\sup_{v \in \S^{n-1}} \mathbb{P}\left( |\langle X,v\rangle| \le \delta \right) \le \varepsilon
\]
holds for some $\varepsilon, \delta > 0$. 
\end{corollary}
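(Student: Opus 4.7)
The plan is to deduce the implication (weak small ball) $\Rightarrow$ (declipping) from Theorem~\ref{thm:general-some-lambda} applied to the finite-dimensional subspace $E := \{\langle X, u\rangle : u \in \R^n\}$ of $L^p(\mu)$, where $\mu$ is the law of $X$, and to establish the converse implication by a short computation exploiting oddness of $\Phi_\lambda$. A key preliminary observation is that whenever the weak small ball property holds, the linear map $u \mapsto \langle X, u\rangle$ is injective (otherwise $\langle X, u\rangle = 0$ almost surely for some $u \neq 0$, forcing $\mathbb{P}(|\langle X, u/\|u\|_2\rangle| \leq \delta) = 1$ for every $\delta > 0$), so $E$ is genuinely $n$-dimensional. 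Consequently, once $E$ sits inside $L^p$, the hypothesis ``$L^1$ and $L^p$ are equivalent on $E$'' in Theorem~\ref{thm:general-some-lambda} is automatic from equivalence of norms on finite-dimensional spaces.

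For the implication $(\Rightarrow)$, the plan is to test the declipping bound at $v = -u$. Oddness of $\Phi_\lambda$ yields $\Phi_\lambda(\langle X, -u\rangle) = -\Phi_\lambda(\langle X, u\rangle)$, so $4 = \|u-v\|_2^2 \leq 4C\,\mathbb{E}|\Phi_\lambda(\langle X, u\rangle)|^2$, and hence $\mathbb{E}|\Phi_\lambda(\langle X, u\rangle)|^2 \geq 1/C$. Splitting this expectation over $\{|\langle X, u\rangle| \leq \delta\}$ and its complement, and using $|\Phi_\lambda(t)| \leq \min(|t|, \lambda)$, one obtains $\lambda^2 - (\lambda^2 - \delta^2)\,\mathbb{P}(|\langle X, u\rangle| \leq \delta) \geq 1/C$; rearranging and choosing $\delta$ small enough that $\delta^2 < 1/C$ produces the weak small ball conclusion with some $\varepsilon < 1$ uniform in $u \in \S^{n-1}$.

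For the implication $(\Leftarrow)$, assume the weak small ball property with parameters $(\delta, \varepsilon)$ and take $p = 2$. Writing $f_u := \langle X, u\rangle$, Theorem~\ref{thm:general-some-lambda} supplies $\lambda_0, C_0 > 0$ so that $\|f-g\|_{L^2} \leq C_0 \|\Phi_{\lambda_0}(f) - \Phi_{\lambda_0}(g)\|_{L^2}$ for all $f,g \in B_E$. For $u, v \in \S^{n-1}$, set $M := \sup_{w \in \S^{n-1}} \|f_w\|_{L^2}$ (finite assuming integrability, which I address below); then $f_u/M, f_v/M \in B_E$, and applying the theorem together with the scaling identity $\Phi_{\lambda_0}(y/M) = \Phi_{\lambda_0 M}(y)/M$ yields $\|f_u - f_v\|_{L^2} \leq C_0 \|\Phi_\lambda(f_u) - \Phi_\lambda(f_v)\|_{L^2}$ at $\lambda := \lambda_0 M$. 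Finally, the left side is bounded below by a multiple of $\|u-v\|_2$ via the small ball directly, since $\|f_u - f_v\|_{L^2}^2 \geq \delta^2\,\mathbb{P}(|\langle X, u-v\rangle| > \delta\|u-v\|_2) \geq \delta^2(1-\varepsilon)\|u-v\|_2^2$; combining these bounds produces the declipping inequality of the corollary.

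The main obstacle is integrability: the weak small ball alone does not imply $\langle X, u\rangle \in L^p$ for any $p > 1$ (consider, for instance, an $X$ with i.i.d.\ Cauchy marginals, which satisfies the weak small ball yet has no finite first moment in any direction). To close this gap, I would work with the bounded truncation $X' := X\,\mathbf{1}_{\{\|X\|_2 \leq R\}}$ for $R$ large enough that $\mathbb{P}(\|X\|_2 > R) + \varepsilon < 1$: then $X'$ still satisfies the weak small ball (with a slightly larger constant) and lies in every $L^p$, so the backward argument above applies verbatim to $X'$. The pointwise inequality $|\Phi_\lambda(\langle X',u\rangle) - \Phi_\lambda(\langle X',v\rangle)| \leq |\Phi_\lambda(\langle X,u\rangle) - \Phi_\lambda(\langle X,v\rangle)|$ (the two sides coincide on $\{\|X\|_2 \leq R\}$, and the left side vanishes off this event) then transfers the conclusion from $X'$ back to the original $X$ without loss.
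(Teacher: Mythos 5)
Your proof is correct, but it is worth comparing it with what the paper actually does: the paper treats Corollary~\ref{declipping corollary some} as an \emph{immediate} consequence of Theorem~\ref{thm:general-some-lambda}, obtained by taking $E=\{\langle X,u\rangle: u\in\R^n\}\subseteq L^2(\mu)$ and identifying (I) with the declipping inequality on $\S^{n-1}$ and (II) with the weak small ball property. Your backward direction follows this same route but fills in three details the paper glosses over, all of which are genuine: (a) integrability --- as your Cauchy example shows, weak small ball does not place $E$ inside $L^2$, and your truncation $X'=X\,\mathbf{1}_{\{\|X\|_2\le R\}}$ together with the pointwise domination $|\Phi_\lambda(\langle X',u\rangle)-\Phi_\lambda(\langle X',v\rangle)|\le|\Phi_\lambda(\langle X,u\rangle)-\Phi_\lambda(\langle X,v\rangle)|$ is a clean fix; (b) the rescaling $\Phi_{\lambda_0}(y/M)=\Phi_{\lambda_0 M}(y)/M$ needed to pass from $B_E$ to $\S^{n-1}$; and (c) the passage from the $L^2$ distance $\|f_u-f_v\|_{L^2}$ back to $\|u-v\|_2$, which you correctly derive from the small ball bound rather than assuming isotropy. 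Your forward direction is genuinely different from the paper's intended one (which would run through the dichotomy of \cite{Stablephaseretrieval} underlying Theorem~\ref{thm:general-some-lambda}): testing at $v=-u$, using oddness of $\Phi_\lambda$, and the splitting $\E|\Phi_\lambda(\langle X,u\rangle)|^2\le \lambda^2-(\lambda^2-\delta^2)\,\mathbb{P}(|\langle X,u\rangle|\le\delta)$ gives a two-line, self-contained argument that avoids the functional-analytic machinery entirely. The only cosmetic caveat is that, as in the paper, the weak small ball property must be read with $\varepsilon<1$ for the statement to be non-vacuous; your argument produces exactly such an $\varepsilon$ and uses $\varepsilon<1$ in the converse, so this is consistent with the intended reading.
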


The above result motivates the use of the \emph{strong small ball property} \eqref{eq:small-ball-og} in Theorem~\ref{General distributions}, as it is a natural analogue of the weak small ball property which allows one to upgrade the conclusion of Corollary~\ref{declipping corollary some}  from \emph{some}  $\lambda>0$ to \emph{all} $\lambda>0$. On the other hand, characterizing the exact conditions on $X$ which ensure that stable recovery holds for a \emph{fixed} $\lambda>0$ seems to be an interesting open problem.

As a consequence of the proof of Theorem \ref{General distributions}, we are able to prove the following result in the \emph{sparse setting}. Here, we use $T_s$ to denote the set of $s$-sparse vectors in the ball of $\R^n$.
\begin{theorem}\label{thm:sparse-gen-dist} Let $\mathcal{V}$ be a class of random vectors in Euclidean space satisfying \eqref{eq:small-ball-og} and \eqref{eq:reverse-small-ball}. Then there are $C(\lambda),c(\lambda)>0$ such that, with probability at least $1 - e^{-c(\lambda) n},$ we have that, whenever $m \ge C(\lambda) \cdot s \cdot \log \left( \frac{e \cdot n}{s}\right)$ and $u,v \in T_s,$
\[\frac{1}{m}\sum_{i=1}^m \left|\Phi_{\lambda}(\langle X_i,u\rangle) - \Phi_{\lambda}(\langle X_i,v\rangle)\right|^2 \ge c(\lambda)\|u-v\|_{2}^2,\]
where $X_i$ are i.i.d.~copies of the vector $X$.
\end{theorem}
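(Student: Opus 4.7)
The plan is to combine Theorem~\ref{General distributions}, applied separately in each coordinate subspace of dimension $2s$, with a union bound over the collection of such subspaces. The reduction to a fixed support is clean: for any $u, v \in T_s$, the combined support $\operatorname{supp}(u) \cup \operatorname{supp}(v)$ is contained in some index set $S \subset \{1, \dots, n\}$ with $|S| \leq 2s$, and the inner products $\langle X_i, u \rangle$ and $\langle X_i, v \rangle$ depend only on the restriction $X_i|_S$ of $X_i$ to the coordinates in $S$. Crucially, both hypotheses \eqref{eq:small-ball-og} and \eqref{eq:reverse-small-ball} pass from $X$ to $X|_S$, viewed as a random vector in $\R^{|S|}$, with the \emph{same} constants: every unit vector $w \in \S^{|S|-1}$ extends by zero-padding to a unit vector in $\S^{n-1}$, so the supremum and infimum in \eqref{eq:small-ball-og} and \eqref{eq:reverse-small-ball} can only improve upon restriction to a coordinate subspace.

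With this reduction in hand, the execution is a direct two-step argument. First, fix $S \subset \{1,\dots,n\}$ with $|S| = 2s$ and apply Theorem~\ref{General distributions} in dimension $|S| = 2s$ to the restricted vector $X|_S$. This yields constants $C(\lambda), c(\lambda) > 0$, depending only on the parameters in \eqref{eq:small-ball-og} and \eqref{eq:reverse-small-ball} (and \emph{not} on the ambient dimension $n$), so that whenever $m \geq 2 C(\lambda) s$, with probability at least $1 - e^{-c(\lambda) m}$ the target inequality holds uniformly for all pairs $u, v \in T_s$ with $\operatorname{supp}(u) \cup \operatorname{supp}(v) \subset S$. Second, take a union bound over the $\binom{n}{2s} \leq \bigl(en/(2s)\bigr)^{2s}$ choices of $S$: the total failure probability is at most
\[
\binom{n}{2s} e^{-c(\lambda) m} \leq \exp\bigl( 2 s \log(en/s) - c(\lambda) m \bigr),
\]
so demanding $m \geq C'(\lambda) s \log(en/s)$ for a sufficiently large $C'(\lambda)$ (relative to $1/c(\lambda)$) absorbs the combinatorial factor and yields the claimed exponentially small failure probability.

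The main technical concern, and the one that would need to be checked carefully, is the dimension-free dependence of the constants $C(\lambda), c(\lambda)$ produced by Theorem~\ref{General distributions}: if these degraded with the dimension of the subspace in which we apply the result, the reduction would fail. This is precisely why Theorem~\ref{General distributions} is phrased in terms of a uniformity class $\mathcal{V}$, and a routine inspection of its VC-dimension-based proof confirms that the constants depend solely on the $\varepsilon, \delta, \tilde{\varepsilon}$ in \eqref{eq:small-ball-og} and \eqref{eq:reverse-small-ball}, not on the ambient dimension. With this uniformity in hand, the sparse conclusion follows directly from the two-step argument above.
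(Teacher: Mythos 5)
Your proof is correct, but it takes a genuinely different route from the paper's. The paper does not decompose over supports at the level of events; instead it reruns the empirical-process argument of Theorem~\ref{General distributions} once, globally, after upgrading the complexity bound: it observes that $T_s$ is a union of $\binom{n}{s}$ coordinate balls of dimension $s$, invokes Proposition~\ref{prop:union-vc} to conclude that the relevant indicator classes $\F_\lessgtr=\{\mathds 1(|\langle x,u\rangle|\lessgtr\theta):u\in T_s\}$ have VC-dimension $\lesssim s\log(en/s)$, feeds this into Propositions~\ref{prop:intersection}, \ref{prop:vc-quantif-lln} and \ref{prop:bounded-differences} to get a deviation term of order $\sqrt{s\log(en/s)/m}$, and then finishes exactly as in Theorem~\ref{General distributions}. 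Your version instead treats Theorem~\ref{General distributions} as a black box in dimension $2s$ and pays the combinatorial factor $\binom{n}{2s}$ through a union bound over high-probability events. The two are closely related (the paper's VC bound for the union class is itself proved by a union over supports), and both land on the same sample complexity $m\gtrsim_\lambda s\log(en/s)$ and essentially the same failure probability $e^{-c(\lambda)m}$ (the $e^{-c(\lambda)n}$ in the statement appears to be a typo for $e^{-c(\lambda)m}$, and neither argument literally delivers $e^{-c(\lambda)n}$ when $s\ll n$). Your reduction is sound on the two points that matter: the hypotheses \eqref{eq:small-ball-og} and \eqref{eq:reverse-small-ball} do pass to coordinate restrictions $X|_S$ with the same parameters by zero-padding, and the constants in Theorem~\ref{General distributions} depend only on those parameters, not on the ambient dimension. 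What your modular approach buys is simplicity and reusability; what the paper's buys is a single global control of the empirical process, and a template that extends to the harder effectively sparse case $\Sigma_s$, where no decomposition into finitely many coordinate subspaces is available and one must switch to Gaussian-width arguments.
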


This problem has been brought to light in \cite{foucart-li, foucart-needham,  laska2011democracy}, where the authors were interested in the problem of recovering a vector $x$ which is $s$-sparse from measurements of the form $\Phi_{\lambda}(\langle A_i ,x\rangle)$. The importance of this new problem comes from the natural goal of blending compressed sensing ideas \cite{candes2006stable,candes2006robust,donoho2006compressed} together with recovery from clipped measurements. 

Finally, we are able to present the following result for the question of stability for the unlimited sampling problem. For simplicity, we state it in the non-sparse setting.%As we shall see, it turns out that the problems of recovering from clipped measurements and recovering from ``folded'' measurements $\mathcal{M}_{\lambda}(\langle \cdot, u\rangle)$ are, in spite of being geometrically different, deeply connected. %Consequently, we are able to conclude the result below under the usual small ball and reverse-small ball conditions of Theorem \ref{General distributions}.

\begin{theorem}\label{thm:mod-sharp-try} Let $X$ be a random vector in $\R^n$ for which there are $a,c>0$ such that 
\begin{equation}\label{eq:small-strip}
\inf_{u \in \S^{n-1}} \mathbb{P}\left( |\langle X,u\rangle|\in [a,1)\right) > c, 
\end{equation}
and such that, if 
\[
\Omega_a(\delta) \coloneqq \cup_{m \in \Z} \left[ (a + 2m)\delta, (1-a + 2m)\delta\right],
\]
then for all $\delta \in (0,1)$
\begin{equation}\label{eq:mod-cond}
\inf_{u \in \S^{n-1}} \mathbb{P} \left( |\langle X,u\rangle| \in \Omega_a(\delta)\right) \ge c(\delta) > 0.
\end{equation}
Then, for each $\lambda > 0$ there are $C(\lambda),c(\lambda)>0$  such that, with probability at least $1-e^{-c(\lambda)m}$, we have that whenever $m \ge C(\lambda) \cdot n,$ 
\[
\frac{1}{m} \sum_{i=1}^m |\mathcal{M}_\lambda(\langle X_i,u\rangle) - \mathcal{M}_\lambda(\langle X_i,v\rangle)|^2 \ge c(\lambda)\|u-v\|_2^2,
\]
for all $u,v \in \B_{\R^n}$. 
\end{theorem}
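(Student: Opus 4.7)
The plan is to reduce the problem to a statement about a single vector $w = u-v$ via an elementary modular inequality and then apply a Bernoulli selector argument in the spirit of the proof of Theorem~\ref{General distributions}. The key reduction is the pointwise bound
\[
|\mathcal{M}_\lambda(\alpha) - \mathcal{M}_\lambda(\beta)| \ \ge\ \min\bigl(|\mathcal{M}_\lambda(\alpha - \beta)|,\, \lambda\bigr), \qquad \alpha, \beta \in \R,
\]
which holds because both $\mathcal{M}_\lambda(\alpha) - \mathcal{M}_\lambda(\beta) \in (-2\lambda, 2\lambda)$ and $\mathcal{M}_\lambda(\alpha - \beta) \in [-\lambda, \lambda)$ are congruent to $\alpha - \beta \pmod{2\lambda}$, so they differ by $0$ or $\pm 2\lambda$. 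Squaring and taking $\alpha = \langle X_i, u\rangle$, $\beta = \langle X_i, v\rangle$, $w = u - v$, it suffices to prove that, uniformly over $w \in 2\B_{\R^n}$,
\[
\frac{1}{m}\sum_{i=1}^m \min\bigl(|\mathcal{M}_\lambda(\langle X_i, w\rangle)|^2,\, \lambda^2\bigr) \ \ge\ c(\lambda)\, \|w\|_2^2
\]
with the required high probability.

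To establish the in-expectation pointwise version, I would split on the size of $\|w\|_2$. When $\|w\|_2 \le \lambda$, condition~\eqref{eq:small-strip} applied to $w/\|w\|_2$ shows that with probability at least $c$ one has $|\langle X, w\rangle| \in [a\|w\|_2,\, \|w\|_2) \subset [0,\lambda)$, on which $\mathcal{M}_\lambda$ acts as the identity and yields $|\mathcal{M}_\lambda(\langle X, w\rangle)|^2 \ge a^2 \|w\|_2^2$. When $\|w\|_2 > \lambda$, I apply~\eqref{eq:mod-cond} with $\delta := \lambda/(2\|w\|_2) \in (\lambda/4, 1/2)$; after rescaling, the event $|\langle X, w/\|w\|_2\rangle| \in \Omega_a(\delta)$ is equivalent to $|\langle X, w\rangle| \bmod \lambda \in [a\lambda/2,\, (1-a)\lambda/2]$, and a short case analysis on the parity of $\lfloor |\langle X, w\rangle|/\lambda\rfloor$ then shows that this forces $|\mathcal{M}_\lambda(\langle X, w\rangle)| \ge a\lambda/2$. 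Since $\|w\|_2 \le 2$, the two cases combine to give $\E \min(|\mathcal{M}_\lambda(\langle X, w\rangle)|^2, \lambda^2) \ge c_0(\lambda) \|w\|_2^2$ uniformly in $w \in 2\B_{\R^n}$.

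To move from expectation to a uniform empirical bound, I would introduce the Bernoulli indicator
\[
Y_i(w) := \mathbf{1}\bigl[\, |\mathcal{M}_\lambda(\langle X_i, w\rangle)| \ge \eta(\lambda) \min(\|w\|_2, \lambda) \,\bigr],
\]
chosen so that $\min(|\mathcal{M}_\lambda(\langle X_i, w\rangle)|^2, \lambda^2) \ge c'(\lambda)\, \|w\|_2^2 \, Y_i(w)$ and, by the previous step, $\E Y_i(w) \ge c_1(\lambda) > 0$ uniformly in $w$. The task then reduces to showing that $\inf_{w \in 2\B_{\R^n}} \frac{1}{m}\sum_i Y_i(w) \ge c_1(\lambda)/2$ with probability at least $1 - e^{-c(\lambda)m}$. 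Each $Y_i(w)$ is the indicator that $|\langle X_i, w/\|w\|_2\rangle|$ lies in a union of intervals whose spacing is determined by $\|w\|_2$, so after discretizing $\|w\|_2$ on a geometric grid with $O(\log(2/\lambda))$ levels and taking a union bound, one can run essentially the same VC-dimension-based argument used to prove Theorem~\ref{General distributions}, leveraging the preliminary material in Section~\ref{sec:prelim}.

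The main technical obstacle is controlling the VC dimension in the regime $\|w\|_2 > \lambda$, because $\Omega_a(\delta)$ is an \emph{infinite} union of intervals and the ambient class of sets is therefore not of bounded VC dimension a priori. The fix is the standard observation that on any realization of the sample $X_1,\dots,X_m$, only finitely many intervals intersect the bounded range of the empirical inner products, so the shatter function is polynomial in $m$ and the arguments of Section~\ref{sec:prelim} apply after a routine truncation. Combining the reduction inequality, the pointwise expectation bound, and the uniform empirical concentration then yields the desired stability estimate whenever $m \ge C(\lambda)\, n$.
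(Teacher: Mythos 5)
Your overall architecture matches the paper's: the same dichotomy between $\|u-v\|_2\le\lambda$ (where the hypothesis \eqref{eq:small-strip} applies to $(u-v)/\|u-v\|_2$ and the folding map acts as the identity) and $\|u-v\|_2>\lambda$ (where \eqref{eq:mod-cond} is rescaled to produce the event $\langle X,u-v\rangle\in\Omega_a(\lambda)$), followed by a Bernoulli-selector reduction and a VC-plus-bounded-differences concentration step. Your pointwise inequality $|\mathcal{M}_\lambda(\alpha)-\mathcal{M}_\lambda(\beta)|\ge\min(|\mathcal{M}_\lambda(\alpha-\beta)|,\lambda)$ is correct (the two quantities agree mod $2\lambda$ and hence differ by $0$ or $\pm 2\lambda$) and packages cleanly what the paper uses implicitly in its two cases; your use of $\delta=\lambda/(2\|w\|_2)$ to force $|\mathcal{M}_\lambda|\ge a\lambda/2$ regardless of the parity of $\lfloor|\langle X,w\rangle|/\lambda\rfloor$ is also a valid, and arguably more faithful, reading of hypothesis \eqref{eq:mod-cond} than the paper's.

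The genuine gap is in your treatment of the uniform concentration for the folded regime. You assert that because only finitely many intervals of $\Omega_a$ meet ``the bounded range of the empirical inner products,'' the shatter function is polynomial in $m$ and the VC machinery applies ``after a routine truncation.'' This does not hold as stated. The class $\{x\mapsto\mathds{1}(\langle x,w\rangle\in P):\ w\in 2\B_{\R^n}\}$ for a fixed periodic set $P$ is a family of \emph{dilated} periodic slab-unions, and such classes have \emph{infinite} VC dimension (already in dimension one this is the classical $\mathds{1}(\sin(tx)\ge 0)$ phenomenon, transported here by letting $\|x\|$ rather than $\|w\|$ grow). On a given sample the number of relevant intervals is of order $\max_i\|X_i\|_2/\lambda$, which is a random, unbounded quantity, so neither Proposition~\ref{prop:vc-quantif-lln} nor a distribution-free shatter-function bound can be invoked directly. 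The fix the paper uses -- and which your argument needs -- is a \emph{deterministic, tail-based} truncation: choose $C_\lambda$ so that $\mathbb{P}(|\langle X,u-v\rangle|\ge C_\lambda)\le\theta(\lambda)/2$, lower-bound the selector by the indicator of $\langle X,u-v\rangle\in\Omega_a(\lambda)\cap[-C_\lambda,C_\lambda]$ (which still has expectation at least $\theta(\lambda)/2$), and observe that this truncated set is a union of a fixed, $\lambda$-dependent number of slabs, hence of VC dimension $O(C(\lambda)n)$ by Proposition~\ref{prop:intersection}. With that substitution your proof closes; without it, the concentration step is unjustified.
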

In a nutshell, the assumptions \eqref{eq:small-strip} and \eqref{eq:mod-cond} state that the marginals of the vector $X$ along any direction $u$ do not jump ``too much'', i.e, they have some regularity. In particular, for a centered isotropic $X$ ($\E \langle X,u\rangle^2=1$), it suffices to require that $\langle X,u\rangle$ has a bounded continuous density. We will discuss why the above conditions arise naturally in Section~\ref{sec:general-results}.

\subsection{Sharp stability for \texorpdfstring{$X \sim_d \text{Unif}(\sqrt{n}{\S^{n-1}})$}{}} We now deal with our second main question; that is, the matter of more \emph{particular} distributions and obtaining sharp estimates for the associated $\lambda$-dependent parameters. More explicitly, we are interested in the particular case when $X$ is uniformly distributed on the sphere $\sqrt{n} \S^{n-1}$ and wish to identify the \emph{sharpest} stability constants as well as the \emph{minimal} number of samples needed in order to achieve such estimates. As it will turn out, the dependencies that we will prove are not only sharp for the uniform distribution but are also best possible for \emph{any} frame of normalized vectors. Moreover, in the course of the proof, we will identify general conditions on the distribution which ensure that the sharp dependence can be achieved.

When considering the problem of declipping, there is a big distinction between large and small values of $\lambda$.  If $\lambda$ is sufficiently larger than $1$, then it is possible to simply throw out all the clipped values and stably recover a vector from the unclipped coordinates.  However, if $\lambda$ is small then declipping is significantly more difficult and throwing out the clipped values will substantially reduce the stability of recovery.  The following theorem was proven in \cite{laska2011democracy} for the case of sufficiently large $\lambda$.

\begin{theorem}\label{thm:main-laska}[\cite{laska2011democracy}]
There exist absolute constants $C,c>0$ and $\alpha>0$  for which the following holds: Let $\lambda>\alpha$ and $m\ge Cn $. If $X_1,\ldots, X_m$ are independent copies of a random vector $X$ uniformly distributed on the Euclidean sphere $\sqrt{n}\S^{n-1}$  then, with probability at least $1-e^{-c m}$,
\begin{equation*}\label{Stability bound 1}
   \left( \frac{1}{m}\sum_{i=1}^m \left|\Phi_{\lambda}(\langle X_i,u\rangle) - \Phi_{\lambda}(\langle X_i,v\rangle)\right|^2\right)^{1/2} \ge c\|u-v\|_{2},
\end{equation*}
for all $u,v$ in the unit ball $\B_{\R^n}$. 
\end{theorem}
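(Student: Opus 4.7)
The plan is to leverage the fact that, for $\lambda$ larger than some universal threshold $\alpha$, the probability that a single measurement $\langle X_i,u\rangle$ is clipped is exponentially small uniformly in $u \in \B_{\R^n}$. Consequently $\Phi_\lambda$ acts as the identity on a fixed positive fraction of the indices $i$, and the target inequality reduces to a restricted lower isometry estimate for $(X_i)_{i=1}^m$, which can then be obtained from a small-ball argument combined with a VC-type uniform law of large numbers.

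To make this precise, fix $u,v \in \B_{\R^n}$, set $w = u-v$, and for a threshold $t \in (0,1)$ introduce the event
\[
\mathcal{E}_i(u,v,t) \coloneqq \{|\langle X_i,u\rangle| \le \lambda\} \cap \{|\langle X_i,v\rangle| \le \lambda\} \cap \{|\langle X_i,w\rangle| \ge t\|w\|_2\}.
\]
On $\mathcal{E}_i(u,v,t)$, both arguments of $\Phi_\lambda$ lie in $[-\lambda,\lambda]$, so $|\Phi_\lambda(\langle X_i,u\rangle) - \Phi_\lambda(\langle X_i,v\rangle)|^2 = |\langle X_i,w\rangle|^2 \ge t^2 \|w\|_2^2$. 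The theorem therefore reduces to showing that, with probability at least $1 - e^{-cm}$, the empirical frequency $\tfrac{1}{m}\sum_i \mathbf{1}_{\mathcal{E}_i(u,v,t)}$ is bounded below by a positive constant uniformly in $u,v \in \B_{\R^n}$.

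For the pointwise expectation, I would use the known sub-Gaussianity of marginals $\langle X,\xi\rangle/\|\xi\|_2$ for $X$ uniform on $\sqrt{n}\S^{n-1}$, which yields simultaneously the tail bound $\P(|\langle X,\zeta\rangle| > \lambda) \le 2 e^{-c_\ast \lambda^2}$ for $\zeta \in \B_{\R^n}$ and the anti-concentration bound $\P(|\langle X,\xi\rangle| \ge t \|\xi\|_2) \ge \beta(t) > 0$ for each $t \in (0,1)$. Fixing $t = 1/2$ and taking $\alpha$ so large that $4 e^{-c_\ast \alpha^2} < \beta(1/2)/2$, a union bound gives $\P(\mathcal{E}_i(u,v,1/2)) \ge \beta(1/2)/2 =: \gamma$ uniformly in $u,v \in \B_{\R^n}$ and $\lambda > \alpha$.

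The main obstacle is the uniform lower bound on the empirical counts. The class
\[
\mathcal{C} = \{\mathcal{E}(\,\cdot\,;u,v,1/2) : u,v \in \B_{\R^n}\}
\]
consists of sets expressible as Boolean combinations of at most six halfspaces in $\R^n$, so the VC-dimension estimates for intersections and unions of halfspaces (collected in Section~\ref{sec:prelim}) yield $\mathrm{VC}(\mathcal{C}) = O(n)$. A VC-type uniform Bernstein inequality then implies that, for $m \ge Cn$ with $C$ depending only on $\gamma$, the empirical average stays above $\gamma/2$ simultaneously for all $u,v$, with probability at least $1-e^{-cm}$. Combining with the preceding steps delivers the stated lower bound with $c(\lambda) = t^2 \gamma/2 = \gamma/8$. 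A naive $\eps$-net approach on $\B_{\R^n} \times \B_{\R^n}$ would be complicated by the discontinuity of $\mathbf{1}_{\mathcal{E}_i}$ at the clipping boundary $|\langle X_i,u\rangle| = \lambda$; the VC framework bypasses this difficulty by treating the halfspace structure directly.
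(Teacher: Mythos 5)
Your proof is correct; the paper does not actually prove Theorem \ref{thm:main-laska} itself (it is quoted from \cite{laska2011democracy}), but your argument coincides with \textbf{Case 1} of the paper's proof of Theorem \ref{General distributions}: the same event (both measurements unclipped and the normalized difference anti-concentrated), the same VC bound for Boolean combinations of halfspaces via Proposition \ref{prop:intersection} and Lemma \ref{lem:VC_half-spaces}, and the same combination of the expectation bound with the bounded difference inequality to get the exponential probability. The only ingredient specific to this setting is your verification that the uniform distribution on $\sqrt{n}\S^{n-1}$ satisfies the required tail and anti-concentration estimates uniformly in dimension once $\lambda$ exceeds an absolute threshold $\alpha$, which is routine given the sub-Gaussianity and isotropy of the spherical marginals.
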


Our first main result is to prove the analogue of Theorem \ref{thm:main-laska} for the case of small $\lambda$.  In this setting, both the number of samples and the stability of recovery necessarily depend on $\lambda$.  We prove the following theorem which gives that stable recovery is possible when taking $m=O(\lambda^{-1} \log(1/\lambda)n)$ samples. 

%This answers an open problem raised by the second author at the \emph{ETH Zurich Colloquium in Applied and Computational Mathematics} in Fall 2023 and reiterated by D.~Ghoreishi at the \emph{Research term: Lattice Structures in Analysis and Applications} at ICMAT Madrid in May 2024. 

\begin{theorem}\label{thm:main-vc}
There exist absolute constants $C,c>0$  for which the following holds: Let $0<\lambda \le  1/2$ and $m\ge C\lambda^{-1} \log(1/\lambda) n $. If $X_1,\ldots, X_m$ are independent copies of a random vector $X$ uniformly distributed on the Euclidean sphere $\sqrt{n}\S^{n-1}$ then, with probability at least $1-e^{-c\lambda \frac{m}{\log(1/\lambda)}}$,
\begin{equation*}\label{Stability bound lambda3/2}
    \left(\frac{1}{m}\sum_{i=1}^m \left|\Phi_{\lambda}(\langle X_i,u\rangle) - \Phi_{\lambda}(\langle X_i,v\rangle)\right|^2\right)^{1/2} \ge c \lambda^{3/2}\|u-v\|_{2},
\end{equation*}
for all $u,v$ in the unit ball $\B_{\R^n}$. 
\end{theorem}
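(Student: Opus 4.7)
The strategy refines the VC-dimension approach underlying Theorem~\ref{General distributions}, now with careful tracking of constants for $X \sim \operatorname{Unif}(\sqrt n\, \S^{n-1})$. For each pair $(u,v) \in \B_{\R^n}^2$ with $u\ne v$, set $r = \|u-v\|_2$, $\hat w = (u-v)/r$, and introduce two ``witness'' events
\begin{equation*}
\mathcal E_1(u,v) = \{|\langle X,u\rangle|\le \lambda,\; |\langle X,v\rangle|\le \lambda,\; |\langle X,\hat w\rangle|\ge \tfrac\lambda 2\},
\end{equation*}
\begin{equation*}
\mathcal E_2(u,v) = \{\langle X,u\rangle\ge \lambda,\; \langle X,v\rangle \le -\lambda\}\cup\{\langle X,u\rangle\le -\lambda,\; \langle X,v\rangle\ge \lambda\}.
\end{equation*}
On $\mathcal E_1$ neither inner product is clipped, so $|\Phi_\lambda(\langle X,u\rangle)-\Phi_\lambda(\langle X,v\rangle)|^2 = r^2\langle X,\hat w\rangle^2 \ge \lambda^2 r^2/4$; on $\mathcal E_2$ the two quantities saturate to opposite ends of the clipping range, so $|\Phi_\lambda(\langle X,u\rangle)-\Phi_\lambda(\langle X,v\rangle)|^2 = 4\lambda^2 \ge \lambda^2 r^2$ since $r\le 2$. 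Writing $\mathcal E = \mathcal E_1\cup\mathcal E_2$, this provides the pointwise lower bound $|\Phi_\lambda(\langle X,u\rangle)-\Phi_\lambda(\langle X,v\rangle)|^2 \ge \tfrac{\lambda^2 r^2}{4}\mathbf 1_{\mathcal E(u,v)}(X)$, whence
\begin{equation*}
\frac{1}{m}\sum_{i=1}^m |\Phi_\lambda(\langle X_i,u\rangle) - \Phi_\lambda(\langle X_i,v\rangle)|^2 \ge \frac{\lambda^2 \|u-v\|_2^2}{4}\cdot\frac{|\{i:X_i\in \mathcal E(u,v)\}|}{m}.
\end{equation*}

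The heart of the proof is a uniform small-ball estimate $\mathbb P(X\in \mathcal E(u,v))\ge c\lambda$ for an absolute $c>0$ and every admissible $(u,v)$. Since $\mathcal E$ depends on $X$ only through its projection onto $\operatorname{span}(u,v)$, and because the two-dimensional marginals of $\operatorname{Unif}(\sqrt n\, \S^{n-1})$ have densities bounded above and below by absolute constants on $O(1)$-neighborhoods of the origin (for $n$ larger than an absolute constant; degenerate low-$n$ cases are treated directly), this reduces to a planar integration. Decomposing $u = \alpha\hat w + \beta e_Z$ with $\alpha^2+\beta^2 \le 1$ and $e_Z \perp \hat w$, the event $\mathcal E_1$ corresponds in the $(\langle X,\hat w\rangle,\langle X,e_Z\rangle)$-plane to the intersection of two slabs of width $\propto \lambda$ with the complement of a slab of width $\lambda$; a direct computation yields $\mathbb P(\mathcal E_1)\gtrsim \lambda$ in the ``near-parallel'' regime $r\lesssim \lambda$ or $\beta\lesssim \lambda$. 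In the complementary ``transverse'' regime $r,\beta\gtrsim \lambda$ one shows $\mathbb P(\mathcal E_2)\gtrsim \min(1, r/\beta)\gtrsim \lambda$ by an analogous planar computation. Every pair $(u,v)$ falls into one of the two regimes, so $\mathbb P(\mathcal E)\ge c\lambda$ uniformly.

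Finally, the class of indicators $\{\mathbf 1_{\mathcal E(u,v)}\}_{u,v}$ is a Boolean combination of $O(1)$ half-space indicators whose normal vectors are $u$, $v$, or $\hat w$, and therefore has VC dimension at most $C_0 n$. The sharp sample complexity $m\ge C\lambda^{-1}\log(1/\lambda) n$ and the exponent $c\lambda m/\log(1/\lambda)$ in the failure probability both emerge from a Bernstein-type relative deviation inequality for VC classes (Vapnik--Chervonenkis/Talagrand): with target accuracy $\varepsilon \asymp \lambda$ and variance parameter $p(1-p)\le p \asymp \lambda$, this bound gives $\mathbb P_m(\mathcal E(u,v))\ge \tfrac12\mathbb P(\mathcal E(u,v))\ge \tfrac c2\lambda$ simultaneously over $(u,v)$ with probability at least $1-\exp(-c\lambda m/\log(1/\lambda))$, provided $m \ge C\lambda^{-1}\log(1/\lambda) n$. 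Substituting into the inequality displayed above yields $m^{-1}\sum_i |\Phi_\lambda(\langle X_i,u\rangle)-\Phi_\lambda(\langle X_i,v\rangle)|^2 \ge c'\lambda^3 \|u-v\|_2^2$, and taking square roots finishes the proof.

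\textbf{Main obstacle.} The most delicate ingredient is the uniform small-ball bound: one must check that the pair $\{\mathcal E_1,\mathcal E_2\}$ exhausts every geometric configuration in $\B_{\R^n}^2$ while giving probability of order $\lambda$ rather than the ``naive'' $\lambda^2$ that would arise from a single three-slab intersection; this is what pins down the exponent $3/2$ in $\lambda^{3/2}$. A secondary technical point is to invoke the VC concentration inequality in its sharp Bernstein/relative-deviation form; naive symmetrization would introduce an extra $\log n$ factor and degrade the sample complexity to $\lambda^{-1} n\log n$ instead of the claimed $\lambda^{-1}\log(1/\lambda)n$.
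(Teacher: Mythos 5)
Your proposal is correct in its architecture and reaches the stated bounds, but it takes a genuinely different route from the paper at both of the two key steps. For the pointwise/small-ball step, the paper does not use your two-event decomposition (both measurements unclipped versus oppositely saturated). Instead it uses a single \emph{asymmetric} event $\{|\langle X,u\rangle|\le\lambda/2\}\cap\{|\langle X,\tfrac{u-v}{\|u-v\|_2}\rangle|\ge L\}$ with $L\asymp\lambda$, together with Lemma~\ref{L:Clam}: if $u$ is unclipped at level $\lambda/2$ then the conclusion holds \emph{whether or not} $v$ is clipped, since in the clipped case the outputs still differ by at least $\lambda/2\ge L\|u-v\|_2$. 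This makes the uniform lower bound $\mathbb{P}(\mathcal{E})\ge c_0\lambda/2-c_1L\gtrsim\lambda$ a one-line union bound via \eqref{E:proportional}, whereas your decomposition — because $\mathcal{E}_1$ demands \emph{both} inner products be unclipped — forces the planar case analysis over $(\alpha,\beta,r)$ that you correctly flag as the delicate point; that analysis does go through (every configuration falls into one of your two regimes), but it is substantially more work than the paper's argument and is the part of your write-up that is only sketched. For the uniformity step, the paper deliberately avoids VC machinery here: it takes a $\lambda^2$-net, applies a Chernoff bound to the Bernoulli selectors at each net point, and transfers to arbitrary $(u,v)$ using the auxiliary events $\mathcal{A},\mathcal{B}$ and the sub-gaussian upper frame bound (Proposition~\ref{prop:vershynin-subgauss}); the union bound over the net is exactly what produces $m\gtrsim\lambda^{-1}\log(1/\lambda)n$ and the exponent $e^{-c\lambda m}$. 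Your alternative — the relative-deviation (Bernstein-type) form of the VC inequality with variance parameter $p\asymp\lambda$ — is a legitimate substitute and yields the same sample complexity and an even slightly stronger failure probability $e^{-c\lambda m}$; note that the paper's remark dismissing VC methods for this theorem refers only to the expectation-form bound of Proposition~\ref{prop:vc-quantif-lln}, which would indeed cost $m\gtrsim\lambda^{-2}n$ (not the $\lambda^{-1}n\log n$ you state as the penalty for naive symmetrization — that parenthetical is inaccurate, though immaterial). In short: your proof buys a net-free concentration argument at the price of a harder small-ball verification; the paper buys a trivial small-ball bound via the asymmetric event of Lemma~\ref{L:Clam} at the price of a hands-on net-transfer argument.
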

{ 
In Theorem \ref{thm:main-vc} we have that the stability of $\lambda$-saturation recovery on the unit ball scales according to $\lambda^{3/2}$. The following theorem gives that not only is this dependence on $\lambda$ optimal for $(X_i)_{i=1}^m$ being independent and uniformly distributed on $\sqrt{n}\S^{n-1}$, but that the factor of $\lambda^{3/2}$ is the best possible dependence on $\lambda$ for any collection of vectors in $\sqrt{n}\S^{n-1}$.

\begin{theorem}\label{thm:best-lam}
    There exists an absolute constant $\beta>0$ for which the following holds.  Let $0<\lambda \le  1$ and $m\ge n$.  If  $x_1,\ldots, x_m$ are any sequence of vectors in $\sqrt{n}\S^{n-1}$ then there exists $u,v\in  \B_{\R^n}$ so that 
\begin{equation*}\label{Stability bound lambda3/2 best}
    \left(\frac{1}{m}\sum_{i=1}^m \left|\Phi_{\lambda}(\langle x_i,u\rangle) - \Phi_{\lambda}(\langle x_i,v\rangle)\right|^2\right)^{1/2} < \beta \lambda^{3/2}\|u-v\|_{2}.
\end{equation*}    
\end{theorem}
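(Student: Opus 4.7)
The plan is to exploit the fact that along a line through the origin, the clipping operator completely absorbs any small scalar perturbation, provided every measurement remains in the clipped region. Concretely, I would fix a unit vector $u \in \S^{n-1}$ and set $v = (1-\eps) u$ for some small $\eps > 0$; then $u - v = \eps u$, so $\|u-v\| = \eps$. Since $\langle x_i, v\rangle = (1-\eps)\langle x_i, u\rangle$ has the same sign as $\langle x_i, u\rangle$, one checks that $\Phi_\lambda(\langle x_i, u\rangle) - \Phi_\lambda(\langle x_i, v\rangle)$ vanishes whenever $|\langle x_i, u\rangle| \ge \lambda/(1-\eps)$ (both measurements clipped to the same $\pm\lambda$), equals $\eps \langle x_i, u\rangle$ whenever $|\langle x_i, u\rangle| < \lambda$ (neither clipped), and has magnitude at most $\eps\lambda$ in the transitional window $\lambda \le |\langle x_i, u\rangle| < \lambda/(1-\eps)$. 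For a generic $u$ no $\langle x_i, u\rangle$ equals $\pm\lambda$, so shrinking $\eps$ empties this transitional window, leaving
\[
\frac{1}{m}\sum_{i=1}^{m} \bigl|\Phi_\lambda(\langle x_i, u\rangle) - \Phi_\lambda(\langle x_i, v\rangle)\bigr|^2 = \frac{\eps^2}{m}\, S(u), \qquad S(u) \coloneqq \sum_{i \,:\, |\langle x_i, u\rangle| < \lambda} |\langle x_i, u\rangle|^2.
\]

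The problem thus reduces to producing a unit vector $u$ with $S(u) \lesssim m\lambda^3$. I would find such a $u$ by averaging over $u$ uniformly distributed on $\S^{n-1}$. The key input is a dimension-free density bound: by rotational symmetry, $\langle x_i, u\rangle$ is distributed as $\sqrt{n}\, W_1$, where $W_1$ is the first coordinate of a uniform point on $\S^{n-1}$, and the density of $\sqrt{n}\, W_1$ is bounded on $[-1,1]$ by an absolute constant $C_0$, uniformly in $n$. For $n \ge 3$ this follows from $(1-s^2/n)^{(n-3)/2} \le 1$ together with the Wendel/Gautschi estimate $\Gamma(n/2)/\Gamma((n-1)/2) \le \sqrt{n/2}$; the cases $n \in \{1,2\}$ are handled by direct inspection.

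Given the density bound, $\mathbb{E}_u \bigl[|\langle x_i, u\rangle|^2 \mathbf{1}_{\{|\langle x_i, u\rangle| < \lambda\}}\bigr] \le \lambda^2 \cdot 2C_0 \lambda = 2C_0 \lambda^3$, so $\mathbb{E}_u S(u) \le 2C_0 m \lambda^3$. This yields a unit vector $u$ with $S(u) \le 2C_0 m \lambda^3$; since the event that some $\langle x_i, u\rangle$ equals $\pm\lambda$ is a finite union of codimension-$1$ submanifolds of $\S^{n-1}$, we may additionally arrange that this does not occur. For such $u$ and any sufficiently small $\eps > 0$, the preceding display gives $\tfrac{1}{m}\sum_i |\Phi_\lambda(\langle x_i, u\rangle) - \Phi_\lambda(\langle x_i, v\rangle)|^2 \le 2C_0 \lambda^3 \|u-v\|^2$, and the desired strict inequality follows with $\beta \coloneqq 2\sqrt{2C_0}$, an absolute constant.

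The main obstacle is the universal-in-$n$ density bound used in the averaging step: it is the only place where the normalization $\|x_i\| = \sqrt{n}$ plays a role, and one must handle the tail behavior of $W_1$ near $\pm 1$ in low dimensions carefully to ensure the bound is dimension-free on the relevant range $|s| \le \lambda \le 1$. Everything else is essentially a bookkeeping computation and a measure-zero avoidance argument on $\S^{n-1}$.
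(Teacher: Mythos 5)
Your proposal is correct and follows essentially the same route as the paper: perturb along a ray via $v=(1-\eps)u$ to reduce the claim to finding $u\in\S^{n-1}$ with $\frac1m\sum_{|\langle x_i,u\rangle|\le\lambda}|\langle x_i,u\rangle|^2\lesssim\lambda^3$, then average over $u$ uniform on the sphere using the $O(\lambda)$ bound on $\mathbb{P}(|\langle x_i,U\rangle|\le\lambda)$. The only cosmetic differences are that you fix a generic $u$ and a small $\eps$ instead of taking $\eps\to 0$, and you phrase the strip estimate as a dimension-free density bound rather than the paper's surface-area comparison.
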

}

As stated previously, the fact that we have chosen the vectors $X_i$ to be uniformly distributed on the Euclidean sphere is motivated by  applications. However, as will become evident below, our proof of Theorem~\ref{thm:main-vc} is not restricted to this specific distribution. In fact, we will identify rather precise necessary and sufficient conditions on the distribution in order for Theorem~\ref{thm:main-vc}  to hold.  %The other important reason for focusing on uniformly distributed vectors $X_i$ is that the stability bound proportional to $\lambda^{3/2}$ in \eqref{Stability bound lambda3/2} is \emph{sharp}. In Corollary~\ref{General distributions} we will provide a rather comprehensive class of distributions satisfying \eqref{declip def}  with high-probability and $m=O(n)$, but, naturally, the constant $c(\lambda)$ will depend on the distribution. 

The proof of Theorem \ref{thm:main-vc} relies on an $\varepsilon$-net type argument. Indeed, by leveraging the precise geometric constraints given by the distributions of the vector $X$ and by $u,v \in \B_{\R^n},$ we are able to construct well-controlled $\varepsilon$-nets -- a feature which reduces the number of samples needed leading to the optimal dependence with respect to $\lambda$. 

Although it might be surprising at first that an $\varepsilon$-net estimate is more accurate than the more technically involved argument via VC-dimension, we remark on the fact that this is mainly due to the precise geometric nature of the data distribution while VC-dimension is a complexity that relies on a combinatorial nature of the function class conditioned on the data distribution.

Theorems \ref{thm:main-laska} and \ref{thm:main-vc} give  stability bounds for recovering a vector in the unit ball from clipped random measurements.  We now state the analogous theorem for folded measurements.

\begin{theorem}\label{T:folding}  There exist absolute constants $C,c> 0$  for which the following holds: Let $X_1,\dots,X_m$ be independent copies of a random vector $X$ uniformly distributed on the Euclidean sphere $\sqrt{n} \S^{n-1}$.  If  $\lambda\geq 1/2$ and  $m \ge C n$ then  with probability at least $1 - e^{-c m}$, we have for all $u,v \in \B_{\R^n}$ that
\[
\left(\frac{1}{m} \sum_{i=1}^m |\mathcal{M}_\lambda(\langle X_i,u\rangle) - \mathcal{M}_\lambda(\langle X_i,v\rangle)|^2\right)^{1/2} \ge c\|u-v\|_2.
\]

If  $0<\lambda< 1/2$ and  $m \ge C\log(1/\lambda) n$ then  with probability at least $1 - e^{-c m/\log(1/\lambda)}$, we have for all $u,v \in \B_{\R^n}$ that
\[
\left(\frac{1}{m} \sum_{i=1}^m |\mathcal{M}_\lambda(\langle X_i,u\rangle) - \mathcal{M}_\lambda(\langle X_i,v\rangle)|^2\right)^{1/2} \ge c\lambda\|u-v\|_2.
\]
\end{theorem}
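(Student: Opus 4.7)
The plan is to follow the $\varepsilon$-net architecture used for the declipping result in Theorem~\ref{thm:main-vc}, replacing the clipping function $\Phi_\lambda$ by the folding function $\mathcal M_\lambda$. The key structural fact about $\mathcal M_\lambda$ is that it is piecewise linear with slope one on each interval $((2k-1)\lambda,(2k+1)\lambda)$, so whenever both $\langle X_i,u\rangle$ and $\langle X_i,v\rangle$ lie in one such branch, the folded difference reduces to a linear measurement,
\[
|\mathcal M_\lambda(\langle X_i,u\rangle) - \mathcal M_\lambda(\langle X_i,v\rangle)| \;=\; |\langle X_i,u-v\rangle|.
\]
This ``same branch'' event plays the role of the ``no saturation at either endpoint'' event driving the proofs of Theorems~\ref{thm:main-laska} and \ref{thm:main-vc}.

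The first step is a pointwise small-ball estimate. For fixed $u,v\in\B_{\R^n}$, write $S=\langle X,u\rangle$, $T=\langle X,v\rangle$, $D = S-T$; conditionally on $D$, the pair $(S,T)$ lies in a common branch with probability $\max(0,1-|D|/(2\lambda))$, in which case $\mathcal M_\lambda(S)-\mathcal M_\lambda(T) = D$. Using that $X$ uniform on $\sqrt n\,\S^{n-1}$ induces an (approximately) bivariate-Gaussian density of $(S,T)$ depending only on $\|u\|_2, \|v\|_2, \langle u,v\rangle$, a direct two-dimensional integration, split into the three regimes $\|u-v\|_2 \ll \lambda$, $\|u-v\|_2 \sim \lambda$, and $\|u-v\|_2 \gtrsim 1$, yields
\[
\mathbb P\bigl(|\mathcal M_\lambda(\langle X,u\rangle)-\mathcal M_\lambda(\langle X,v\rangle)| \ge c\min(1,\lambda)\,\|u-v\|_2\bigr) \;\ge\; p(\lambda),
\]
where $p(\lambda)$ is an absolute positive constant when $\lambda \ge 1/2$ and depends mildly on $\lambda$ otherwise.

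The second step transfers this pointwise statement to a uniform one via an $\varepsilon$-net. Build $\mathcal N \subset \B_{\R^n}$ at scale $\varepsilon \asymp \min(1,\lambda)$, of log-cardinality $\lesssim n\log(1/\lambda)$ when $\lambda < 1/2$ (resp.\ $\lesssim n$ when $\lambda \ge 1/2$). For each pair $(u',v')\in\mathcal N\times\mathcal N$ with $\|u'-v'\|_2 \ge \varepsilon$, Hoeffding's inequality applied to the Bernoulli indicators $\mathbf 1\bigl\{|\mathcal M_\lambda(\langle X_i,u'\rangle)-\mathcal M_\lambda(\langle X_i,v'\rangle)| \ge c\min(1,\lambda)\|u'-v'\|_2\bigr\}$ shows that a fraction $\ge p(\lambda)/2$ of the samples clear the threshold, with failure probability tame enough that the union bound over the $|\mathcal N|^{2}$ pairs still leaves success probability $\ge 1 - \exp\!\bigl(-cm/\log(1/\lambda)\bigr)$ (respectively $1-\exp(-cm)$ when $\lambda \ge 1/2$). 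The small-ball identity $\tfrac{1}{m}\sum_i Z_i^{2} \ge \tau^{2}\#\{i:Z_i \ge \tau\}$ with $\tau = c\min(1,\lambda)\|u'-v'\|_2$ then delivers the claimed lower bound at net points. Passing from $\mathcal N$ to all of $\B_{\R^n}$ uses the $1$-Lipschitz property of $\mathcal M_\lambda$ together with the standard quadratic bound $\tfrac{1}{m}\sum_i \langle X_i,w\rangle^{2}\le C\|w\|_2^{2}$ uniformly on $w \in \B_{\R^n}$ (valid for $m \gtrsim n$), so the net-to-continuum perturbation is $O(\varepsilon^{2}) = O(\min(1,\lambda)^{2})$, absorbable into the main term.

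The main obstacle lies in the pointwise small-ball estimate across the full range $\|u-v\|_2 \in (0,2]$. When $\|u-v\|_2 \ll \lambda$, the ``same branch'' event is nearly certain but $|\langle X,u-v\rangle|$ is correspondingly small, and one has to verify that the Gaussian density of $\langle X,u-v\rangle$ still places constant probability on $|\langle X,u-v\rangle| \ge c\lambda \|u-v\|_2$. When $\|u-v\|_2 \gtrsim \lambda$, several branches interact and $|\mathcal M_\lambda(S)-\mathcal M_\lambda(T)|$ decouples from $|D|$, so the lower bound has to be extracted from the joint density of $(S,T)$ on a pair of carefully chosen intervals of length $\lambda$. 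Producing the small-ball lower bound uniformly across both regimes, and balancing it against the net cardinality, is what ties the net scale to $\lambda$ and is the source of the $\log(1/\lambda)$ factor appearing in both the sample complexity and the exponent of the stated probability.
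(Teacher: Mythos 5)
Your overall architecture (pointwise small-ball estimate plus an $\varepsilon$-net at scale $\asymp\min(1,\lambda)$) is the right shape, and you correctly isolate the two structural facts that drive the proof: folding acts as the identity on differences within a single branch, and in the multi-branch regime the lower bound must come from the distribution of $\langle X,u-v\rangle$ modulo $2\lambda$. However, there is a genuine gap in your net-to-continuum step: you invoke ``the $1$-Lipschitz property of $\mathcal{M}_\lambda$,'' but $\mathcal{M}_\lambda$ is \emph{not} Lipschitz --- it has jump discontinuities of size $2\lambda$ at the odd multiples of $\lambda$ (e.g.\ for $\lambda=1$, $\mathcal{M}_1(0.999)=0.999$ while $\mathcal{M}_1(1.001)=-0.999$). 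Consequently the claimed $O(\varepsilon^2)$ perturbation bound via $\frac1m\sum_i\langle X_i,w\rangle^2\le C\|w\|_2^2$ does not follow: a sample for which $\langle X_i,u\rangle$ and $\langle X_i,u'\rangle$ straddle a branch point contributes an error of order $\lambda^2$ regardless of how small $\|u-u'\|_2$ is. This discontinuity is precisely the central technical difficulty of the folding problem, and the paper circumvents it by transferring from the net at the level of \emph{events} rather than function values: at net points one demands the stronger event $\langle X_i,w_j\rangle\in\Omega_{2a}$, observes that this together with $|\langle X_i,w-w_j\rangle|\le a\lambda$ forces $\langle X_i,w\rangle\in\Omega_a$, and then bounds the \emph{fraction of exceptional samples} with $|\langle X_i,w-w_j\rangle|>a\lambda$ by $(a\lambda)^{-2}\cdot\frac1m\sum_i|\langle X_i,w-w_j\rangle|^2\lesssim(a\lambda)^{-2}\varepsilon^2$, which is a small constant for $\varepsilon=c\lambda$. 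You need some version of this margin argument; the Lipschitz shortcut is unavailable.

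A second, smaller gap: your union bound runs only over net pairs with $\|u'-v'\|_2\ge\varepsilon$, so pairs $u,v$ with $\|u-v\|_2\ll\varepsilon\asymp\lambda$ are not covered --- no net at a fixed scale can control the normalized quantity as $\|u-v\|_2\to0$. This regime needs a separate argument that is uniform over the \emph{directions} $w=(u-v)/\|u-v\|_2$. The paper handles it as its Case 1: for $\|u-v\|_2\le\lambda$ one has the pointwise inequality $|\mathcal{M}_\lambda(s)-\mathcal{M}_\lambda(t)|\ge|s-t|$ whenever $|s-t|\le\lambda$, so it suffices to show that a constant fraction of samples satisfy $\tfrac12\|w\|_2\le|\langle X_i,w\rangle|\le\|w\|_2$ uniformly in $w$, which follows from a VC-dimension bound for the class of such annulus indicators and requires only $m\gtrsim n$ (no $\log(1/\lambda)$). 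Incorporating these two repairs, your small-ball computation for the regime $\|u-v\|_2\gtrsim\lambda$ (which in the paper takes the clean form $\mathbb{P}(\langle X,u-v\rangle\in\Omega_a)\ge1-ca$, an absolute constant rather than a $\lambda$-dependent one) would complete the argument along essentially the paper's lines.
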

Notice that in Theorem~\ref{T:folding} the stability constant now scales like $\lambda$ (rather than $\lambda^{3/2}$) and the number of samples is significantly lower than in the analogous declipping problem. As before, having a stability constant proportional to $\lambda$ is optimal for this problem not only for the uniform distribution but for all collections of vectors in $\sqrt{n} \S^{n-1}$ -- see Remark \ref{rmk:sharpness}. Furthermore, it will be evident from the proof of Theorem~\ref{T:folding} that the sharp dependence on $\lambda$ is not restricted to the uniform case, but can in fact be achieved for a rather wide  class of distributions.  

Similarly to the previous subsection, the results in this subsection have sparse counterparts. For simplicity, we state our sparse recovery result for clipped measurements:

\begin{theorem}\label{thm:main-sparse}
Let $\lambda > 0$ be an absolute constant and let $T_s$ be the subset of the unit Euclidean ball consisting of all $s$-sparse vectors. Let $X_1,\ldots, X_m$ be independent copies of a random sub-gaussian vector $X$. There exists absolute constants $C,c_1 > 0$ for which the following hold. 

\begin{enumerate}
    \item[\emph{(I)}] If $m\ge C \lambda^{-1}\log\left( \frac{1}{\lambda}\right) s \log(en/s)$ then,  with probability at least $1-e^{-c \lambda m}$, we have 
    \begin{equation*}
    \inf_{(u,v) \in T_s\times T_s}\left(\frac{1}{m\|u-v\|_2^2}\sum_{i=1}^m \left|\Phi_{\lambda}(\langle X_i,u\rangle) - \Phi_{\lambda}(\langle X_i,v\rangle)\right|^2\right)^{1/2} \ge c_1 \lambda^{3/2}.
\end{equation*}

    \item[\emph{(II)}] If $m \ge C \lambda^{-3} \log\left( \frac{1}{\lambda}\right) s \log(en/s)$ then, with probability at least $1 - e^{-c \lambda m}$, we have 
    \begin{equation*}
    \inf_{(u,v) \in \Sigma_s\times \Sigma_s}\left(\frac{1}{m\|u-v\|_2^2}\sum_{i=1}^m \left|\Phi_{\lambda}(\langle X_i,u\rangle) - \Phi_{\lambda}(\langle X_i,v\rangle)\right|^2\right)^{1/2} \ge c_1 \lambda^{3/2},
\end{equation*}
    where $\Sigma_s = \{ x \in \B_{\R^n} \colon \|x\|_1 \le \sqrt{s}\|x\|_2\}$ denotes the set of effectively sparse vectors in the unit ball. 
\end{enumerate}

\end{theorem}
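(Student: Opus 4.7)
The plan is to adapt the $\varepsilon$-net strategy used in the proof of Theorem~\ref{thm:main-vc} to the sparse setting, exploiting the smaller complexity of $T_s$ and $\Sigma_s$ relative to $\B_{\R^n}$. The starting pointwise bound is the same: for any $u,v\in\R^n$ and any realization,
\[
\bigl|\Phi_\lambda(\langle X_i,u\rangle)-\Phi_\lambda(\langle X_i,v\rangle)\bigr|^2 \;\ge\; \bigl|\langle X_i,u-v\rangle\bigr|^2 \cdot \mathbf{1}_{E_i(u,v)},
\]
where $E_i(u,v)=\{|\langle X_i,u\rangle|\le \lambda\}\cap\{|\langle X_i,v\rangle|\le \lambda\}$. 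Combining the sub-gaussian tail of $X$ with the small-ball estimate behind Theorem~\ref{thm:main-vc} gives, for $\lambda\in(0,1)$, the in-expectation bound $\E|\langle X_i,u-v\rangle|^2 \mathbf{1}_{E_i(u,v)}\gtrsim \lambda^{3}\|u-v\|_2^2$; the task is to promote this to a uniform empirical lower bound on the appropriate sparse set.

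For part (I), the difference set $T_s-T_s$ is contained in $2T_{2s}$, so it suffices to construct, at scale $\eta\asymp \lambda$, an $\eta$-net $\mathcal{N}$ of $T_{2s}$ in $\ell_2$ by first choosing a support of size $2s$ out of $n$ and then discretizing an $s$-dimensional unit ball. This yields $\log|\mathcal{N}|\lesssim s\log(en/s)+s\log(1/\lambda)$. The random variables $|\langle X_i,w\rangle|^2\mathbf{1}_{E_i}$ are bounded by $\lambda^2$ and have second moment $\lesssim \lambda^3\|w\|_2^2$, so Bernstein's inequality and a union bound over $\mathcal{N}$ deliver the lower bound $c\lambda^3\|w\|_2^2$ on the empirical average at every net point with probability $1-e^{-c\lambda m}$ as soon as $m\gtrsim \lambda^{-1}\log(1/\lambda)\cdot s\log(en/s)$. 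Since $\Phi_\lambda$ is $1$-Lipschitz, approximating arbitrary $(u,v)\in T_s\times T_s$ by a close net point perturbs each empirical average by at most $C\eta$ terms of comparable size in $\ell_2$; choosing $\eta\asymp \lambda$ ensures this perturbation is absorbed into the constant, yielding (I).

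For part (II), a Maurey-type approximation is the key extra ingredient: every $x\in\Sigma_s$ with $\|x\|_2\le 1$ and $\|x\|_1\le \sqrt{s}$ can be approximated in $\ell_2$ within error $\eta$ by a convex combination of $\lesssim \eta^{-2}$ sparse vectors from a rescaled $T_s$ (equivalently, $\Sigma_s$ admits $\eta$-nets whose log-cardinality is controlled by $\eta^{-2} s \log(en/s)$ modulo logarithms). Running the same Bernstein-and-union-bound argument as in (I) against this larger net, and then using the $1$-Lipschitz property of $\Phi_\lambda$ to bridge the approximation gap, forces the choice $\eta\asymp \lambda$ (to match the target stability $\lambda^{3/2}$) and consequently requires $m\gtrsim \lambda^{-3}\log(1/\lambda)\cdot s\log(en/s)$ for concentration. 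The polynomial-in-$\lambda$ worsening of the sample complexity is precisely the price paid for trading the combinatorial structure of $T_s$ for the purely convex structure of $\Sigma_s$.

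The main obstacle I anticipate is the passage from $T_s$ to $\Sigma_s$: one must simultaneously preserve the $\lambda^{3/2}$ stability constant and the $s\log(en/s)$ complexity term, and the naive route of iterating Maurey produces cross terms that make the bookkeeping delicate. A robust alternative is to avoid Maurey altogether and appeal directly to a Sudakov-type entropy estimate for $\Sigma_s$ combined with a chaining argument on the bounded empirical process $(u,v)\mapsto \frac{1}{m}\sum_i |\langle X_i,u-v\rangle|^2\mathbf{1}_{E_i(u,v)}$; the required sharp control of sub-gaussian small-ball probabilities along effectively sparse directions is then the true technical core of (II).
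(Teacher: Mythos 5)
Your overall architecture --- a net of log-cardinality $s\log(en/s)$ (up to logarithmic factors in $\lambda$) for part (I), and the Maurey/Plan--Vershynin entropy bound $\log N(\Sigma_s\cap\S^{n-1},\eta)\lesssim \eta^{-2}s\log(en/s)$ for part (II) --- is the same as the paper's (the latter is exactly Lemma \ref{lemma:epsilon-effect-sparse}, and the resulting sample-complexity bookkeeping matches). However, there are two genuine gaps in how you set up and close the net argument.

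First, your starting pointwise bound retains only the event $E_i(u,v)$ that \emph{both} measurements are unclipped, and the claimed expectation bound $\E\,|\langle X,u-v\rangle|^2\mathds{1}_{E(u,v)}\gtrsim\lambda^3\|u-v\|_2^2$ is false for well-separated pairs. Take $u\perp v$ unit vectors and $X$ uniform on $\sqrt{n}\S^{n-1}$: then $\langle X,u\rangle$ and $\langle X,v\rangle$ are essentially independent standard normals, so $\mathbb{P}(E(u,v))\asymp\lambda^2$, while on $E(u,v)$ one has $|\langle X,u-v\rangle|\le 2\lambda$; hence the expectation is $O(\lambda^4)$, whereas $\lambda^3\|u-v\|_2^2\asymp\lambda^3$. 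This is exactly why the paper uses Lemma \ref{L:Clam}, whose event requires only \emph{one} of the two measurements to lie well inside the unclipped range (probability $\asymp\lambda$, not $\lambda^2$); if the other is clipped, the difference of clipped values is still at least $\lambda/2\ge L\|u-v\|_2$. Without this one-sided version you lose a factor of $\lambda$ precisely in the regime $\|u-v\|_2\asymp 1$ and cannot reach the stated $\lambda^{3/2}$ stability.

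Second, the passage from net points to arbitrary points cannot be done by ``absorbing a perturbation of size $C\eta$ into the constant'': the target lower bound $c\lambda^3\|u-v\|_2^2$ is homogeneous of degree two in $\|u-v\|_2$ and hence arbitrarily small, while the Lipschitz perturbation is an additive error at the fixed scale $\eta$, so it cannot be absorbed uniformly over pairs with $\|u-v\|_2\ll\eta$. (Relatedly, the empirical functional and the event $E_i(u,v)$ are not functions of $u-v$ alone, so netting the difference set $T_{2s}$ does not suffice.) The paper's proofs of Theorems \ref{thm:main-vc} and \ref{thm:main-sparse} resolve this by netting the base point $u$ and the normalized direction $w=(u-v)/\|u-v\|_2$ \emph{separately}, and by controlling the approximation error multiplicatively through a counting/Markov bound, $\sum_i\mathds{1}(|\langle X_i,u-u_k\rangle|\ge\lambda/10)\le 100\lambda^{-2}\sum_i|\langle X_i,u-u_k\rangle|^2\lesssim m\lambda^{-2}\|u-u_k\|_2^2$, which forces a net at scale $\lambda^{2}$ rather than $\lambda$ in part (I), and through the sub-gaussian tail $\mathbb{P}(|\langle X,u-u_0\rangle|\ge\lambda)\le Ce^{-\lambda^2/\|u-u_0\|_2^2}\le c\lambda$ at scale $\eta\asymp\lambda/\sqrt{\log(1/\lambda)}$ in part (II). With these two corrections your outline does reduce to the paper's argument.
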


{As an example of a sparse counterpart result of Theorem \ref{T:folding}, one can show that for $m\gtrsim \log(1/\lambda)n$ (or $m\gtrsim \lambda^{-2}\log(1/\lambda)n$) the same results of Theorem \ref{T:folding} (up to an absolute constant) hold for all $u,v \in T_s$ (or $u,v \in \Sigma_s$). For the sake of simplicity, we omit the proof as it is similar to that of Theorem \ref{thm:main-sparse}.

Although, like in Theorem \ref{thm:main-vc}, Part I of Theorem \ref{thm:main-sparse} admits a proof that resorts to VC-dimension considerations, we highlight that we need different ideas when considering the set of \emph{effectively sparse} vectors which is the convex hull of the set of $s$-sparse vectors. The main reason why we need to forego of VC-dimension techniques is due to the fact that VC-dimension fails to capture `effective-sparsity': we only know, a priori, a \emph{Gaussian width bound} on that set, and hence the VC-dimension estimates are now obsolete, since the exact algebraic structure of the sets under scrutiny is no longer the desired object. 

We remark that the increase in the number of samples in the effectively sparse case is not artificial and is in fact systematically related to the complexity of the set of effectively sparse vectors. We will discuss this in detail below; see Remark~\ref{remark on samples}.}
\subsection{Sparse results connected to one-bit compressed sensing} In the final set of main results of this manuscript, we highlight applications to other adjacent problems, where our methods also yield \emph{sharp bounds}. 

Our next result deals with the problem of recovery from clipped measurements of sparse vectors on the \emph{sphere} rather than the recovery of clipped measurements of vectors in the \emph{ball} which we studied in previous subsections. Somewhat surprisingly, this reduction happens to have a profound impact on the stability guarantees.  One main reason to consider this problem is that, by taking $\lambda \to \infty$, one obtains the usual problem of sparse recovery in compressed sensing -- as explored in the references above -- and when one takes a normalized limit when $\lambda \to 0$ one formally obtains the analogously famous problem of \emph{one-bit sparse recovery} \cite{boufounos20081,plan2013one}, which asks for recovery conditions on sparse vectors $x$ from measurements of the form $\text{sign}(\langle A_i,x\rangle)$.

In this direction, we highlight the following result. 

\begin{theorem}\label{thm:spherical} Let $\lambda > 0$ be an absolute constant, and let $X$ be a random vector uniformly distributed on the sphere $\sqrt{n} \S^{n-1}.$ Then there are absolute constants $C,c>0$ for which the following holds: If $X_1,\dots,X_m$ denote independent copies of $X$, and $m \ge C \lambda^{-1} \log\left( \frac{1}{\lambda}\right)n,$ then, with probability at least $1 - e^{- c \lambda m}$, we have 
\[
\inf_{u,v \in \S^{n-1}} \left(\frac{1}{m\|u-v\|_2} \sum_{i=1}^m |\Phi_{\lambda}(\langle X_i,u\rangle) - \Phi_{\lambda}(\langle X_i,v\rangle)|^2 \right)^{1/2}\ge c \lambda.
\]
\end{theorem}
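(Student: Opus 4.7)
The plan is to follow the one-bit compressed sensing template of Plan--Vershynin, exploiting the spherical constraint $u,v\in\S^{n-1}$ to upgrade the $\lambda^{3/2}\|u-v\|_2$ lower bound of Theorem~\ref{thm:main-vc} to the $\lambda\|u-v\|_2^{1/2}$ lower bound stated here. The starting point is the pointwise inequality
\[
|\Phi_\lambda(x) - \Phi_\lambda(y)|^2 \;\ge\; 4\lambda^2 \cdot \mathbf{1}_{E_\lambda(x,y)}, \qquad E_\lambda(x,y) \coloneqq \{x\ge\lambda,\ y\le-\lambda\}\cup\{y\ge\lambda,\ x\le-\lambda\},
\]
which reduces the theorem to proving the uniform Bernoulli-selector lower bound $\tfrac{1}{m}\sum_{i=1}^m \mathbf{1}_{E_\lambda(\langle X_i,u\rangle,\langle X_i,v\rangle)} \gtrsim \|u-v\|_2$.

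First I would estimate the expectation of this selector. For $X$ uniformly distributed on $\sqrt{n}\,\S^{n-1}$ the two-dimensional marginal $(\langle X,u\rangle,\langle X,v\rangle)$ is (modulo the standard high-dimensional Gaussian comparison) centered jointly Gaussian with unit variances and correlation $\langle u,v\rangle=1-\tfrac12\|u-v\|_2^2$. Diagonalising via $S=(Z_u+Z_v)/2$ and $D=(Z_u-Z_v)/2$, which are independent with $\Var(D)\asymp\|u-v\|_2^2$, a direct wedge computation yields
\[
\mathbb{P}\bigl(E_\lambda(\langle X,u\rangle,\langle X,v\rangle)\bigr) \;\gtrsim\; \|u-v\|_2
\]
in the relevant regime $\|u-v\|_2\gtrsim\lambda$, where the crossing event is not exponentially suppressed. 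This is the clipped analogue of the Plan--Vershynin angle identity $\mathbb{P}(\mathrm{sign}(\langle X,u\rangle)\neq\mathrm{sign}(\langle X,v\rangle))=\theta(u,v)/\pi$, and is what produces the linear $\|u-v\|_2$ dependence.

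Next I would establish uniform concentration of the empirical selector around its expectation. The class $\{X\mapsto\mathbf{1}_{E_\lambda(\langle X,u\rangle,\langle X,v\rangle)}\}_{(u,v)}$ consists of indicators of intersections of two affine half-spaces, so by the VC-dimension calculus recalled in Section~\ref{sec:prelim} it has VC-dimension $O(n)$. Partitioning the admissible separations $\|u-v\|_2\in[c\lambda,2]$ into $O(\log(1/\lambda))$ dyadic shells on which the selector has expectation $p_k\asymp 2^{-k}$ and variance of the same order, a Bernstein/Gin\'e--Guillou multiplicative ratio inequality (in the spirit of the arguments used for Theorem~\ref{thm:main-vc}) gives $\tfrac{1}{m}\sum_{i=1}^m\mathbf{1}_{E_\lambda}\ge\tfrac12\mathbb{P}(E_\lambda)$ uniformly on the shell, with failure probability $\le e^{-c m p_k}$ once $m p_k\gtrsim n$. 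The worst shell sits at $p_k\asymp\lambda$, yielding the threshold $m\gtrsim\lambda^{-1}n$, and a union bound over the $O(\log(1/\lambda))$ shells supplies the extra logarithmic factor. Multiplying through by $4\lambda^2$ produces the claim.

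The main obstacle is extracting the sharp $\lambda^{-1}\log(1/\lambda)$ sample complexity together with the matching $e^{-c\lambda m}$ exponent: a naive VC/Rademacher bound yields only additive concentration $|\tfrac{1}{m}\sum\mathbf{1}-\mathbb{P}(E_\lambda)|\lesssim\sqrt{n/m}$, which forces $m\gtrsim\lambda^{-2}n$. Achieving the $\lambda^{-1}\log(1/\lambda)$ scaling requires a genuinely multiplicative (Bernstein- or Talagrand-type) empirical process inequality that exploits both the boundedness of the selector and its small variance $\asymp p_k$, together with the dyadic decomposition above. A secondary subtlety is making the spherical-to-Gaussian comparison for the two-dimensional marginal of $X\sim\text{Unif}(\sqrt{n}\,\S^{n-1})$ quantitative enough that wedge probabilities of order $\lambda$ are not overwhelmed by the dimensional error terms.
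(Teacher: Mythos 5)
Your reduction to the crossing event $E_\lambda=\{\langle X,u\rangle\ge\lambda,\ \langle X,v\rangle\le-\lambda\}\cup\{\dots\}$ only works for well-separated pairs, and this leaves a genuine gap: the infimum in the statement runs over \emph{all} $u,v\in\S^{n-1}$, but for $\|u-v\|_2\le C\lambda$ the crossing probability is exponentially suppressed, roughly $\|u-v\|_2\,e^{-c\lambda^2/\|u-v\|_2^2}$, so the selector lower bound $\frac{1}{m}\sum_i\mathds{1}_{E_\lambda}\gtrsim\|u-v\|_2$ is false there and the whole reduction collapses. You acknowledge the restriction (``in the relevant regime $\|u-v\|_2\gtrsim\lambda$'') but never return to close the other regime. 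The paper needs a second, different mechanism for close pairs: it restricts to the samples where \emph{neither} measurement is clipped and the increment direction is seen, i.e., to the event $\bigl\{\left|\langle X,(u-v)/\|u-v\|_2\rangle\right|\ge\theta\bigr\}\cap\{|\langle X,u\rangle|\le\lambda\}\cap\{|\langle X,v\rangle|\le\lambda\}$, and proves (second lemma of Section \ref{sec:one-bit}) that this event has probability $\ge c\lambda$ uniformly over pairs with $\|u-v\|_2\le C\lambda$; on that event $\Phi_\lambda$ acts as the identity and each such sample contributes $\theta^2\|u-v\|_2^2$. You must add this case (or an equivalent device) for the argument to cover the whole sphere.

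For the well-separated regime your plan is essentially the paper's: the same pointwise bound $|\Phi_\lambda(x)-\Phi_\lambda(y)|^2\ge4\lambda^2\mathds{1}_{E_\lambda}$, the same Gaussian wedge computation giving $\mathbb{P}(E_\lambda)\ge c\|u-v\|_2$ when $\|u-v\|_2\ge C\lambda$ (Lemma \ref{lemma:estimates-sets}), and then uniformization. The difference is in the concentration step, which you flag as ``the main obstacle'' rather than resolve. The paper does not use a relative VC/Bernstein ratio inequality over dyadic shells; instead it takes a $c\lambda^2$-net of the ball, applies a Chernoff bound at each net point (which yields the multiplicative failure probability $e^{-c\lambda m}$ directly, since the selector is Bernoulli with mean $\gtrsim\lambda$ at worst), controls the off-net perturbation by Markov plus the sub-gaussian upper frame bound of Proposition \ref{prop:vershynin-subgauss}, and union-bounds over the $e^{Cn\log(1/\lambda)}$ net points --- this is exactly where the $m\gtrsim\lambda^{-1}\log(1/\lambda)n$ threshold comes from. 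Your proposed Gin\'e--Guillou-type route is plausible in principle, but as written it is an announced strategy rather than a proof; the net-plus-Chernoff device is the concrete way the paper gets the sharp exponent.
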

The key point in Theorem~\ref{thm:spherical} is that, if one restricts to $u,v$ in the sphere rather than in the unit ball, the stability constant is improved by a factor of $\lambda^{1/2}$. As a direct consequence of the method of proof of Theorem \ref{thm:spherical}, we are able to obtain the following result about \emph{sparse} and \emph{effectively sparse} recovery: 

\begin{corollary}\label{cor:sparse-rec} Let $\lambda > 0$ be an absolute constant, and let $X$ be a random vector uniformly distributed on the sphere $\sqrt{n} \S^{n-1}.$ Let also $T_s$ denote the subset of the Euclidean \emph{unit sphere} consisting of $s$-sparse vectors and $\Sigma_s$ the set of effectively $s$-sparse vectors. Then there are absolute constants $C,c>0$ for which the following assertions hold for $m$ independent copies $X_1,\ldots,X_m$ of $X$. 

\begin{enumerate} 
\item[\emph{(I)}] If $m \ge C \lambda^{-1} \log\left( \frac{1}{\lambda}\right) s \log\left( \frac{en}{s}\right)$, then with probability at least $1-e^{-c\lambda m} $ we have  
\[
\inf_{u,v \in T_s} \left(\frac{1}{m\|u-v\|_2} \sum_{i=1}^m |\Phi_{\lambda}(\langle X_i,u\rangle) - \Phi_{\lambda}(\langle X_i,v\rangle)|^2\right)^{1/2} \ge c \lambda;
\]

\item[\emph{(II)}] If $m \ge C \lambda^{-3} \log\left( \frac{1}{\lambda}\right) s \log\left( \frac{en}{s}\right)$, then with probability at least $1-e^{-c\lambda^2 m} $ we have  
\[
\inf_{u,v \in \Sigma_s \cap \S^{n-1}} \left(\frac{1}{m\|u-v\|_2} \sum_{i=1}^m |\Phi_{\lambda}(\langle X_i,u\rangle) - \Phi_{\lambda}(\langle X_i,v\rangle)|^2\right)^{1/2} \ge c \lambda.
\]
\end{enumerate}
\end{corollary}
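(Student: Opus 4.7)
The plan is to adapt the method of Theorem \ref{thm:spherical} to the sparse settings, following exactly the way in which Theorem \ref{thm:main-sparse} was derived from Theorem \ref{thm:main-vc}. The improvement of the stability constant from $\lambda^{3/2}$ to $\lambda$ is an intrinsically spherical effect: it comes from the pointwise step in the proof of Theorem \ref{thm:spherical}, where the sphere normalization $\|u\|_2=\|v\|_2=1$ allows one to use anti-concentration more effectively than on the ball. This improvement is insensitive to the restriction of the index set from the full sphere to $T_s$ or $\Sigma_s\cap\S^{n-1}$; only the complexity estimates need to change.

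For Part (I), the cleanest route is a union bound over supports. For each fixed $S\subset[n]$ of size $s$, apply Theorem \ref{thm:spherical} to the $s$-dimensional coordinate subspace $\R^S$: the marginal of the vector $X\sim_d \text{Unif}(\sqrt{n}\S^{n-1})$ on $\R^S$ still enjoys the small ball and anti-concentration estimates used in that proof, and we obtain the desired bound uniformly over $u,v\in\S^{n-1}\cap\R^S$ with probability at least $1-e^{-c\lambda m}$ whenever $m\gtrsim \lambda^{-1}\log(1/\lambda)\cdot s$. A union bound over the $\binom{n}{s}\le (en/s)^s$ supports contributes the extra factor of $s\log(en/s)$ in the sample complexity and is absorbed into the exponent of the failure probability, yielding the stated $m\gtrsim \lambda^{-1}\log(1/\lambda) s\log(en/s)$ and $e^{-c\lambda m}$. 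Alternatively, one could run the VC-dimension argument of Theorem \ref{thm:main-vc} directly on $T_s$, noting that the relevant function class has VC-dimension $\lesssim s\log(en/s)$.

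For Part (II), the finite union over supports is unavailable since $\Sigma_s$ is a convex set whose extreme directions are not combinatorially constrained. I would instead apply Mendelson's small ball method with Gaussian width complexity, as in the proof of Theorem \ref{thm:main-sparse}(II). The pointwise small ball function on the sphere restricted to $\Sigma_s$ is still of order $\lambda\|u-v\|_2^2$, as provided by the argument of Theorem \ref{thm:spherical}, while the complexity is controlled by $w(\Sigma_s\cap\S^{n-1})^2\lesssim s\log(en/s)$. Plugging these into Mendelson's estimate yields the stability constant $c\lambda$; the comparison between the small ball function and the squared Gaussian width in the method introduces two extra powers of $\lambda^{-1}$ in the sample complexity, producing the stated $m\gtrsim \lambda^{-3}\log(1/\lambda) s\log(en/s)$ and the deteriorated probability bound $e^{-c\lambda^2 m}$.

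The main difficulty lies in Part (II), specifically in separating the spherical improvement (responsible for the better $\lambda$ stability constant) from the Gaussian-width-to-small-ball comparison (responsible for the $\lambda^{-2}$ deterioration in sample complexity), and checking that their composition produces the correct exponents both in the number of samples and in the failure probability. This is the same bookkeeping that appears in Theorem \ref{thm:main-sparse}(II), with the only change being the substitution of the ball-level small ball function by its sphere-level counterpart; no genuinely new tool beyond what is developed in the proofs of Theorems \ref{thm:spherical} and \ref{thm:main-sparse} is expected to be required.
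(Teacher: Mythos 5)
Your plan is essentially the paper's own proof: the paper disposes of this corollary in one line, saying it follows by combining the proof of Theorem \ref{thm:spherical} with that of Theorem \ref{thm:main-sparse}, which is precisely the decomposition you describe (spherical pointwise estimates for the improved constant $c\lambda$, sparse/effectively-sparse complexity bounds for the sample count). The only cosmetic divergence is in Part (II), where you invoke ``Mendelson's small ball method with Gaussian width complexity'' whereas the paper runs the same bookkeeping through the Plan--Vershynin $\varepsilon$-net of $\Sigma_s\cap\S^{n-1}$ of log-size $\varepsilon^{-2}s\log(en/s)$ with $\varepsilon\sim\lambda/\sqrt{\log(1/\lambda)}$ — the same mechanism producing the extra $\lambda^{-2}$ and the $e^{-c\lambda^2 m}$ failure probability.
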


In the special case of effectively sparse vectors that are ``far apart", these results may be obtained as a generalization of the local binary embedding in one-bit compressed sensing \cite[Theorem 2.4]{oymak2015near}, which itself improves a line of research pioneered by Plan and Vershynin \cite{plan2013one}\footnote{The result is stated for Gaussians but it can be adapted to the case of the uniform distribution on the sphere.}. As a matter of fact, as a consequence of the method of proof of Corollary \ref{cor:sparse-rec}, we have the following stability result for one-bit recovery: 

\begin{corollary}\label{cor:one-bit-cs} There are absolute constants $C,c>0$ such that the following assertion holds. Let $\delta > 0$ be an absolute constant, and let $X$ denote a random vector uniformly distributed on the sphere $\sqrt{n}\S^{n-1}.$ Let $\Sigma_s$ denote the set of effectively sparse vectors in $\R^n$.  If $X_1,X_2,\dots,X_m$ denote independent copies of $X$ and $m \ge C\cdot \delta^{-3} \log\left( \frac{1}{\delta}\right) s \, \log \left( \frac{en}{s}\right)$, then with probability at least $1 - e^{-c \delta^2 m}$, we have 
\[
\inf_{0<\lambda < \delta} \inf_{\substack{{u,v \in \Sigma_s \cap \S^{n-1}}\\{\|u-v\|_2 \ge \delta}}} \frac{1}{m\|u-v\|_2^2} \sum_{i=1}^m \frac{|\Phi_\lambda(\langle X_i,u\rangle) - \Phi_\lambda(\langle X_i,v\rangle)|^2}{\lambda^2} \ge c. 
\]
In particular, it follows that, for any $m \ge C\cdot \delta^{-3} \log\left( \frac{1}{\delta}\right) s \, \log \left( \frac{en}{s}\right)$, with probability at least $1 - e^{-c \delta^2 m}$, we have
\[
\inf_{\substack{{u,v \in \Sigma_s \cap \S^{n-1}}\\{\|u-v\|_2 \ge \delta}}} \frac{1}{m \|u-v\|_2} \sum_{i=1}^m |\text{\emph{sign}}(\langle X_i,u\rangle) - \text{\emph{sign}}(\langle X_i,v\rangle)|^2 \ge c.
\]
\end{corollary}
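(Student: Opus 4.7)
The plan is to derive the first (clipped) inequality by adapting the method of proof of Corollary~\ref{cor:sparse-rec}(II) so that the small-ball step acts on a $\lambda$-independent event, and then to obtain the sign-version by dominated convergence as $\lambda \to 0^+$. The key new input is a pointwise comparison that transfers the clipped difference quotient to a truncated one-bit counting statistic.

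\textbf{Pointwise bound.} Choose an absolute constant $K \ge 1$ (to be fixed below), set
\[
E(u,v) = \{x \in \R^n : \text{sign}(\langle x, u\rangle) \ne \text{sign}(\langle x, v\rangle),\ |\langle x, u\rangle|,\, |\langle x, v\rangle| \ge \delta/K\},
\]
and observe that for every $\lambda \in (0,\delta)$ and every $x \in \R^n$,
\[
\frac{|\Phi_\lambda(\langle x, u\rangle) - \Phi_\lambda(\langle x, v\rangle)|^2}{\lambda^2} \ge \frac{4}{K^2}\,\mathbb{1}_{E(u,v)}(x).
\]
Indeed, on $E(u,v)$ each of $\min(|\langle x, u\rangle|, \lambda)$ and $\min(|\langle x, v\rangle|, \lambda)$ is at least $\min(\delta/K, \lambda)$, so the clipped values have opposite signs with total absolute difference at least $2\min(\delta/K,\lambda)$. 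Squaring and dividing by $\lambda^2$ gives $\ge 4/K^2$ in both regimes $\lambda \le \delta/K$ and $\lambda > \delta/K$. Summing over $i$, the bound
\[
\inf_{0 < \lambda < \delta}\sum_{i=1}^m \frac{|\Phi_\lambda(\langle X_i, u\rangle) - \Phi_\lambda(\langle X_i, v\rangle)|^2}{\lambda^2} \ge \frac{4}{K^2}\sum_{i=1}^m \mathbb{1}_{E(u,v)}(X_i)
\]
has a $\lambda$-free right-hand side, which is what makes the uniform-in-$\lambda$ statement plausible.

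\textbf{Uniform concentration.} It remains to prove the truncated one-bit bound
\[
\frac{1}{m}\sum_{i=1}^m \mathbb{1}_{E(u,v)}(X_i) \ge c_1 \|u-v\|_2
\]
for all $u, v \in \Sigma_s \cap \S^{n-1}$ with $\|u-v\|_2 \ge \delta$, with probability at least $1 - e^{-c\delta^2 m}$, whenever $m \ge C\delta^{-3}\log(1/\delta)\, s\log(en/s)$. For the expectation, use $P(\text{sign}(\langle X, u\rangle) \ne \text{sign}(\langle X, v\rangle)) = \theta/\pi \ge \|u-v\|_2/\pi$ for the uniform distribution on $\sqrt{n}\S^{n-1}$ together with the marginal small-ball estimate $P(|\langle X, w\rangle| \le \tau) \le C_1 \tau$; inclusion-exclusion gives $P(E(u,v)) \ge \|u-v\|_2/(2\pi)$ once $K \ge 4\pi C_1$. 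The uniform-in-$(u,v)$ concentration is a truncated analogue of the one-bit sparse recovery theorem of Plan--Vershynin and Oymak--Recht, carried out via Mendelson's small-ball method applied to the indicator class $(u,v) \mapsto \mathbb{1}_{E(u,v)}$ on $\Sigma_s \cap \S^{n-1}$, whose Gaussian width is of order $\sqrt{s\log(en/s)}$. This is the main obstacle: one must push the truncation through the small-ball argument without degrading the $\delta^{-3}\log(1/\delta)$ sample complexity of Corollary~\ref{cor:sparse-rec}(II). Combining with the pointwise bound and converting the factor of $\|u-v\|_2$ to $\|u-v\|_2^2$ via $\|u-v\|_2 \le 2$ proves the first inequality.

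\textbf{Sign version by limit.} Finally, fix a realization on which the first inequality holds. For each $i$ and each pair $(u,v)$, $\Phi_\lambda(\langle X_i, u\rangle)/\lambda \to \text{sign}(\langle X_i, u\rangle)$ as $\lambda \to 0^+$, almost surely (since $\langle X_i, u\rangle \ne 0$ a.s.), and each summand of the empirical average is bounded by $4$. Dominated convergence applied to the finite sum therefore yields
\[
\frac{1}{m\|u-v\|_2^2}\sum_{i=1}^m |\text{sign}(\langle X_i, u\rangle) - \text{sign}(\langle X_i, v\rangle)|^2 \ge c.
\]
Since $\|u-v\|_2 \ge \delta$ implies $\|u-v\|_2^2 \ge \delta\|u-v\|_2$, the stated second inequality follows, with the factor of $\delta$ absorbed into the absolute constant (permissible, as $\delta$ is itself absolute).
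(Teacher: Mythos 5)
Your pointwise reduction is correct and is a nice, slightly different way to achieve $\lambda$-uniformity: you pass to the $\lambda$-free event $E(u,v)$ via the bound $|\Phi_\lambda(\langle x,u\rangle)-\Phi_\lambda(\langle x,v\rangle)|\ge 2\min(\delta/K,\lambda)$, whereas the paper keeps the event $\{\langle X_i,u\rangle\le-\lambda,\ \langle X_i,v\rangle\ge\lambda\}$ and only makes it $\lambda$-free after passing to net points with thresholds shifted by $\delta$ (using $\lambda<\delta$). The expectation estimate $\mathbb{P}(E(u,v))\gtrsim\|u-v\|_2$ and the limiting argument for the sign version are also fine (the paper takes the same limit, and the loss of a factor $\|u-v\|_2\ge\delta$ when passing from the squared to the unsquared normalization is treated equally loosely there).

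The genuine gap is the step you yourself label ``the main obstacle'': the uniform lower bound $\frac{1}{m}\sum_{i=1}^m\mathds{1}_{E(u,v)}(X_i)\ge c_1\|u-v\|_2$ over all pairs $u,v\in\Sigma_s\cap\S^{n-1}$ with $\|u-v\|_2\ge\delta$, with probability $1-e^{-c\delta^2m}$ at sample complexity $m\gtrsim\delta^{-3}\log(1/\delta)\,s\log(en/s)$. This is the entire content of the corollary, and asserting that it follows from ``Mendelson's small-ball method applied to the indicator class'' does not constitute a proof: the small-ball method lower-bounds marginals of linear/quadratic forms, not suprema of empirical processes indexed by indicators, and the paper explicitly notes that VC-type bounds fail to capture effective sparsity, which is why it instead runs a net argument. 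Concretely, the paper takes a $c\delta/\sqrt{\log(1/\delta)}$-net of $\Sigma_s\cap\S^{n-1}$ (Lemma \ref{lemma:epsilon-effect-sparse}), sandwiches the event at $(u,v)$ between a stricter event at net points $(u_0,v_0)$ (handled by Lemma \ref{lemma:estimates-sets} plus a Chernoff/union bound over the net, which is where the exponent $\delta^2 m$ and the factor $\delta^{-3}\log(1/\delta)$ come from) and the complement of $\{|\langle X_i,u-u_0\rangle|\ge\delta\}\cup\{|\langle X_i,v-v_0\rangle|\ge\delta\}$, the latter controlled by Markov plus the sub-gaussian upper frame bound of Proposition \ref{prop:vershynin-subgauss}. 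Your event $E(u,v)$, having $\delta/K$ buffers, is in fact amenable to exactly this sandwich (replace $\delta/K$ by $2\delta/K$ at the net points and absorb perturbations of size $\delta/K$), but you must carry that argument out; without it the proof is incomplete at its critical juncture.
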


This manuscript is organized as follows. In Section \ref{sec:prelim}, we collect the main facts that we shall need for the remainder of the manuscript from both the theory of empirical processes and VC-dimension. Then, we prove the stable recovery-type results for general distributions (Theorems \ref{General distributions}, \ref{thm:mod-sharp-try} and \ref{thm:main-sparse}) in Section \ref{sec:general-results}. The results about sharp recovery in the case of $X$ distributed as the uniform measure on $\sqrt{n}\S^{n-1}$ (and generalizations thereof) are then proved in Section \ref{sec:declipping}, and finally, the results pertaining to sparse recovery and its applications to one-bit compressive sensing are proved in Section \ref{sec:one-bit}. 

\section{Preliminaries}\label{sec:prelim}

\subsection{Empirical processes and concentration inequalities} We start by stating some general, basic facts about concentration inequalities. For brevity, we do not provide a proof of any of these facts here, and instead refer the reader to the respective references. 

The first fact that we highlight is the so-called \emph{bounded difference inequality}, which stresses that, whenever a function has bounded differences in each coordinate, then it satisfies a good concentration inequality. To this end, we say that a function $f: \R^n \to \R$ satisfies the bounded difference condition if for every index $i$ and any $x_1,\ldots,x_n$ and $x_i'$, we have that 
\[
|f(x_1,\ldots,x_i,\ldots,x_n)-f(x_1,\ldots,x_i',\ldots,x_n)|\le c_i,
\]
for some constant $c_i>0$.
\begin{proposition}[Bounded difference inequality; Theorem~2.9.1~in~\cite{vershynin2018high}]\label{prop:bounded-differences} Let $X_1, \ldots, X_n$ be independent random variables. Let $f: \R^n \to \R$ be a measurable function satisfying the bounded difference condition with constants $c_1,\ldots,c_n$. Then, for any $t>0$, we have
$$
\mathbb{P}\left( f(X_1,\ldots,X_n)-\mathbb{E} f(X_1,\ldots,X_n) \geq t\right) \leq \exp \left(-\frac{2 t^2}{\sum_{i=1}^n c_i^2}\right).
$$
\end{proposition}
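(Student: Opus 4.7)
The plan is to prove the inequality via the classical martingale approach, namely by constructing the Doob martingale associated with $f(X_1,\ldots,X_n)$ and applying the Azuma--Hoeffding inequality. Throughout, let $\mathcal{F}_i = \sigma(X_1,\ldots,X_i)$ with $\mathcal{F}_0$ the trivial $\sigma$-algebra, and define the conditional expectations $Y_i = \mathbb{E}[f(X_1,\ldots,X_n)\mid \mathcal{F}_i]$. The sequence $\{Y_i\}_{i=0}^n$ is a martingale with $Y_0 = \mathbb{E}f$ and $Y_n = f(X_1,\ldots,X_n)$, so the telescoping identity
\[
f(X_1,\ldots,X_n) - \mathbb{E}f(X_1,\ldots,X_n) \;=\; \sum_{i=1}^n D_i, \qquad D_i := Y_i - Y_{i-1},
\]
presents the deviation as a sum of martingale differences. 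First I would verify the martingale property $\mathbb{E}[D_i\mid \mathcal{F}_{i-1}] = 0$, which is automatic from the tower property.

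The crux of the argument is to show that each difference $D_i$ lies in an interval of length at most $c_i$, conditionally on $\mathcal{F}_{i-1}$. Using the independence of $X_1,\ldots,X_n$, define
\[
g_i(x_1,\ldots,x_i) \;=\; \mathbb{E}\bigl[f(x_1,\ldots,x_i,X_{i+1},\ldots,X_n)\bigr],
\]
so that $Y_i = g_i(X_1,\ldots,X_i)$ and $Y_{i-1} = \mathbb{E}_{X_i}[g_i(X_1,\ldots,X_{i-1},X_i)]$. The bounded difference hypothesis transfers to $g_i$ in its $i$-th argument, yielding constants
\[
A_i \;=\; \inf_{x}\bigl[g_i(X_1,\ldots,X_{i-1},x) - Y_{i-1}\bigr], \qquad B_i \;=\; \sup_{x}\bigl[g_i(X_1,\ldots,X_{i-1},x) - Y_{i-1}\bigr]
\]
which are $\mathcal{F}_{i-1}$-measurable, satisfy $A_i \le D_i \le B_i$, and, most importantly, obey $B_i - A_i \le c_i$ pointwise.

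Next I would invoke Hoeffding's lemma in its conditional form: for any random variable $Z$ with $\mathbb{E}[Z\mid \mathcal{G}] = 0$ and $Z \in [A,B]$ conditionally on $\mathcal{G}$, one has $\mathbb{E}[e^{\lambda Z}\mid \mathcal{G}] \le \exp(\lambda^2 (B-A)^2/8)$. Applied with $\mathcal{G} = \mathcal{F}_{i-1}$, this gives
\[
\mathbb{E}\bigl[e^{\lambda D_i}\mid \mathcal{F}_{i-1}\bigr] \;\le\; \exp\!\left(\frac{\lambda^2 c_i^2}{8}\right).
\]
Iteratively conditioning from $i=n$ down to $i=1$ yields the moment generating function bound $\mathbb{E}\exp(\lambda \sum_i D_i) \le \exp(\lambda^2 \sum_i c_i^2 / 8)$, and then a standard Markov/Chernoff optimization over $\lambda > 0$ produces the claimed exponent $2t^2/\sum_i c_i^2$.

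The main obstacle, and the only real subtlety, lies in the bound $B_i - A_i \le c_i$; the rest is a mechanical application of Azuma--Hoeffding. This step requires using independence to replace conditional expectations by integrals against the marginal laws and then invoking the bounded difference condition inside the integrand in a uniform way over the remaining coordinates. Once this is in place, no additional structural information on the distributions of the $X_i$ is needed, which explains why the bounded difference inequality is so robust and widely applicable in the empirical process arguments used later in the paper.
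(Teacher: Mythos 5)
Your proof is correct: the Doob martingale decomposition, the pointwise bound $B_i-A_i\le c_i$ obtained by pushing the bounded difference condition through the expectation over $X_{i+1},\ldots,X_n$, the conditional Hoeffding lemma, and the Chernoff optimization at $\lambda=4t/\sum_i c_i^2$ together yield exactly the stated exponent. The paper gives no proof of this proposition (it is quoted from Vershynin's book, whose proof is the same Azuma--Hoeffding argument), so your approach coincides with the standard one in the cited reference.
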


Throughout this manuscript, especially in the proof of Theorem \ref{thm:main-sparse}, it will be useful to know what it means for a random variable to be \emph{sub-gaussian}:

\begin{definition}[Sub-gaussian norm $\psi_2$] The \emph{sub-gaussian norm} of a random variable $Z$ is defined by
\[
\|Z\|_{\psi_2} \coloneqq \inf \left\{ t>0 \colon \,\,\mathbb{E} \exp\left(Z^2/t^2\right) \le 2 \right\}.
\]
A random variable $Z$ is called \emph{sub-gaussian} if $\|Z\|_{\psi_2}$ is finite. A random vector $X \in \R^n$ is sub-gaussian if there exists a $K>0$ for which
\begin{equation*}
    \sup_{u\in \S^{n-1}}\|\langle X,u\rangle\|_{\psi_2}\le K.
\end{equation*}
The smallest $K>0$ for which such estimate holds is called the sub-gaussian norm of the random vector $X$.
\end{definition}

Regarding sub-gaussian random vectors, we will always assume that the sub-gaussian norm is an absolute constant. We need, in that direction, the following standard result about sub-gaussian random vectors.

\begin{proposition}[Theorem~4.6.1 in \cite{vershynin2018high}]\label{prop:vershynin-subgauss} Let $X_1,\dots,X_m$ be independent, isotropic, mean zero sub-gaussian random vectors. Then, with probability at least $1 - e^{-c m}$, we have 
\begin{equation}\label{eq:upper-frame-bound} 
\frac{1}{m} \sum_{i=1}^m |\langle X_i, w\rangle|^2 \le 2 \|w\|_2^2,  
\end{equation} 
for all $w \in \R^n.$
\end{proposition}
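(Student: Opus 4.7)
The plan is to reduce the statement to a uniform bound on the Euclidean unit sphere and then combine a pointwise sub-exponential concentration inequality with a standard $\varepsilon$-net argument. By homogeneity in $w$, the claim is equivalent to
\begin{equation*}
\sup_{w \in \S^{n-1}} \frac{1}{m} \sum_{i=1}^m |\langle X_i, w\rangle|^2 \le 2
\end{equation*}
holding with probability at least $1-e^{-cm}$, which is an upper bound on the operator norm of the symmetric matrix $T := \tfrac{1}{m}\sum_{i=1}^m X_i X_i^{\top}$.

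For the pointwise step, I would fix $w \in \S^{n-1}$ and observe that, by isotropy and the sub-gaussian assumption, the variables $Z_i := |\langle X_i, w\rangle|^2 - 1$ are independent, mean-zero, and sub-exponential with norm controlled by the sub-gaussian norm of $X$. Bernstein's inequality for sums of sub-exponential random variables then yields an absolute constant $c_0>0$ such that, for every $t \in (0,1)$,
\begin{equation*}
\PP\!\left( \left| \frac{1}{m}\sum_{i=1}^m |\langle X_i, w\rangle|^2 - 1 \right| > t \right) \le 2 \exp(-c_0 t^2 m).
\end{equation*}

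Next, I would combine this tail bound with a net argument. Fix $\varepsilon = 1/8$ and let $\mathcal{N}$ be a minimal $\varepsilon$-net of $\S^{n-1}$; standard volumetric estimates give $|\mathcal{N}| \le (1 + 2/\varepsilon)^n = 17^n$. Applying the concentration bound above with $t = 1/2$ and taking a union bound over $\mathcal{N}$ yields $\sup_{w \in \mathcal{N}} \tfrac{1}{m}\sum_i |\langle X_i, w\rangle|^2 \le 3/2$ except on an event of probability $2 \cdot 17^n \exp(-c_0 m/4)$. Choosing the constant $C$ in the hypothesis $m \ge C n$ large enough (concretely $C > 8\log(17)/c_0$) absorbs the net cardinality into the exponent, producing the desired $e^{-cm}$ tail. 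To pass from $\mathcal{N}$ to the full sphere I would invoke the classical symmetric-operator net lemma $\|T\| \le (1-2\varepsilon)^{-1} \sup_{w \in \mathcal{N}} \langle Tw,w\rangle$, which with $\varepsilon = 1/8$ upgrades the bound $3/2$ on $\mathcal{N}$ to exactly $(3/2)/(3/4) = 2$ on all of $\S^{n-1}$.

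The argument is entirely standard; I do not anticipate any substantive obstacle beyond constant bookkeeping. The only subtlety is tuning the triple $(\varepsilon, t, C)$ so that the net approximation factor $(1-2\varepsilon)^{-1}$ multiplied by the pointwise bound $1+t$ lands precisely on the constant $2$ in the statement, while simultaneously the geometric $|\mathcal{N}|$ factor is dominated by the exponential tail. The choices above were selected for this reason.
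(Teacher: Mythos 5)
Your proof is correct and is essentially the standard argument behind the cited Theorem~4.6.1 of Vershynin (Bernstein's inequality for the centered sub-exponential variables $|\langle X_i,w\rangle|^2-1$, a union bound over a $1/8$-net of the sphere, and the symmetric-operator net lemma to pass from the net to all of $\S^{n-1}$); the paper itself does not reprove this proposition but simply cites that reference. One remark: the condition $m \ge Cn$ that you invoke as a ``hypothesis'' is genuinely needed (the statement is false as literally written for, say, $m=1$ and $n$ large, since then $\frac{1}{m}\sum_i|\langle X_i,w\rangle|^2$ can be of order $n$ for $w = X_1/\|X_1\|_2$), but it does not appear in the proposition as stated in the paper --- this is an imprecision in the statement rather than a gap in your argument, and it is harmless because every application in the paper takes $m \gtrsim n$.
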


\vspace{2mm}

\subsection{VC dimension bounds and a quantitative laws of large numbers} In this subsection, we will explore the main facts pertaining to the theory of VC-dimension of Boolean functions\footnote{There is a natural correspondence between VC dimension of sets and Boolean functions via $\{x:f(x)=1\}$.}. In heuristic terms, one may think of the VC-dimension of a class of Boolean functions as a measure of complexity of the class under consideration, in the sense that, informally, it plays the role of covering numbers for the class of Boolean functions. 

In order to define the VC-dimension, we need to first define what it means for a class of Boolean functions to shatter a given set. Note that, intuitively, this means that the class of functions separates the points of $S$ into binary ``buckets'' as well as one wishes. 

\begin{definition} Let $S \subset \R^n$ be a finite set, and let $\mathcal{F}$ denote a class of Boolean functions defined on $\R^n$. We say that the class $\mathcal{F}$ \emph{shatters} the set $S$ if, given any function $g:S \to \{0,1\}$, there is $f \in \mathcal{F}$ such that $f|_S = g$. 
\end{definition}

With the definition of shattering in hand, the definition of VC-dimension of a class is simply the cardinality of the largest set that the class shatters.

\begin{definition}[VC-dimension] Let $\mathcal{F}$ denote a class of Boolean functions defined on $\R^n$. We define the \emph{Vapnik-Chervonenkis (VC) dimension} $vc(\mathcal{F})$ of the class $\mathcal{F}$ as the largest positive integer $n$ for which there is a set $S \subset \R^n$ with cardinality $n$ that is shattered by $\mathcal{F}$.  
\end{definition}

In this present work, the concept of VC-dimension will mainly enter our considerations through one specific avenue, which is in a quantified version of the law of large numbers for classes of Boolean functions. The main tool which allows us to draw such a connection is the content of the next proposition. 

\begin{proposition}[Theorem~8.3.23~in~\cite{vershynin2018high}]\label{prop:vc-quantif-lln} Let $\mathcal{F}$ be a class of Boolean functions and let $X_1,\ldots,X_m$ be i.i.d.~copies of a random vector $X \in \R^n$. Then there exists a constant $C>0$ for which
\begin{equation*}
\mathbb{E}\sup_{f\in \mathcal{F}}\left|\frac{1}{m}\sum_{i=1}^m f(X_i)-\mathbb{E}f(X)\right|\le C\sqrt{\frac{vc(\mathcal{F})}{m}}.
\end{equation*}
\end{proposition}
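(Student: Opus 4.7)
The plan is to reduce the supremum of the centered empirical process over $\mathcal{F}$ to a supremum of a Rademacher process indexed by a finite set of binary vectors, and then control the latter by the VC dimension via a chaining argument paired with Haussler's covering bound.

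First I would apply the standard symmetrization trick: introduce a ghost sample $X_1',\ldots,X_m'$ of independent copies of $X$ and independent Rademacher signs $\varepsilon_1,\ldots,\varepsilon_m$, and use Jensen's inequality together with the fact that $f(X_i)-f(X_i')$ is symmetrically distributed to obtain
\[
\mathbb{E}\sup_{f\in\mathcal{F}}\left|\frac{1}{m}\sum_{i=1}^m f(X_i)-\mathbb{E}f(X)\right| \le 2\,\mathbb{E}\sup_{f\in\mathcal{F}}\left|\frac{1}{m}\sum_{i=1}^m \varepsilon_i f(X_i)\right|.
\]
Conditionally on $X_1,\ldots,X_m$, the right-hand side is the supremum of a Rademacher process indexed by the set of binary vectors $T_m \coloneqq \{(f(X_1),\ldots,f(X_m)):f\in\mathcal{F}\}\subset\{0,1\}^m$, so it is enough to control this conditional supremum.

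Next I would view $T_m$ as a subset of $\R^m$ equipped with the (normalized) Euclidean metric $d_m(s,t)=\frac{1}{\sqrt m}\|s-t\|_2$, which coincides with the empirical $L^2$ pseudo-metric on $\mathcal{F}$. The Rademacher process has sub-gaussian increments in this metric, so Dudley's entropy integral yields
\[
\mathbb{E}_\varepsilon\sup_{t\in T_m}\left|\frac{1}{m}\sum_{i=1}^m \varepsilon_i t_i\right| \le \frac{C}{\sqrt m}\int_0^1 \sqrt{\log N(T_m,d_m,\eta)}\,d\eta,
\]
where $N(T_m,d_m,\eta)$ is the covering number at scale $\eta$ (one uses the trivial diameter bound $\mathrm{diam}(T_m,d_m)\le 1$ to truncate the integral). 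The key input is then Haussler's bound, which states that for any class $\mathcal{F}$ of Boolean functions with $\vc(\mathcal{F})=d$ and any empirical measure, $N(\mathcal{F},L^2(\mu_m),\eta)\le (C/\eta)^{Cd}$. Plugging this in and integrating $\sqrt{d\log(C/\eta)}$ from $0$ to $1$ gives a bound of order $\sqrt{d/m}$, which is exactly the desired estimate after taking expectations over $X_1,\ldots,X_m$.

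The main obstacle is Haussler's packing/covering bound, as it is the genuinely non-trivial ingredient that eliminates what would otherwise be a spurious $\sqrt{\log m}$ factor coming from a direct Sauer--Shelah plus Massart finite-class argument. Proving Haussler's bound requires the Sauer--Shelah lemma applied not to $\mathcal{F}$ itself but to a carefully constructed coordinate projection, combined with a probabilistic extraction argument to convert an $\eta$-separated family into a shattered set of size proportional to $d\log(1/\eta)$. Since the proposition is cited from Vershynin's textbook, I would either invoke Haussler's bound as a black box or carry out this extraction argument and then complete the calculation via Dudley integration as sketched above.
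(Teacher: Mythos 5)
The paper does not prove this proposition at all: it is quoted verbatim from the cited reference (Theorem~8.3.23 in Vershynin's book), and your argument --- symmetrization, Dudley's entropy integral for the conditional Rademacher process in the empirical $L^2$ metric, and Haussler's covering bound for VC classes --- is exactly the standard proof given there (Vershynin derives the covering bound via Sauer--Shelah plus a random dimension-reduction step, which is the extraction argument you describe). Your proposal is correct and identifies the genuinely non-trivial ingredient (the $\log$-free covering bound) accurately.
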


It is also useful to state the following result on the VC-dimension of classes of intersections and unions. This is the content of the next result. 

\begin{proposition}[Theorem~1.1~in~\cite{Wellner-vdV}]\label{prop:intersection} Let $\mathcal{F}_1,\mathcal{F}_2,\dots,\F_k$ denote classes of Boolean functions, each of which with respective VC-dimensions $v_1,v_2,\dots,v_k.$ 
     Let 
    $$\F = \{ \mathds{1}(S_1 \cap S_2 \cap \cdots \cap S_k) \, \colon \mathds{1}(S_i) \in \F_i, \, i=1,2,\dots,k\}$$
    and 
    $$\mathcal{G} = \{ \mathds{1}(S_1 \cup S_2 \cup \cdots \cup S_k) \, \colon \mathds{1}(S_i) \in \F_i, \, i=1,2,\dots,k\}.$$
    Then we have 
    \[
    \max\{ vc(\F),vc(\mathcal{G})\} \le c_1 \cdot \left( \sum_{i=1}^k v_i \right) \log(c_2 \cdot k),
    \]
    where $c_1,c_2>0$ are two absolute constants.
\end{proposition}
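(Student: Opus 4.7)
The plan is to reduce the claim to the classical Sauer--Shelah lemma, which bounds the number of sign patterns a VC-class can realize on a finite set. First I would fix a set of $N$ points that is shattered by $\mathcal{F}$. Since every element of $\mathcal{F}$ is an intersection $\mathds{1}(S_1 \cap \cdots \cap S_k)$ with $\mathds{1}(S_i) \in \mathcal{F}_i$, its trace on this set is completely determined by the traces of the individual $\mathds{1}(S_i)$. By Sauer--Shelah, each class $\mathcal{F}_i$ realizes at most $(eN/v_i)^{v_i}$ distinct traces on $N$ points (provided $N \ge v_i$), so counting the number of possible intersection-patterns yields
\begin{equation*}
2^N \le \prod_{i=1}^k \left(\frac{eN}{v_i}\right)^{v_i}.
\end{equation*}

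Next I would take logarithms and set $V = \sum_i v_i$. By concavity of the logarithm applied with the weights $v_i/V$ (Jensen's inequality), one has $\sum_i v_i \log(eN/v_i) \le V \log(ekN/V)$, so the previous display simplifies to
\begin{equation*}
N \log 2 \le V \log\!\left(\frac{ekN}{V}\right).
\end{equation*}
Writing $x = N/V$, this reads $x \log 2 \le \log(ekx)$. A direct substitution of the ansatz $x = c_1 \log(c_2 k)$ shows that, for suitable absolute constants $c_1, c_2$, this inequality fails whenever $x$ exceeds $c_1 \log(c_2 k)$. Consequently $\vc(\mathcal{F}) \le c_1 V \log(c_2 k)$, as claimed.

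For the union class $\mathcal{G}$, I would invoke De Morgan's law to write $S_1 \cup \cdots \cup S_k = (S_1^c \cap \cdots \cap S_k^c)^c$. Since complementing every set in a class leaves its VC-dimension unchanged, each complementary class $\{\mathds{1}(S_i^c)\}$ has VC-dimension $v_i$; applying the intersection bound and complementing once more preserves the VC-dimension, so the same estimate transfers to $\mathcal{G}$ with identical constants.

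The principal technical obstacle is the inversion of the implicit inequality $x \log 2 \le \log(ekx)$: while elementary, it requires some bookkeeping to extract clean absolute constants uniformly in $k$. One should also dispatch the easy edge case where some $v_i$ exceeds $N$ (so that Sauer--Shelah cannot be used in the sharp form $(eN/v_i)^{v_i}$); in that regime $N \le v_i \le V$ and the desired estimate is automatic.
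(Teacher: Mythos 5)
Your proof is correct, and since the paper offers no proof of its own for this proposition (it simply cites Theorem~1.1 of van der Vaart--Wellner), your Sauer--Shelah counting argument is essentially the standard one carried out in that reference: bound the number of intersection-traces on a shattered $N$-point set by the product of the individual Sauer--Shelah bounds, apply Jensen to consolidate the exponents, and invert the resulting implicit inequality; the De Morgan reduction for unions is also the standard step. The only loose point is the phrase ``a direct substitution of the ansatz shows the inequality fails whenever $x$ exceeds $c_1\log(c_2 k)$'' --- one also needs the monotonicity of $x\mapsto x\log 2-\log x$ for $x$ large to rule out all larger $x$, but this is exactly the elementary bookkeeping you already flag.
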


We refer the reader to \cite{Wellner-vdV} for a proof of this fact. We also highlight another consequence of the techniques of proof of Proposition \ref{prop:intersection} contained in the reference \cite{Wellner-vdV} from before, which is the following bound on the VC-dimension of the \emph{union} of several VC-classes: 

\begin{proposition}[Lemma~1 in \cite{depersin2024robust}]\label{prop:union-vc} Suppose $\mathcal{F}_i, \, i=1,\dots,k$, are classes of Boolean functions such that $vc(\mathcal{F}_i) \le \nu, \, \forall \, i = 1,\dots,k.$ Then 
\[
vc(\F_1 \cup \F_2 \cup \cdots \cup \F_k) \le 2 \nu + \log(k).
\]
\end{proposition}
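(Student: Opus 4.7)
The plan is to prove this by a standard shatter-function argument based on the Sauer--Shelah lemma and a union bound, in the spirit of classical VC-dimension theory.

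First, I would recall (or prove in a short preliminary step) the Sauer--Shelah lemma: if $\mathcal{G}$ is a class of Boolean functions with $vc(\mathcal{G}) \le \nu$, then for any finite set $S$ of cardinality $m \ge \nu$ the number of distinct restrictions satisfies
\[
\bigl|\{f|_S \colon f \in \mathcal{G}\}\bigr| \le \sum_{j=0}^{\nu} \binom{m}{j} \le \left(\frac{em}{\nu}\right)^\nu.
\]
This is the workhorse behind virtually every upper bound on VC-dimension, and it is the only nontrivial ingredient I would need beyond the definition.

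Next, suppose for contradiction that there is a set $S \subset \R^n$ of size $m$ that is shattered by $\mathcal{F} \coloneqq \mathcal{F}_1 \cup \cdots \cup \mathcal{F}_k$. By definition, every one of the $2^m$ Boolean functions $S \to \{0,1\}$ arises as $f|_S$ for some $f \in \mathcal{F}$. Since each such $f$ belongs to at least one $\mathcal{F}_i$, the union bound combined with Sauer--Shelah gives
\[
2^m \;\le\; \sum_{i=1}^k \bigl|\{f|_S \colon f \in \mathcal{F}_i\}\bigr| \;\le\; k \left(\frac{em}{\nu}\right)^\nu.
\]
Taking logarithms (base $2$) yields the implicit inequality
\[
m \;\le\; \log_2 k + \nu \log_2\!\left(\frac{em}{\nu}\right).
\]

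The final step, and the main technical obstacle, is to convert this implicit inequality into an explicit linear bound of the form $m \le 2\nu + \log k$ (up to the absolute constants hidden in the statement). I would do this by substituting the candidate bound $m_0 = 2\nu + \log_2 k$ into the right-hand side and checking that any $m > m_0$ would violate the inequality, using the fact that the function $m \mapsto m - \nu \log_2(em/\nu)$ is eventually increasing and using a careful case split depending on whether $\log_2 k$ is small or large compared to $\nu$. This ``inverting Sauer--Shelah'' step is routine but is where the precise constants (and the correct interpretation of the $\log$, whether base $2$ or natural) get pinned down; it is the part of the argument that one must write out with care, while all of the conceptual content sits in the union-bound inequality above.
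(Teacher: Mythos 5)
The paper itself gives no proof of this proposition (it is quoted from the cited reference), so the only question is whether your argument stands on its own. Your first step is correct and standard: if $S$ with $|S|=m$ is shattered by $\F_1\cup\cdots\cup\F_k$, then $2^m\le\sum_{i=1}^k|\{f|_S:f\in\F_i\}|\le k\sum_{j=0}^{\nu}\binom{m}{j}$. The genuine gap is the step you defer as ``routine'': this implicit inequality \emph{cannot} be inverted to $m\le 2\nu+\log_2 k$. With the crude form $k(em/\nu)^{\nu}$, plugging in $m=2\nu$ gives $2^{2\nu}\le k\,(2e)^{\nu}=k\cdot 2^{(1+\log_2 e)\nu}$, which holds for \emph{every} $k\ge 1$; so this form cannot even recover the constant $2$ in the trivial case $k=1$ (it yields roughly $m\le 3\nu$ there). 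With the exact binomial sum the situation is better but still insufficient: writing $m=2\nu+s$, one has $\sum_{j\le\nu}\binom{m}{j}=2^{m}\,\PP(B\le\nu)$ with $B\sim\mathrm{Bin}(2\nu+s,1/2)$, and since $B$ has mean $\nu+s/2$ and standard deviation of order $\sqrt{\nu}$, this probability remains bounded below by an absolute constant for all $s\lesssim\sqrt{\nu}$. Hence $2^m\le k\sum_{j\le\nu}\binom{m}{j}$ persists up to $m\approx 2\nu+c\sqrt{\nu\log_2 k}$, and the best bound the counting argument can ever produce is of the form $2\nu+C\bigl(\sqrt{\nu\log_2 k}+\log_2 k\bigr)$, not $2\nu+\log_2 k$.

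This is not a matter of writing the inversion more carefully: the shatter-function union bound is essentially sharp here, and in fact the literal constant in the statement cannot be reached by any argument, because the statement as written fails. Take the $16$ Hamming balls of radius $1$ centered at the codewords of the perfect $[7,4]$ Hamming code in $\{0,1\}^7$, viewed as classes of Boolean functions on a $7$-point subset of $\R^n$ (extended by zero elsewhere). Each ball has VC-dimension $1$ — its traces on any pair of points omit the pattern in which both bits of the center are flipped — while the union of the balls is all of $\{0,1\}^7$ and therefore shatters a set of size $7$; yet $2\nu+\log_2 k=2+4=6<7$. So you should not aim for the literal constant. What your approach does prove, once the inversion is carried out explicitly, is a bound of the form $vc(\F_1\cup\cdots\cup\F_k)\le C(\nu+\log_2 k)$ for an absolute constant $C$, which is all the paper ever uses (it only needs $vc\lesssim s\log(en/s)$); I would state and prove that version, and flag that the constants in the quoted proposition are too optimistic.
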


As a direct application of Propositions \ref{prop:vc-quantif-lln} and \ref{prop:intersection} to the case of half-spaces, we obtain the following Lemma: 

\begin{lemma}
\label{lem:VC_half-spaces}
Let $\mathcal{F}$ be a class of indicator functions $f:\R^n\rightarrow \{0,1\}$ of intersection of $k$ half-spaces, for some positive integer $k$. If $X_1,\ldots,X_m$ are i.i.d.~copies of a random vector $X \in \R^n$ then there exists a constant $C(k)$ for which
\begin{equation*}
\mathbb{E}\sup_{f\in \mathcal{F}}\left|\frac{1}{m}\sum_{i=1}^m f(X_i)-\mathbb{E}f(X)\right|\le C(k)\sqrt{\frac{n}{m}}.
\end{equation*}
\end{lemma}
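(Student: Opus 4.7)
The plan is to combine the two preliminary VC-theoretic results stated in this section: the quantitative law of large numbers (Proposition \ref{prop:vc-quantif-lln}) and the bound on VC-dimension of intersections (Proposition \ref{prop:intersection}). In particular, once we control $vc(\mathcal{F})$, the estimate we seek follows by a direct substitution.

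The key ingredient is the classical fact that the class $\mathcal{H}_n$ of indicators of affine half-spaces $\{x\in\R^n : \langle a,x\rangle \ge b\}$, with $a\in\R^n$ and $b\in\R$, has VC-dimension exactly $n+1$. This is a standard textbook fact (see, e.g., \cite[Section 8.3]{vershynin2018high}) whose upper bound is a one-line application of Radon's theorem. Every element of $\mathcal{F}$ is of the form $\mathds{1}(S_1 \cap \cdots \cap S_k)$ where $\mathds{1}(S_i) \in \mathcal{H}_n$ for each $i$, so Proposition \ref{prop:intersection} with $\mathcal{F}_1 = \cdots = \mathcal{F}_k = \mathcal{H}_n$ yields
\[
vc(\mathcal{F}) \le c_1 \cdot k(n+1) \log(c_2 k) \le C'(k) \cdot n,
\]
for a constant $C'(k)$ depending only on $k$.

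Plugging this into Proposition \ref{prop:vc-quantif-lln} gives
\[
\mathbb{E}\sup_{f \in \mathcal{F}}\left| \frac{1}{m}\sum_{i=1}^m f(X_i) - \mathbb{E} f(X)\right| \le C \sqrt{\frac{vc(\mathcal{F})}{m}} \le C \sqrt{C'(k)} \cdot \sqrt{\frac{n}{m}},
\]
which is the claimed bound with $C(k) = C\sqrt{C'(k)}$. There is no real obstacle here: all the heavy lifting has been done in the preliminaries, and the only ingredient not explicitly stated there is the VC-dimension bound for half-spaces, which can be either invoked as a standard result or derived in a single line from Radon's theorem.
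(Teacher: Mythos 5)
Your proof is correct and follows essentially the same route as the paper: bound $vc(\mathcal{F})$ via the half-space VC-dimension $n+1$ together with Proposition \ref{prop:intersection}, then plug into Proposition \ref{prop:vc-quantif-lln}. The only cosmetic difference is that you apply Proposition \ref{prop:intersection} once to all $k$ classes simultaneously, whereas the paper applies it iteratively $\lceil \log_2 (k)\rceil$ times; both yield a bound of the form $C'(k)\,n$ and hence the same conclusion.
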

\begin{proof}
By Proposition \ref{prop:vc-quantif-lln}, we know that 
\begin{equation*}
\mathbb{E}\sup_{f\in \mathcal{F}}\left|\frac{1}{m}\sum_{i=1}^m f(X_i)-\mathbb{E}f(X)\right|\le C\sqrt{\frac{vc(\mathcal{F})}{m}}.
\end{equation*}
All that remains is to estimate the VC-dimension of $\mathcal{F}$.  It is well-known that the VC-dimension of indicator functions of half-spaces in $\R^n$ is $n+1$.  Thus, the result follows by iteratively applying  Proposition \ref{prop:intersection} $\lceil \log_2(k)\rceil$ times. 
\end{proof}
\begin{remark}
We will need to apply the result for $k$ constant in the proof of Theorem \ref{thm:main-vc}, so the correct dependence with respect to $k$ only affects a multiplicative absolute constant in our final estimate, and may be neglected. 
\end{remark}

\section{Declipping and unlimited sampling for general distributions}\label{sec:general-results}

In this section, we shall prove the first four main results of this manuscript, dealing mainly with the questions of \emph{general} conditions such that random vectors satisfy stable recovery in the contexts of declipping and unlimited sampling. 

\subsection{Declipping for general distributions} We start the discussion of this section with a proof of our first main result for general distributions.

\begin{proof}[Proof of Theorem \ref{General distributions}] We divide our proof into two main cases: 

\vspace{1mm}

\noindent\textbf{Case 1:} Suppose that 
\begin{equation}\label{eq:prob-strips} 
\inf_{u,v \in \S^{n-1}} \mathbb{P}(|\langle X,u\rangle| \le \lambda, |\langle X,v\rangle| \le \lambda) \ge c'(\lambda). 
\end{equation} 
That is, the unclipped vectors are ``non-negligible''. Let then $$\mathcal{E}^1(u,v) = \left\{ |\langle X,u \rangle| \le \lambda \, \cap \, |\langle X,v\rangle| \le \lambda \right\} \cap \left\{ \left| \left\langle X, \frac{u-v}{\|u-v\|}\right\rangle \right| \ge \theta \right\}.$$ Notice that 
\begin{equation*}
\begin{split}
&\inf_{u,v \in\S^{n-1}}\frac{1}{m}\sum_{j=1}^m \frac{1}{\|u-v\|_{2}^2}\left|\Phi_{\lambda}(\langle X_j,u\rangle) - \Phi_{\lambda}(\langle X_j,v\rangle)\right|^2 \\
&\ge \theta^2 \inf_{u,v \in\S^{n-1}}\mathbb{P}(\mathcal{E}^1(u,v)) - \theta^2 \sup_{u,v\in\S^{n-1}}\left(\mathbb{P}(\mathcal{E}^1(u,v))-\frac{1}{m}\sum_{j=1}^m\mathds{1}(\mathcal{E}_j^1(u,v))\right).
\end{split}
\end{equation*}
Since we assume that $X$ satisfies a small ball condition such as \eqref{eq:small-ball-og}, we can guarantee that, by making $\theta > 0$ sufficiently small, we have, for any $u,v \in \B_{\R^n}$,
\[
\mathbb{P}(\mathcal{E}^1(u,v)) \ge \frac{1}{2}\mathbb{P}\left(|\langle X,u\rangle| \le \lambda, |\langle X,v\rangle| \le \lambda \right).
\]
It then suffices to note that the class 
\[
\mathcal{F}_{\lambda,\theta} = \left\{ \mathds{1}\left( |\langle x,u\rangle| \le \lambda \, \cap |\langle x, v \rangle| \le \lambda \, \cap |\langle x,w \rangle| \ge \theta\right), \, \, u,v \in \B_{\R^n}, \, w \in \S^{n-1} \right\}
\]
has VC-dimension $vc(\mathcal{F}_{\lambda,\theta}) \lesssim n$, as a consequence of Proposition \ref{prop:intersection} applied twice. It then follows from Proposition \ref{prop:bounded-differences} that 
\begin{align*} 
 & \sup_{u,v\in\S^{n-1}}\left(\mathbb{P}(\mathcal{E}^1(u,v))-\frac{1}{m}\sum_{j=1}^m\mathds{1}(\mathcal{E}_j^1(u,v))\right)  \cr 
\le & \, \mathbb{E} \sup_{u,v\in\S^{n-1}}\left(\mathbb{P}(\mathcal{E}^1(u,v))-\frac{1}{m}\sum_{j=1}^m\mathds{1}(\mathcal{E}_j^1(u,v))\right) + t 
\end{align*} 
with probability at least $1 - e^{-2t^2 m}$. By Lemma \ref{lem:VC_half-spaces}, we have that 
\[
\mathbb{E} \sup_{u,v\in\S^{n-1}}\left(\mathbb{P}(\mathcal{E}^1(u,v))-\frac{1}{m}\sum_{j=1}^m\mathds{1}(\mathcal{E}_j^1(u,v))\right) \lesssim \sqrt{\frac{n}{m}}.
\]
By choosing $t \sim_{\lambda} \sqrt{\frac{n}{m}}$ and $m \ge \tilde{c}(\lambda) n,$ where $\tilde{c}(\lambda)$ depends only on \eqref{eq:prob-strips}, we get that
\[
\inf_{u,v \in\S^{n-1}}\frac{1}{m}\sum_{j=1}^m \frac{1}{\|u-v\|_{2}^2}\left|\Phi_{\lambda}(\langle X_j,u\rangle) - \Phi_{\lambda}(\langle X_j,v\rangle)\right|^2 \ge c'(\lambda)^2,
\]
with probability at least $1 - e^{- \frac{m}{\tilde{c}(\lambda)}},$ as desired. 

\vspace{2mm}

\noindent \textbf{Case 2:} Suppose now, on the other hand, that \eqref{eq:prob-strips} does not admit a uniform lower bound and fix a  $\delta > 0.$ Consider the following events
\[
\mathcal{E}^3(u,v) = \left( |\langle X,u\rangle| \le \delta, |\langle X,v\rangle| > \lambda\right),
\]
\[
\mathcal{E}^4(u,v) = \left( |\langle X,u\rangle| \le \delta, |\langle X,v\rangle| \le \lambda\right),
\]
which satisfy that
$$\mathbb{P}(\mathcal{E}^3(u,v) \cup \mathcal{E}^4(u,v)) = \mathbb{P}(|\langle X,u\rangle| \le \delta).$$
Fix now $\delta = \frac{\lambda}{2}.$ Under the hypothesis of \eqref{eq:reverse-small-ball}, we conclude that there is $\tilde{c}(\lambda)$ such that 
\[
\inf_{u \in \S^{n-1}} \mathbb{P}(|\langle X,u\rangle| \le \lambda/2) \ge \tilde{c}(\lambda).
\]
We then divide into cases once more. 

\vspace{2mm}

\noindent \textit{Case 2.1:} $(u,v) \in \left\{ (u_0,v_0) \colon \mathbb{P}(\mathcal{E}^4(u_0,v_0)) \ge \frac{\tilde{c}(\lambda)}{2}\right\}.$ \medskip

In this case, we simply note that the exact same strategy as devised above for \textbf{Case 1} may be applied verbatim, and hence we need not deal with this case in particular. 

\vspace{2mm}

\noindent\textit{Case 2.2:} $(u,v) \in \left\{ (u_0,v_0) \colon \mathbb{P}(\mathcal{E}^4(u_0,v_0)) \le \frac{\tilde{c}(\lambda)}{2}\right\}.$ \medskip

Since $\mathcal{E}^3(u,v)$ and $\mathcal{E}^4(u,v)$ are disjoint, if the probability of, say, $\mathcal{E}^4(u,v)$ is less than $\tilde{c}(\lambda)/2,$ then 
\[
\mathbb{P}(\mathcal{E}^3(u,v)) \ge \frac{\tilde{c}(\lambda)}{2}. 
\]
In this case, instead of restricting to the set of unclipped samples, we simply restrict to the set where one of the measurements is clipped and the other one is far from being clipped. Indeed, we have
\begin{equation*}
\begin{split}
&\frac{1}{m}\sum_{j=1}^m \frac{1}{\|u-v\|_{2}^2}\left|\Phi_{\lambda}(\langle X_i,u\rangle) - \Phi_{\lambda}(\langle X_i,v\rangle)\right|^2 \\
&\ge \frac{1}{\|u-v\|_2^2} \frac{\lambda^2}{4m} \sum_{j=1}^m \mathds{1}\left(|\langle X_i,u\rangle| > \lambda \cap |\langle X_i,v\rangle|\le \lambda/2 \right) \\ 
& \ge \frac{\lambda^2}{16m} \sum_{j=1}^m \mathds{1}\left(|\langle X_i,u\rangle| > \lambda \cap |\langle X_i,v\rangle|\le \lambda/2 \right). 
\end{split}
\end{equation*}
We then employ Lemma \ref{lem:VC_half-spaces} once more. This yields that the difference between the empirical process above and its expected value -- which is $\mathbb{P}(\mathcal{E}^3(u,v)) \ge \frac{\tilde{c}(\lambda)}{2}$ -- is at most $O(\sqrt{n/m})$ with probability $1 - e^{-c n}$, and we conclude the result also in this case. 
\end{proof}

\begin{remark}

Condition \eqref{eq:reverse-small-ball} states, as previously implied, a ``reverse'' small ball property: while the probability that $|\langle X,u\rangle|$ is not very small gets (uniformly) progressively closer to $1$, it also stays uniformly away from $1$. In other words, this is a sort of assertion on the roundness of the probability distribution of the vector $X$. 

While this, at first, seems to be an overly restrictive assumption, it is not hard to believe that such a condition should be in some sense \emph{necessary} in case declipping does hold. Indeed, if not, then \eqref{eq:reverse-small-ball} states that, for some specific vectors, most measurement will end up being clipped, and as such too much information should be lost. We refer the reader to Remark \ref{rmk:rsb-needed} below for a formal proof of the fact that \eqref{eq:reverse-small-ball} is indeed \emph{necessary} for stable declipping. We also refer the reader to Remark \ref{rmk:ssb-sharp?} for a discussion on the condition \eqref{eq:small-ball-og}. 
\end{remark} 

\subsection{Exact characterization of declipping for some \texorpdfstring{$\lambda$}{}} We move now to the proof of Theorem \ref{thm:general-some-lambda}. 

\begin{proof}[Proof of Theorem \ref{thm:general-some-lambda}]
Suppose first that assertion (I) in Theorem \ref{thm:general-some-lambda} holds. By \cite[Theorem 2.1]{Stablephaseretrieval} we have a dichotomy: Either  $E$ contains an almost disjoint normalized sequence, or the $L^1$ and $L^p$ norms are equivalent on $E$. Hence, if assertion (II) fails, we can find a sequence $(f_n)\subseteq S_E$ together with a sequence $\{d_n\}_n$ of pairwise disjointly supported functions such that $\|f_n-d_n\|_{L^p(\mu)}\to 0.$ In particular, as $\mu(\text{supp}(d_n)) \to 0$, it follows that $f_n$ converges to zero in measure. This implies that for each $\lambda>0$, $\|\Phi_{\lambda}(f_n)\|_{L^p}\to 0$, contradicting (I).
\medskip

Now, suppose that assertion (II) in Theorem \ref{thm:general-some-lambda} holds, but assertion (I) of Theorem \ref{thm:general-some-lambda} fails. Then for each $n\in \mathbb{N}$ we may find $f_{n}, g_{n}\in B_E$ with
\begin{equation}\label{fails clip for all n}
\|f_{n}-g_{n}\|_{L^p}>n \|\Phi_n(f)-\Phi_n(g)\|_{L^p}.
\end{equation}
Let $\Omega_n$ be the event where $\{|f_n|\  \text{and} \ |g_n|<n\}$. Since $f_n,g_n\in B_E$, we have $\mu(\Omega_n)\geq 1-\frac{2}{n^p}.$
On the other hand, by \eqref{fails clip for all n}, we have 
\begin{equation*}
\|f_n|_{\Omega_n}-g_n|_{\Omega_n}\|_{L^p}<\frac{\|f_n-g_n\|_{L^p}}{n}.
\end{equation*}
Hence,
\begin{equation*}
\|f_n-g_n\|^p_{L^p}\left(1-\frac{1}{n^p}\right)\leq \|(f_n-g_n)|_{\Omega_n^c}\|_{L^p}^p.
\end{equation*}
This shows that $f_n-g_n$ accumulates most of its mass on a set of very small measure. In particular,
$$\|f_n-g_n\|_{L^1}=\int_{\Omega_n} |\Phi_n(f)-\Phi_n(g)|+\int_{\Omega_n^c} |f_n-g_n|$$

$$< \frac{\|f_n-g_n\|_{L^p}}{n}+\|f_n-g_n\|_{L^p}\mu(\Omega_n^c)^{1/p'}$$

$$\leq \left(\frac{1}{n}+\left(\frac{2}{n^p}\right)^{\frac{1}{p'}}\right)\|f_n-g_n\|_{L^p}.$$
This shows that assertion (II) fails, a contradiction which concludes the proof. 
\end{proof}

\begin{remark}\label{rmk:ssb-sharp?} We have characterized exactly when a distribution $\mu$ satisfies declipping for \emph{some} $\lambda >0.$ However, we note that the strong small ball condition \eqref{eq:small-ball-og} is \emph{not} necessary in order for declipping to hold in the continuous setting for \emph{all} $\lambda>0$: take, for instance, $\mu$ to be a probability distribution that gives $1/2$ weight to a single fixed point $p_0$ in the sphere $\sqrt{n} \S^{n-1}$, and the other half to the rest of the sphere, attributed in a uniform way. Take $X$ to be a random variable distributed according to this distribution. This random vector clearly does not satisfy \eqref{eq:small-ball-og}, but since a random vector $Y$ which is uniformly distributed on the sphere $\sqrt{n} \S^{n-1}$ satisfies that, for all $u,v \in \B_{\R^n}$,
\begin{equation}\label{eq:continuous-dec}
    \mathbb{E} \left( \Phi_\lambda(\langle Y,u \rangle) - \Phi_\lambda (\langle Y,v\rangle)\right)^2 \ge c(\lambda)\|u-v\|_2^2,
\end{equation}
for some constant $c(\lambda)>0,$ then $X$ must satisfy the same kind of inequality, showing the asserted fact that \eqref{eq:small-ball-og} is not needed in order for declipping to hold. 

One may wonder, however, if a weaker version is needed in order for declipping to hold: namely, does it hold that if a probability measure $\mu$ is such that any random vector $X$  distributed according to $\mu$ satisfies \eqref{eq:continuous-dec}, then we may write $\mu = \mu_1 + \mu_2$, where $\mu_1$ and $\mu_2$ have disjoint supports, and at least one of them satisfies \eqref{eq:small-ball-og}?

Perhaps surprisingly, the next result shows that even this weaker form is \emph{not} needed in order for \eqref{eq:continuous-dec} to hold:
\begin{proposition}\label{prop:double-sphere-ex} Let $\S^{n-2}_i:=\S^{n-1}\cap \{x_i =0\}$. Take $\mu = \frac{c_n}{2} \left( \mathcal{H}^{n-2}|_{\sqrt{n}\S^{n-2}_1} +  \mathcal{H}^{n-2}|_{\sqrt{n}\S^{n-2}_2} \right),$ where the constant $c_n>0$ is chosen so that $\mu$ is a probability measure. If $X$ is a random vector distributed according to $\mu,$ then it satisfies \eqref{eq:continuous-dec}, although $X$ does not satisfy \eqref{eq:small-ball-og}. 
\end{proposition}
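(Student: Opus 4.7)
The plan is to verify the two claims in Proposition~\ref{prop:double-sphere-ex} separately: that the strong small ball condition \eqref{eq:small-ball-og} fails for this $\mu$, and that the continuous declipping inequality \eqref{eq:continuous-dec} nevertheless holds for $X$ with a constant depending only on $\lambda$. The first claim is immediate, while the second follows by conditioning on the supporting sphere and invoking the uniform-sphere declipping bound \eqref{eq:continuous-dec} on each $(n-1)$-dimensional factor.

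\textbf{Failure of \eqref{eq:small-ball-og}.} Since $\mu$ assigns mass $\tfrac12$ to $\sqrt{n}\S^{n-2}_1 \subset \{x_1 = 0\}$, choosing $v = e_1$ gives $\mathbb{P}(\langle X, e_1\rangle = 0) \ge \tfrac12$, so $\mathbb{P}(|\langle X, e_1\rangle| \le \delta) \ge \tfrac12$ for every $\delta > 0$. This rules out \eqref{eq:small-ball-og} for any $\varepsilon < \tfrac12$.

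\textbf{Reduction to two hyperplane estimates.} Let $Y_i$ be uniform on $\sqrt{n}\S^{n-2}_i$ for $i = 1,2$. Conditioning on the supporting sphere,
\[
\mathbb{E}\bigl(\Phi_\lambda(\langle X, u\rangle) - \Phi_\lambda(\langle X, v\rangle)\bigr)^2 = \tfrac12 \sum_{i=1}^2 \mathbb{E}\bigl(\Phi_\lambda(\langle Y_i, u\rangle) - \Phi_\lambda(\langle Y_i, v\rangle)\bigr)^2.
\]
Writing $u = u_1 e_1 + \tilde u$ and $v = v_1 e_1 + \tilde v$ with $\tilde u, \tilde v \in \{x_1 = 0\}$, the fact that $Y_1 \in \{x_1 = 0\}$ gives $\langle Y_1, u\rangle = \langle Y_1, \tilde u\rangle$; an analogous identity holds for $Y_2$ with the orthogonal projections $u^{(2)} := u - u_2 e_2$ and $v^{(2)} := v - v_2 e_2$ onto $\{x_2 = 0\}$. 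Observe that $\tilde u, \tilde v, u^{(2)}, v^{(2)}$ all lie in $\B_{\R^n}$.

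\textbf{Applying uniform-sphere declipping in a hyperplane.} View $Y_1$ as uniform on the sphere of radius $\sqrt{n}$ inside $\{x_1 = 0\} \cong \R^{n-1}$. Letting $Z$ be uniform on $\sqrt{n-1}\,\S^{n-2}_1$, one has $Y_1 = \sqrt{n/(n-1)}\, Z$ in distribution. Combined with the scaling identity $\Phi_\lambda(\alpha t) = \alpha \Phi_{\lambda/\alpha}(t)$, valid for $\alpha > 0$, this yields
\[
\mathbb{E}\bigl(\Phi_\lambda(\langle Y_1, \tilde u\rangle) - \Phi_\lambda(\langle Y_1, \tilde v\rangle)\bigr)^2 = \tfrac{n}{n-1}\, \mathbb{E}\bigl(\Phi_{\lambda'}(\langle Z, \tilde u\rangle) - \Phi_{\lambda'}(\langle Z, \tilde v\rangle)\bigr)^2,
\]
with $\lambda' = \lambda\sqrt{(n-1)/n}$. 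Now \eqref{eq:continuous-dec} applied to the isotropic uniform $Z$ inside $\R^{n-1}$ and to the vectors $\tilde u, \tilde v \in \B_{\R^{n-1}}$ gives
\[
\mathbb{E}\bigl(\Phi_\lambda(\langle Y_1, \tilde u\rangle) - \Phi_\lambda(\langle Y_1, \tilde v\rangle)\bigr)^2 \ge \tilde c(\lambda)\, \|\tilde u - \tilde v\|_2^2,
\]
for some $\tilde c(\lambda) > 0$ depending only on $\lambda$. The same bound holds with $Y_2$ in place of $Y_1$ and $u^{(2)}, v^{(2)}$ in place of $\tilde u, \tilde v$.

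\textbf{Assembly.} Summing the two estimates and using
\[
\|\tilde u - \tilde v\|_2^2 + \|u^{(2)} - v^{(2)}\|_2^2 = (u_1 - v_1)^2 + (u_2 - v_2)^2 + 2\sum_{i \ge 3}(u_i - v_i)^2 \ge \|u - v\|_2^2
\]
closes the argument. The main delicate point is ensuring that the declipping constant used at the hyperplane step is independent of the dimension $n-1$: this is handled by noting that the rescaling from radius $\sqrt{n}$ to radius $\sqrt{n-1}$ perturbs $\lambda$ only by a factor in $[1/\sqrt{2},1]$ uniformly in $n \ge 2$, while the underlying uniform-sphere declipping constant in \eqref{eq:continuous-dec} is dimension-free, as noted in the discussion preceding the proposition.
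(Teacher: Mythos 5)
Your proof is correct and follows essentially the same route as the paper's: decompose $\mu$ as a mixture of the two sub-sphere uniform distributions, reduce each term to the uniform-sphere declipping bound \eqref{eq:continuous-dec} via orthogonal projection onto the corresponding hyperplane, and sum using $\|\tilde u - \tilde v\|_2^2 + \|u^{(2)} - v^{(2)}\|_2^2 \ge \|u-v\|_2^2$. Your extra rescaling step (adjusting for the fact that $\sqrt{n}\,\S^{n-2}_i$ has radius $\sqrt{n}$ but ambient dimension $n-1$, which shifts $\lambda$ to $\lambda' \in [\lambda/\sqrt{2},\lambda]$) is a detail the paper glosses over, and it is handled correctly.
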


\begin{proof} Let $u,v \in \B_{\R^n}$ be fixed. Let then $u_i,v_i$ denote the respective projections of $u,v$ onto the planes $\{x_i = 0\}, i=1,2.$ We first note that, directly by the way we have defined the spaces we are working on, it follows that 
\[
\|u_1 - v_1 \|_2^2 + \|u_2 - v_2 \|_2^2 \ge \| u-v\|_2^2. 
\]
Now, let $Y_1,Y_2$ denote random variables, each distributed uniformly on $\sqrt{n}\S^{n-1}_i$, respectively. Since 
\[
\langle Y_i,u\rangle =\langle Y_i, u_i\rangle, \quad \langle Y_i,v\rangle =\langle Y_i,v_i\rangle,
\]
it follows that 
\[
\mathbb{E} \left( \Phi_\lambda(\langle Y_i,u \rangle) - \Phi_\lambda (\langle Y_i,v\rangle)\right)^2 \ge c(\lambda)\|u_i-v_i\|_2^2,
\]
where we used the fact that the uniform distribution on the sphere of radius $\sqrt{n}$ satisfies \eqref{eq:continuous-dec}. Now, since we may write 
$$X = \varepsilon_0 \cdot Y_1 + (1-\varepsilon_0)\cdot Y_2,$$ 
where $\varepsilon_0$ is a Bernoulli $1/2$-random variable independent of $Y_i,i=1,2,$ 
it follows that 
\begin{align*}
\mathbb{E} \left( \Phi_\lambda(\langle X,u \rangle) - \Phi_\lambda (\langle X,v\rangle)\right)^2 & = \frac{1}{2} \left(\mathbb{E} \left( \Phi_\lambda(\langle Y_1,u \rangle) - \Phi_\lambda (\langle Y_1,v\rangle)\right)^2 + \mathbb{E} \left( \Phi_\lambda(\langle Y_2,u \rangle) - \Phi_\lambda (\langle Y_2,v\rangle)\right)^2 \right) \cr
& \ge \frac{c(\lambda)}{2} \left( \|u_1 - v_1 \|_2^2 + \|u_2 - v_2 \|_2^2 \right) \ge \frac{c(\lambda)}{2} \|u-v\|_2^2.
\end{align*}
Since it is easy to see that $X$ does not satisfy \eqref{eq:small-ball-og}, this finishes the proof of the proposition. 
\end{proof}

The question of what condition characterizes doing stable $\lambda$-declipping for all $\lambda$ seems to be an interesting one. On the one hand, the reverse small ball condition \eqref{eq:reverse-small-ball} is necessary, and Theorem \ref{thm:general-some-lambda} implies that the weak small ball (rewritten in the statement of that result as the equivalence of $L^1$ and $L^2$ norms) is also necessary. On the other hand, by virtue of the examples above, a complete characterization seems to be more elusive than the current techniques, being a problem we wish to investigate further in future work. 
\end{remark}

%Natural questions: What is the relation between stable $\lambda$-declipping in $L^1$ versus in $L^p$?  Can we get discrete analogues of this result in $\ell^2$ or $\mathbb{R}^n$?

\subsection{General declipping in the sparse context}

\begin{proof}[Proof of Theorem \ref{thm:sparse-gen-dist}]

We follow exactly the same method of proof as in Theorem \ref{General distributions}. Indeed, by repeating the first few steps, we obtain in the exact same manner that

\begin{equation*}
\begin{split}
&\inf_{u,v \in T_s}\frac{1}{m}\sum_{j=1}^m \frac{1}{\|u-v\|_{2}^2}\left|\Phi_{\lambda}(\langle X_j,u\rangle) - \Phi_{\lambda}(\langle X_j,v\rangle)\right|^2 \\
&\ge L^2 \inf_{u,v \in T_s}\mathbb{P}(\mathcal{E}(u,v)) - L^2 \sup_{u,v\in T_s}\left(\mathbb{P}(\mathcal{E}(u,v))-\frac{1}{m}\sum_{j=1}^m\mathds{1}(\mathcal{E}_j(u,v))\right),
\end{split}
\end{equation*}
where the event $\mathcal{E}(u,v)$ is defined in \eqref{eq:event-E(u,v)}. Since we have already proved a lower bound on the probability of that event for arbitrary $u,v \in \B_{\R^n}$, we only need to estimate the term 
\[
\sup_{u,v\in T_s}\left(\mathbb{P}(\mathcal{E}(u,v))-\frac{1}{m}\sum_{j=1}^m\mathds{1}(\mathcal{E}_j(u,v))\right).
\]
We again repeat the overarching structure of the proof of Theorem \ref{General distributions}: again by Proposition \ref{prop:bounded-differences}, we obtain for all $t>0$ with probability at least $1-e^{-2t^2m}$ that
\begin{equation*}
\begin{split}
&\sup_{u,v\in T_s}\left(\mathbb{P}(\mathcal{E}(u,v))-\frac{1}{m}\sum_{j=1}^m\mathds{1}(\mathcal{E}_j(u,v))\right) \\
&\le \mathbb{E} \sup_{u,v\in T_s}\left(\mathbb{P}(\mathcal{E}(u,v))-\frac{1}{m}\sum_{j=1}^m\mathds{1}(\mathcal{E}_j(u,v))\right) + t.\\
\end{split}
\end{equation*}
Our main divergence point is the usage of Proposition \ref{prop:union-vc}: indeed, by using that result, we get that any family of the form 
\[
\F_\lessgtr = \left\{ \mathds{1}\left(|\langle x,u\rangle| \lessgtr \theta \right) \colon u \in T_s \right\}
\]
satisfies $vc(\F_\lessgtr) \lesssim s \log(e \cdot n/s).$ This follows, for instance, by combining Proposition \ref{prop:union-vc} with the fact that $T_s$ is the union of ${n \choose s}$ Euclidean balls of dimension $s$. By using Proposition \ref{prop:intersection} twice, we obtain that the VC-dimension of the class 
\[
\tilde{\F}_{\lambda,\theta}\coloneqq\left\{ \mathds{1}\left( |\langle x,u\rangle| \le \lambda \, \cap |\langle x, v \rangle| \le \lambda \, \cap |\langle x,w \rangle| \ge \theta\right), \, \, u,v \in \B_{\R^n}\cap T_s, \, w \in \S^{n-1}\cap T_s \right\},
\]
satisfies $vc(\tilde{\F}_{\lambda,\theta}) \lesssim s \log(e \cdot n/s).$ Applying now Proposition \ref{prop:vc-quantif-lln}, we have 
\[
\mathbb{E} \sup_{u,v\in T_s}\left(\mathbb{P}(\mathcal{E}(u,v))-\frac{1}{m}\sum_{j=1}^m\mathds{1}(\mathcal{E}_j(u,v))\right) \lesssim \sqrt{\frac{s \log\left( \frac{e \cdot n}{s}\right)}{m}}.
\]
Now, the remainder of the proof of Theorem \ref{General distributions} can be applied verbatim to conclude the argument.
\end{proof}

\subsection{Unlimited sampling for general distributions} We now proceed to prove Theorem \ref{thm:mod-sharp-try}.

\begin{proof}[Proof of Theorem \ref{thm:mod-sharp-try}] We split the proof of the desired result into cases. \medskip

\noindent\textbf{Case 1:} $\|u-v\|_2 \le \lambda. $ In this case, we simply note that, if $s,t$ are such that $|s-t|\le \lambda$, then $|\mathcal{M}_\lambda(s) - \mathcal{M}_\lambda(t)| \ge |s-t|.$ Restricting to the event 
$$\left\{ a \|u-v\|_2 \le \left|\left\langle X,u-v\right\rangle\right| \le \|u-v\|_2 \right\},$$
we have 
\begin{align*}
\frac{1}{m} \sum_{i=1}^m \frac{|\mathcal{M}_\lambda(\langle X_i,u\rangle) - \mathcal{M}_\lambda(\langle X_i,v\rangle)|^2}{\|u-v\|_2^2} & \ge \frac{a^2}{m} \sum_{i=1}^m \mathds{1}\left( a \le \left| \left\langle X_i, \frac{u-v}{\|u-v\|_2} \right\rangle \right| \le 1 \right).
\end{align*}
Now, note that, by condition \eqref{eq:small-strip}, there is $a>0$ such that 
\[
\mathbb{P} \left( a \le |\langle X,w\rangle\right| \le 1) \ge c > 0. 
\]
Since, in addition, the class of Boolean functions 
\[
\F = \left\{ \mathds{1}(a \le |\langle x,w\rangle| \le 1) \colon w \in \S^{n-1}\right\}
\]
has, by virtue of Proposition \ref{prop:intersection}, VC-dimension bounded by $n$, it follows from Proposition \ref{prop:vc-quantif-lln} that 
\[
\mathbb{E} \sup_{u,v \in \B_{\R^n}} \left[ \frac{1}{m} \sum_{i=1}^m \mathds{1}\left( a \le \left| \left\langle X_i, \frac{u-v}{\|u-v\|_2} \right\rangle \right| \le 1 \right) - \mathbb{P}\left( a \le \left| \left\langle X,\frac{u-v}{\|u-v\|_2}\right\rangle \right| \le 1\right) \right] \le C \sqrt{\frac{n}{m}}.  
\]
Arguing again with Proposition \ref{prop:bounded-differences}, as in the proof of Theorem \ref{General distributions}, we have that 
\[ 
\frac{1}{m} \sum_{i=1}^m \mathds{1}\left( a \le \left| \left\langle X_i, \frac{u-v}{\|u-v\|_2} \right\rangle \right| \le 1 \right) \ge c - C\sqrt{\frac{n}{m}},
\]
with probability at least $1 - e^{-c t^2 m}.$ Setting $m \ge C n$ for $C$ an absolute, sufficiently large, constant  finishes the proof in this case. \medskip

\noindent\textbf{Case 2:} $\|u-v\|_2 \ge \lambda.$ We now simply note that, for the sets $\Omega_a(\delta)$ as defined in the statement of Theorem \ref{thm:mod-sharp-try}, we have that $|\mathcal{M}_{\lambda}(s) - \mathcal{M}_\lambda(t)| \ge a\lambda$ holds if $s-t \in \Omega_a(\lambda).$ Hence, translating this as a lower bound, we have 
\begin{equation}\label{eq:lower-bound-mod} 
\frac{1}{m} \sum_{i=1}^m \frac{|\mathcal{M}_\lambda(\langle X_i,u\rangle) - \mathcal{M}_\lambda(\langle X_i,v\rangle)|^2}{\|u-v\|_2^2} \ge \frac{(a\lambda)^2}{\|u-v\|_2^2} \frac{1}{m} \sum_{i=1}^m \mathds{1}\left( \langle X_i, u-v\rangle \in \Omega_a(\lambda)\right).
\end{equation} 
Now, note that 
\[
\langle X,u-v \rangle \in \Omega_a(\lambda) \, \text{ follows if }\, \langle X, (u-v)/\|u-v\|_2\rangle \in \Omega_a(\lambda/\|u-v\|_2).
\]
From condition \eqref{eq:mod-cond}, we obtain that 
\[
\inf_{u,v \in \S^{n-1}} \mathbb{P} \left( \langle X,u-v\rangle \in \Omega_a(\lambda)\right) \ge \inf_{\delta \ge \lambda/4} \tilde{c}(\delta) =: \theta(\lambda)> 0.
\]
Let now $C_\lambda > 0$ be such that we have 
\[
\mathbb{P}(|\langle X,u-v\rangle| \ge C_\lambda) \le \frac{\theta(\lambda)}{2},
\]
for all $u,v \in \B_{\R^n}$ such that $\|u-v\|_2 \ge \lambda.$ It follows that 
\[
\mathbb{P}\left( \langle X,u-v\rangle \, \in \, \Omega_a(\lambda) \cap [-C_\lambda,C_\lambda]\right) \ge \frac{\theta(\lambda)}{2}.
\]
On the other hand, it follows from Proposition \ref{prop:intersection} that the family 
\[
\F_{a,\lambda} \coloneqq \left\{ \mathds{1}\left( \langle X,u-v\rangle \in \Omega_a(\lambda) \cap [-C_\lambda,C_\lambda]\right)\colon u,v \in \B_{\R^n}\right\},
\]
being composed of indicator functions of a finite, fixed number of classes, each of which having VC-dimension bounded by $n$, satisfies $vc(\F_{a,\lambda}) \le C(\lambda)n.$ Hence, 
\[
\mathbb{E} \sup_{u,v \in \B_{\R^n}} \left[ \frac{1}{m} \sum_{i=1}^m \mathds{1}\left( \langle X_i, u-v \rangle \in \Omega_a(\lambda) \cap [-C_\lambda,C_\lambda] \right) - \mathbb{P}\left( \langle X,u-v\rangle \, \in \, \Omega_a(\lambda) \cap [-C_\lambda,C_\lambda]\right)\right] \]
\[
\le C(\lambda) \sqrt{\frac{n}{m}}. 
\]
Again, employing the usual device of Proposition \ref{prop:bounded-differences}, we conclude that 
\begin{align*}
\frac{1}{m} \sum_{i=1}^m \mathds{1}\left( \langle X_i,u-v\rangle \in \Omega_a(\lambda)\right) & \ge \frac{1}{m} \sum_{i=1}^m \mathds{1}\left( \langle X_i,u-v\rangle \in \Omega_a(\lambda) \cap [-C_\lambda,C_\lambda]\right) \cr 
    & \ge \mathbb{P} \left( \langle X,u-v\rangle \in \Omega_a(\lambda) \cap [-C_\lambda,C_\lambda]\right) - C(\lambda) \sqrt{\frac{n}{m}}-t,
\end{align*}
with probability at least $1 - e^{-ct^2 m}.$ By taking $t$ to be a small constant depending on $\lambda$ and $m \ge C(\lambda)n$ for some $C(\lambda)> 0$, we conclude that the right-hand side of \eqref{eq:lower-bound-mod} is bounded from below by $c(\lambda) > 0$ with probability at least $1- e^{-c(\lambda)m}$, as long as $m \ge C(\lambda) n.$ This concludes the proof. 
\end{proof}

\section{Sharp declipping and unlimited sampling for specific distributions}\label{sec:declipping}

In this section, our focus will be mostly on results dealing with \emph{sharp} asymptotic lower bounds for both declipping and unlimited sampling in specific cases -- such as when the random vector under consideration satisfies additional geometric constraints, like being uniformly distributed on a certain sphere.

\subsection{Sharp declipping for uniformly spherically distributed vectors} We start with the proof of Theorem \ref{thm:main-vc}. In order to do so, we need the following auxiliary result: 

\begin{lemma}\label{L:Clam}
    Let $\lambda,L\in\R$ be positive numbers with $4L\le \lambda \le 1$. If $u,v\in \B_{\R^n}$ and $x\in\R^n$ satisfy 
$$
    |\langle x,u\rangle|\le \frac{\lambda}{2} \hspace{.2cm}\textrm{ and }\hspace{.2cm} \left|\left\langle x,u-v\right\rangle\right| \ge L \|u-v\|_2,
$$
then
$$
\left|\Phi_{\lambda}(\langle x,u\rangle) - \Phi_{\lambda}(\langle x,v\rangle)\right| \ge L\|u-v\|_{2}.
 $$   
\end{lemma}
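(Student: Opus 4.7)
The plan is a short case analysis based on whether $\langle x, v\rangle$ is clipped. First observe that, since $|\langle x,u\rangle| \le \lambda/2 \le \lambda$, the clipping operator acts as the identity on $\langle x,u\rangle$, so that $\Phi_\lambda(\langle x,u\rangle) = \langle x,u\rangle$. The task then is to lower-bound $|\langle x,u\rangle - \Phi_\lambda(\langle x,v\rangle)|$ by $L\|u-v\|_2$.

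I would then split into two cases. In the first, assume $|\langle x,v\rangle| \le \lambda$. Then $\Phi_\lambda(\langle x,v\rangle) = \langle x,v\rangle$ as well, and the desired inequality becomes
\[
|\langle x,u\rangle - \langle x,v\rangle| = |\langle x, u-v\rangle| \ge L\|u-v\|_2,
\]
which is exactly the second hypothesis. In the second case, assume $|\langle x,v\rangle| > \lambda$; without loss of generality suppose $\langle x,v\rangle > \lambda$, so that $\Phi_\lambda(\langle x,v\rangle) = \lambda$. Since $\langle x,u\rangle \le \lambda/2$, we obtain the deterministic gap
\[
|\langle x,u\rangle - \Phi_\lambda(\langle x,v\rangle)| = \lambda - \langle x,u\rangle \ge \lambda - \tfrac{\lambda}{2} = \tfrac{\lambda}{2}.
\]
The symmetric situation $\langle x,v\rangle < -\lambda$ gives the same lower bound $\lambda/2$.

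It remains to convert this universal gap of $\lambda/2$ into the desired bound $L\|u-v\|_2$. Here I would use the normalization $u,v \in \B_{\R^n}$, which forces $\|u-v\|_2 \le 2$, together with the hypothesis $4L \le \lambda$, to conclude
\[
L\|u-v\|_2 \le 2L \le \tfrac{\lambda}{2} \le |\Phi_\lambda(\langle x,u\rangle) - \Phi_\lambda(\langle x,v\rangle)|.
\]
This completes both cases. The argument uses the hypothesis $|\langle x,u-v\rangle| \ge L\|u-v\|_2$ only in the unclipped case, while the clipped case is handled purely by the quantitative relationship between $L$, $\lambda$, and the ball constraint. There is no real obstacle here; the only subtle point is making sure the case $|\langle x,v\rangle| > \lambda$ gives a gap independent of $\|u-v\|_2$, which is precisely what the bound $4L \le \lambda$ is calibrated to exploit.
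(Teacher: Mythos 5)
Your proof is correct and follows essentially the same route as the paper's: the same case split on whether $|\langle x,v\rangle|\le\lambda$, the same observation that the unclipped case reduces to the hypothesis on $\langle x,u-v\rangle$, and the same deterministic gap $\lambda-\lambda/2$ combined with $\|u-v\|_2\le 2$ and $4L\le\lambda$ in the clipped case. Nothing to add.
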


\begin{proof}
We first consider the case when $|\langle x,v\rangle|\leq \lambda$. It follows that
$$
\left|\Phi_{\lambda}(\langle x,u\rangle) - \Phi_{\lambda}(\langle x,v\rangle)\right|=|\langle x,u-v\rangle|\geq L \|u-v\|.
$$
We now assume that $|\langle x,v\rangle|> \lambda$.  Then,
\begin{align*}
\left|\Phi_{\lambda}(\langle x,u\rangle) - \Phi_{\lambda}(\langle x,v\rangle)\right|&\geq \lambda-\lambda/2 \\
&\geq L\|u-v\|\hspace{.5cm}\textrm{ as $\|u-v\|\leq 2$ and  $L\leq \lambda/4$,}
\end{align*}
which finishes the proof.
\end{proof}

We are now ready to prove Theorem \ref{thm:main-vc}. 

\begin{proof}[Proof of Theorem \ref{thm:main-vc}]
We consider $0<\lambda\le 1/2$ and
fix a constant $L>0$ with $\lambda>4L$ to be defined later. By Lemma \ref{L:Clam}, we have that
\begin{align}\label{eq:lower-decl-sharp}
\frac{1}{m}\sum_{i=1}^m \frac{1}{\|u-v\|_{2}^2}\left|\Phi_{\lambda}(\langle X_i,u\rangle) - \Phi_{\lambda}(\langle X_i,v\rangle)\right|^2 \ge \frac{L^2}{m} \sum_{i=1}^m \mathds{1}\left(|\langle X_i,u\rangle|\le \frac{\lambda}{2} \cap \left|\left\langle X_i,\frac{u-v}{\|u-v\|_2}\right\rangle\right| \ge L\right).
\end{align}
We define $\mathcal{E}(u,v)$ to be the event  
\begin{equation}\label{eq:event-E(u,v)}
\mathcal{E}(u,v) \coloneqq \left\{|\langle X,u\rangle|\leq \frac{\lambda}{2} \cap \left|\left\langle X,\frac{u-v}{\|u-v\|_2}\right\rangle\right|\ge L\right\},
\end{equation} 
and let the events $\mathcal{E}_i(u,v)$ (corresponding to $X_i$) be defined analogously. \medskip

We start with the (uniform) small ball probability. To this end, note that as $\|u\|\leq 1$ and $X$ is isotropic, we have by Markov's inequality that $\mathbb{P}(|\langle X,u\rangle|\leq \sqrt{2})\geq 1/2$.  As $X$ is uniformly distributed on a sphere, for each $\alpha>0$, $\mathbb{P}(|\langle X,u\rangle|\le \alpha)$ corresponds to the surface area of a central strip of width $2\alpha$ on an $(n-1)$-sphere.  As the area of a strip divided by its width is comparable to the surface area of an $(n-2)$-sphere, 
 there exists universal constants $c_1>c_0>0$ so that 
$$c_1\alpha \mathbb{P}\big(|\langle X,u\rangle|\le \sqrt{2}\big)\geq\mathbb{P}\big(|\langle X,u\rangle|\le \alpha\big)\geq 2c_0 \alpha  \mathbb{P}\big(|\langle X,u\rangle|\le \sqrt{2}\big).
$$
Thus, we have that 
\begin{equation}\label{E:proportional}    
c_1\alpha \geq\mathbb{P}(|\langle X,u\rangle|\le \alpha)\geq c_0 \alpha.  
\end{equation}
As \eqref{E:proportional} applies as well if $u$ is replaced with $(u-v)/\|u-v\|$, we have that

\begin{align*}
\mathbb{P}(\mathcal{E}(u,v))&\ge \mathbb{P}(\left|\langle X,u\rangle\right|\le \lambda/2)-\mathbb{P}\left(\left|\left\langle X,\frac{u-v}{\|u-v\|}\right\rangle\right|< L\right)\\
&\ge c_0\lambda-c_1 L.
\end{align*}

Let now $\mathcal{N} \coloneqq\{w_i\}_{i \in I}$ denote a $\lambda^2$-net of the unit ball $\B_{\R^n}.$ It follows that
$|\mathcal{N}| \le \left( \frac{C}{\lambda^2}\right)^{n}.$ Now, we need a statement that bounds the right-hand side of \eqref{eq:lower-decl-sharp} by its expected value for each element of our net with high probability. One possible way to do this would be through the usage of Proposition \ref{prop:bounded-differences}, which would imply that, for each $w_i \in \mathcal{N},$ with probability at least $1 - e^{- c \lambda^2 m}$, we have 
\begin{equation}\label{eq:lower-prob-high-each}
\frac{1}{m} \sum_{i=1}^m \mathds{1}(\mathcal{E}_i(u,v)) \ge \frac{1}{2} \cdot \mathbb{P} (\mathcal{E}(u,v)). 
\end{equation} 
This bound is, however, rather \emph{crude}: the specific Bernoulli structure of our random variables allows us to use the better-tailored \emph{Chernoff bound} for Bernoulli random variables in order to bound the probability of the event where 
\[
\left\{\frac{1}{m} \sum_{i=1}^m \mathds{1}(\mathcal{E}_i(u,v)) \le \frac{1}{2} \cdot \mathbb{P} (\mathcal{E}(u,v))\right\}
\]
by at most $e^{-c \lambda m}$. Hence, we finally conclude that, for each $w_i \in \mathcal{N},$ we have \eqref{eq:lower-prob-high-each} with probability at least $1 - e^{-c \lambda \cdot m}$.

\vspace{2mm}

Now, in order to further bound \eqref{eq:lower-decl-sharp} from below, we note that the intersection of the event 
\[
\mathcal{A}(u_k,w_j) \coloneqq\left\{ |\langle X,u_k\rangle| \le \frac{\lambda}{4} \, \cap \left| \langle X,w_j \rangle \right| > 2L \right\}
\]
with the event 
\[
\mathcal{B}(u,u_k,w,w_j) \coloneqq\left\{ |\langle X, u_k-u\rangle| \le \frac{\lambda}{10} \, \cap |\langle X, w_j-w\rangle| \le \frac{\lambda}{10}\right\}
\]
is contained in $\mathcal{E}(u,v)$, for $w = \frac{u-v}{\|u-v\|_2}$. Hence, by picking $u_k$ to be an element of $\mathcal{N}$ such that $\|u_k - u\|_2 \le c \lambda^2$ and $w_j \in \mathcal{N}$ such that $\|w_j - w\|_2 \le c \lambda^2$, we obtain 
\begin{align}\label{eq:second-lower-declp-sharp}
\frac{1}{m} \sum_{i=1}^m \mathds{1}(\mathcal{E}_i(u,v)) & \ge \frac{1}{m} \sum_{i=1}^m \mathds{1} \left(\mathcal{A}_i(u_k,w_j) \, \cap \mathcal{B}_i(u,u_k,w,w_j)\right) \cr 
 & \ge \frac{1}{m} \left( \sum_{i=1}^m \mathds{1}\left( \mathcal{A}_i(u_k,w_j)\right) - \sum_{i=1}^m \mathds{1}\left(\mathcal{B}_i^{c} (u,u_k,w,w_j)\right) \right). 
\end{align}
We then use a simple union bound: with probability at least 
$$
1 - |\mathcal{N}|^2 e^{-c \lambda m} \ge 1 - e^{c \left( n \log\left( \frac{1}{\lambda}\right) - \lambda m\right)},
$$
the first summand on the right-hand side of \eqref{eq:second-lower-declp-sharp} is bounded from below by $ c \cdot \lambda,$ for some universal, positive constant $c > 0$. On the other hand, for the second term, it suffices to notice that 
\begin{align}\label{eq:upper-bound-B}
\sum_{i=1}^m \mathds{1}\left(\mathcal{B}_i^{c} (u,u_k,w,w_j)\right) & \le \sum_{i=1}^m \mathds{1}\left( |\langle X_i, u_k - u \rangle| \ge \frac{\lambda}{10} \right) + \sum_{i=1}^m \mathds{1} \left( |\langle X_i, w_j - w\rangle| \ge \frac{\lambda}{10} \right) \cr 
                    & \le 100 \left( \sum_{i=1}^m \frac{|\langle X_i, u_k-u\rangle|^2}{\lambda^2} + \sum_{i=1}^m \frac{|\langle X_i,w_j-w\rangle|^2}{\lambda^2} \right).
\end{align}
We now invoke Proposition \ref{prop:vershynin-subgauss}. This, together with \eqref{eq:second-lower-declp-sharp} and \eqref{eq:upper-bound-B} implies that, with probability at least $1 - e^{-c \lambda m}$, we have 
\begin{align}
\frac{1}{m} \sum_{i=1}^m \mathds{1}(\mathcal{E}_i(u,v)) & \ge c \lambda - \frac{C}{\lambda^2}\left(\|u_k-u\|_2^2 + \|w_j-w\|_2^2\right) \cr 
    & \ge c \lambda - C \lambda^2 \ge \frac{c}{2} \lambda,
\end{align} 
for $\lambda > 0$ sufficiently small. This, together with \eqref{eq:lower-decl-sharp}, implies that 
\[
\frac{1}{m}\sum_{i=1}^m \frac{1}{\|u-v\|_{2}^2}\left|\Phi_{\lambda}(\langle X_i,u\rangle) - \Phi_{\lambda}(\langle X_i,v\rangle)\right|^2 \ge c \cdot\lambda^3
\]
with probability at least $1 - e^{- c \lambda m},$ as long as we have $m \gtrsim \lambda^{-1} n \log\left( \frac{1}{\lambda}\right),$ as desired. 
\end{proof}

\begin{remark} We note that, in general, a bound of the form $m \ge c \lambda^{-1} n$ is unavoidable if we wish for the general results in this manuscript to hold with probability at least $1-e^{-cn}$. Indeed, suppose that, with probability at least $1-e^{-cn}$, we have 
\[
\frac{1}{m}\sum_{j=1}^m \frac{1}{\|u-v\|_{2}^2}\left|\Phi_{\lambda}(\langle X_i,u\rangle) - \Phi_{\lambda}(\langle X_i,v\rangle)\right|^2 \ge c(\lambda),
\]
for some constant $c(\lambda) >0$ and all $u,v \in \B_{\R^n},$ where we assume the basic framework of Theorem \ref{General distributions}. Then, letting $v = (1-\varepsilon)u$ and making $\varepsilon \to 0$ as in the proof of Theorem \ref{thm:best-lam}, we should have that 
\begin{align*} 
\lambda^2 \frac{1}{m} \sum_{j=1}^m \mathds{1}_{\{|\langle X_i, u\rangle| \le \lambda\}} & \ge \frac{1}{m}\sum_{j=1}^m |\langle X_i, u\rangle|^2 \mathds{1}_{\{|\langle X_i,u\rangle| \le \lambda\} } \cr 
& \ge c(\lambda),
\end{align*}
that is, 
\begin{equation}\label{eq:Bernoulli-lower} 
\frac{1}{m} \sum_{j=1}^m \mathds{1}_{\{|\langle X_i, u\rangle| \le \lambda\}} \ge \frac{c(\lambda)}{\lambda^2}.
\end{equation} 
Now, since the variables $\left\{ \mathds{1}_{\{ |\langle X_i,u\rangle| \le \lambda\}} \right\}_{i=1}^m$ are actually i.i.d.~Bernoulli$(p)$ variables, where $p$ is of the order of $\lambda$, the probability that the left-hand side of \eqref{eq:Bernoulli-lower} is non-zero is at most $1-(1 - \lambda)^m.$ Hence, putting everything together, we obtain that 
\[
1-(1-\lambda)^m \ge 1 - e^{-cn},
\]
which, upon rearranging and using that $(1-\lambda)^m$ is of the order of $e^{-\lambda m}$ for $\lambda$ small, yields that $\lambda m \ge c n$, which is the claimed bound. 
\end{remark}

We now show that, in general, the bound achieved by Theorem \ref{thm:main-vc} is \emph{sharp}: 

{ 
\begin{proof}[Proof of Theorem \ref{thm:best-lam}]
Let $(x_j)_{j=1}^m$ be any fixed collection of vectors in $\sqrt{n}\S^{n-1}$.  For each $u\in \S^{n-1}$ we have that 
\begin{equation*}
\lim_{\varepsilon\rightarrow0}\frac{1}{m}\sum_{i=1}^m \frac{1}{\|u-(1-\varepsilon)u\|_{2}^2}\left|\Phi_{\lambda}(\langle u,x_j\rangle) - \Phi_{\lambda}(\langle (1-\varepsilon)u,x_j\rangle)\right|^2= \frac{1}{m}\sum_{|\langle u,x_j\rangle|\leq\lambda} \left|\langle u,x_j\rangle\right|^2.\end{equation*}
Thus, in order to prove Theorem \ref{thm:best-lam} it is sufficient to prove that there exists a universal constant $\beta>0$ so that there exists $u\in \S^{n-1}$ such that $\frac{1}{m}\sum_{|\langle u,x_j\rangle|\leq\lambda} \left|\langle u,x_j\rangle\right|^2<\beta \lambda^{3}$.  
This will be done by proving the claim that if $c_1$ is the constant given in Theorem \ref{thm:main-vc} and if $U$ is a random vector which is uniformly distributed in $\S^{n-1}$ then  
\begin{equation}\label{E:smaller}
\E \frac{1}{m}\sum_{|\langle U,x_j\rangle|\leq\lambda}|\langle U,x_j\rangle|^2\leq c_1\lambda^{3}. 
\end{equation}
Let 
 $U$ be a random vector which is uniformly distributed in $\S^{n-1}$ and let $x$ be a fixed vector in $\sqrt{n}\S^{n-1}$.  This scaling keeps \eqref{E:proportional} invariant, and thus we have for the same universal constants $c_1> c_0$ in the proof of Theorem \ref{thm:main-vc} that
 \begin{equation*}\label{E:proportional2}    
c_1\lambda \geq\mathbb{P}(|\langle U,x\rangle|\le \lambda)\geq c_0 \lambda.  
\end{equation*}
We now have that
\begin{align*}
    \E \frac{1}{m}\sum_{|\langle U,x_j\rangle|\leq\lambda}|\langle U,x_j\rangle|^2
    &\leq \E \frac{1}{m}\sum_{|\langle U,x_j\rangle|\leq\lambda} \lambda^2\\
&\leq \mathbb{P}\,(|\langle U,x\rangle|\leq\lambda) \cdot \lambda^2 \leq (c_1 \lambda)\lambda^2.
\end{align*}
This proves our claim that \eqref{E:smaller} holds and finishes the proof.
\end{proof}

}

\begin{remark}\label{rmk:rsb-needed} Note that the proof above, when applied in the \emph{continuous setting}, shows that the reverse small ball property \eqref{eq:reverse-small-ball} is actually \emph{necessary} for declipping to hold stably for all $\lambda > 0$. 
Indeed, by repeating the proof above, we get that 
\begin{equation}\label{eq:limit-expect} 
\lim_{\varepsilon \to 0} \mathbb{E} \left( \frac{|\Phi_\lambda(\langle X_i,u\rangle) - \Phi_\lambda(\langle X_i,(1-\varepsilon)u \rangle)|^2}{\|\varepsilon u\|_2^2}\right) = \mathbb{E} |\langle X,u\rangle|^2\mathds{1}_{\{|\langle X,u\rangle| \le \lambda\}},
\end{equation} 
and hence if the left-hand side of \eqref{eq:limit-expect} is bounded from below by $c(\lambda),$ then we have that 
\[
\mathbb{P} \left(|\langle X,u\rangle| \le \lambda\right) \ge \frac{c(\lambda)}{\lambda^2},
\]
for each $u \in \S^{n-1}.$ This is \emph{exactly} the content of \eqref{eq:reverse-small-ball}, as desired. 

In a similar spirit, one readily sees that, in order for the lower bound stated in Theorem \ref{thm:main-vc} to hold, it must be the case that the lower bound in \eqref{E:proportional} holds as well. On the other hand, it is not a difficult task to check that, if one assumes that both inequalities in \eqref{E:proportional} hold, then the proof of Theorem \ref{thm:main-vc} above may be repeated almost verbatim, which shows that our proof actually applies to a wider class of distributions - a fact to which we alluded in the introduction. 

A completely characterization of all distributions such that the result of Theorem \ref{thm:main-vc} holds seems to be elusive: the examples provided by Remark \ref{rmk:ssb-sharp?} show that the upper bound in \eqref{E:proportional} is \emph{not} necessary in order for a conclusion similar to that of Theorem \ref{thm:main-vc} to hold. 

Furthermore, it is possible to show that the constant $c(\lambda)$ given by Proposition \ref{prop:double-sphere-ex} can be taken to be $c(\lambda) \sim \lambda^3$. This shows that a reasonable degree of \emph{singularity} of the distributions of the random vectors considered can be allowed so that a continuous version of Theorem \ref{thm:main-vc} still holds. The question of under which minimal conditions we may infer such a bound is an interesting one, which we wish to address in future work. 
\end{remark}

\subsection{Sharp unlimited sampling for uniformly spherically distributed vectors} We now deal with the results pertaining to sharp stable unlimited sampling.

{
\begin{proof}[Proof of Theorem \ref{T:folding}]
 Suppose that $X$ is a random vector which is uniformly distributed on the sphere $\sqrt{n}\S^{n-1}$ and $\lambda>0$. 
  We will consider separately the case that $u,v\in \B_{\R^n}$ with $\|u-v\|_2\leq \lambda$ and the case that $u,v\in \B_{\R^n}$ with $\|u-v\|_2\geq \lambda$. \medskip

\noindent \textbf{Case 1}:  $u,v\in \B_{\R^n}$ with $0<\|u-v\|_2\leq \lambda$. \medskip

We claim first that there exist constants  $C,c_1 > 0$ so that the following holds: If $m\ge C n$ and $(X_j)_{j=1}^m$ are uniformly distributed in $\sqrt{n}\S^{n-1}$ then with probability at least $1-e^{-c_1m}$, we have  that 
\begin{equation}\label{E:12}
  |\{j:1/2\|w\|_2\leq |\langle X_j,w\rangle|\leq\|w\|_2\}|\geq Cm\hspace{1cm}\textrm{ for all }w\in\R^n.  
\end{equation}
A way to see this is by considering, once more, the class $\mathcal{F} = \{\mathds{1}\left( 1/2 \le |\langle X_i,w\rangle| \le 2 \right) \colon w \in \S^{n-1}\}.$ Since elements in this class may be represented as unions of up to four hyperplanes, it follows that $vc(\mathcal{F}) \lesssim n.$ Hence, by applying Proposition \ref{prop:vc-quantif-lln} to this class, the assertion follows. We may therefore assume that \eqref{E:12} holds from now on.

A key difference between clipping and folding is that the operation $\mathcal{M}_\lambda$ satisfies that for any $s,t\in\R$ for which $|s-t|\leq \lambda$,  $|\mathcal{M}_\lambda(s)-\mathcal{M}_\lambda(t)|\geq |s-t|$. In particular, as $0<\|u-v\|_2\leq\lambda$ we have that
\begin{align*}
\frac{1}{m}\sum_{i=1}^m \frac{1}{\|u-v\|_{2}^2}\left|\mathcal{M}_{\lambda}(\langle X_i,u\rangle) - \mathcal{M}_{\lambda}(\langle X_i,v\rangle)\right|^2 &\ge \frac{1}{4m} \sum_{i=1}^m \mathds{1}\left(2^{-1}\le \left|\left\langle X_i,\frac{u-v}{\|u-v\|_2}\right\rangle\right|\le1\right)\\
&\ge \frac{C}{4} \hspace{1cm}\textrm{ by \eqref{E:12}.}
\end{align*}
This completes the proof in the case that $\|u-v\|_2\leq \lambda$. \medskip

\noindent \textbf{Case 2:}  $u,v\in \B_{\R^n}$ satisfy $\lambda\leq\|u-v\|_2$. \medskip

Note that this case implicitly assumes that $0<\lambda\leq 2$.
We now fix $1/4>a>0$ to be decided later.
Let $\Omega_a\subseteq\R$ be given by $\Omega_a=\cup_{m\in\Z}[(a+2m)\lambda,(1-a+2m)\lambda]$.  
Note that if $\langle X,u-v\rangle\in\Omega_a$ then
$\left|\mathcal{M}_{\lambda}(\langle X_i,u\rangle) - \mathcal{M}_{\lambda}(\langle X_i,v\rangle)\right|\geq a\lambda$.  Thus, we have that
\begin{equation}\label{E:fold_far}
\frac{1}{m}\sum_{i=1}^m \frac{1}{\|u-v\|_{2}^2}\left|\mathcal{M}_{\lambda}(\langle X_i,u\rangle) - \mathcal{M}_{\lambda}(\langle X_i,v\rangle)\right|^2 \ge \frac{(a\lambda)^2}{m} \sum_{i=1}^m \mathds{1}\left( \langle X_i,u-v\rangle\in\Omega_a\right).
\end{equation}
Notice that the right-hand side of \eqref{E:fold_far} only depends on $u-v$ and not on $u,v$ separately.  Thus, by substituting $w=u-v$, we may consider just the single vector $w\in 2\B_{\R^n}\setminus\lambda \B_{\R^n}$.  

For a subset $\Omega\subseteq\R$ we denote $|\Omega|$ to be the Lebesgue measure of $\Omega$.  Note that for every interval $I\subseteq\R$ with $|I|\geq\lambda$, we have that $|\Omega_a\cap I|\geq (1-2a)|I|$.  As $X$ is uniformly distributed in $\sqrt{n}\S^{n-1}$,  there exists a universal constant $c>0$ so that
 \begin{equation*}\label{E:mod_in}
 \mathbb{P}(\langle X,w\rangle\in\Omega_a)\geq 1-c\cdot a
\hspace{.5cm}\textrm{ for all $w\in2\R^n\setminus\lambda\R^n$.} 
\end{equation*}
Hence,
 \begin{equation*}\label{E:expected}
 \mathbb{E}\frac{1}{m}\sum_{i=1}^m \mathds{1}\left( \langle X_i,w\rangle\in\Omega_a\right)\geq 1-c \cdot a
\hspace{.5cm}\textrm{ for all $w\in2\R^n\setminus\lambda\R^n$.} 
\end{equation*}

By Proposition \ref{prop:bounded-differences} with $t=c\cdot a$,  for all $w\in 2\B_{\R^n}\setminus\lambda \B_{\R^n}$ we have with probability at least $1-e^{-Ca^2m}$ that 
\begin{equation}\label{E:big_p}
    \frac{1}{m}\sum_{i=1}^m \mathds{1}\left( \langle X_i,w\rangle\in\Omega_{2a}\right)\geq 1-c \cdot a.
\end{equation}

Let $\mathcal{N} = \{w_j\}_j$ be an $\vp$-net in  $2\B_{\R^n}$ with $|\mathcal{N}|\leq (6/\vp)^n$, where we set $\vp = c \cdot \lambda$ for some sufficiently small absolute constant $c > 0$. Then, we have with probability at least $1-e^{(c-\log(\lambda))n-C a^2m}$ that \eqref{E:big_p} holds for all $w \in \mathcal{N}.$ 

We now consider the task of extending this to all of $2\B_{\R^n}\setminus\lambda \B_{\R^n}$. In order to do so, we need to invoke once more Proposition \ref{prop:vershynin-subgauss} which guarantees that, with high probability, $(\frac{1}{\sqrt{m}}X_i)_{i=1}^m$ has upper frame bound $2$, or, in other words, \eqref{eq:upper-frame-bound} holds.  

This gives that there are uniform constants $c,C>0$ so that if $m\geq C\log(1/\lambda)n$ then with probability at least $1-e^{-cm}$, \eqref{eq:upper-frame-bound} and  \eqref{E:big_p} hold for all $w \in \mathcal{N}.$ We may hence assume that this is the case.

Let $w\in 2\B_{\R^n}\setminus\lambda\B_{\R^n}$ and choose $w_j \in \mathcal{N}$ such that $\|w-w_j\|_2\leq\vp$.  We have, in a similar way to the proof of Theorem \ref{thm:main-vc}, 
\begin{align*}
 \frac{1}{m}\sum_{i=1}^m \mathds{1}\left( \langle X_i,w\rangle\in\Omega_{a}\right)&\ge  \frac{1}{m}\sum_{i=1}^m \mathds{1}\left( \langle X_i,w_j\rangle\in\Omega_{2a}\textrm{ and }|\langle X_i,w-w_j\rangle|\leq a\lambda\right)\\
 &\ge  \frac{1}{m}\sum_{i=1}^m \mathds{1}\left( \langle X_i,w_j\rangle\in\Omega_{2a})\right)- \frac{1}{m}\sum_{i=1}^m \mathds{1}\left(|\langle X_i,w-w_j\rangle|> a\lambda\right)\\
  &\ge  (1-c\cdot a)- (a\lambda)^{-2}\frac{1}{m}\sum_{i=1}^m |\langle X_i,w-w_j\rangle|^2\\
    &\ge  (1-c\cdot a)- (a\lambda)^{-2} 2\|w-w_i\|_2^2  \ge c.
\end{align*}

By \eqref{E:fold_far}, this gives for all $u,v\in \B_{\R^n}$ with $\|u-v\|_2\geq\lambda$ that 
\begin{align*}
\frac{1}{m}\sum_{i=1}^m \frac{1}{\|u-v\|_{2}^2}\left|\mathcal{M}_{\lambda}(\langle X_i,u\rangle) - \mathcal{M}_{\lambda}(\langle X_i,v\rangle)\right|^2 &\ge \frac{(a\lambda)^2}{m} \sum_{i=1}^m \mathds{1}\left( \langle X_i,u-v\rangle\in\Omega_a\right)  
\geq c \cdot \lambda^2.
\end{align*}
This finishes the proof. 
\end{proof}

}

{  

\begin{remark} From the proof above, one may be tempted to conjecture that there should actually be \emph{no} dependence on $\lambda$ \emph{at all} in the number of samples: indeed, we may, instead of the proof above, use the lower bound
\begin{align*}
 \frac{1}{m}\sum_{i=1}^m \mathds{1}\left( \langle X_i,w\rangle\in\Omega_{a}\right)&\ge  \frac{1}{m}\sum_{i=1}^m \mathds{1}\left( \langle X_i,w_j\rangle\in\Omega_{2a}\textrm{ and }|\mathcal{M}_{\lambda}\left(\langle X_i,w-w_j\rangle|\right)|\leq a\lambda\right)\\
 &\ge  \frac{1}{m}\sum_{i=1}^m \mathds{1}\left( \langle X_i,w_j\rangle\in\Omega_{2a})\right)- (a\lambda)^{-2}\frac{1}{m}\sum_{i=1}^m (a\lambda)^2\mathds{1}\left(|\mathcal{M}_{\lambda}\left( \langle X_i,w-w_j\rangle\right)|> a\lambda\right). 
\end{align*}
So, one is naturally led to consider pointwise estimates for quantities such as 
\[
\frac{1}{m} \sum_{j=1}^m \mathds{1} \left( |\mathcal{M}_{\lambda}\left( \langle X_i,v\rangle \right)| > t\right). 
\]
On the other hand, a direct argument shows that we have that 
\[
\mathbb{P}( |\mathcal{M}_\lambda ( \langle X,v\rangle)| > a\cdot \lambda ) \to p(a),
\]
whenever $v \in \R^n$ is fixed, where $p(a)$ converges to $1$ as $a \to 0$. Hence, one \emph{cannot} leverage any smallness in norm of $v$ in order to conclude bounds on the probability above, making it unclear how to execute the previously discussed idea with the current methods. 
\end{remark}}
\begin{remark}\label{rmk:sharpness} In analogy to what was done above in the proof of Theorem \ref{thm:best-lam}, it is also possible to prove that the dependency on $\lambda$ in Theorem \ref{T:folding} is \emph{sharp}. 

Indeed, take $v = 0$ and let $U$ be uniformly distributed on $\S^{n-1}$. We have that 
\[
\mathbb{E} \, \frac{1}{m \|u-v\|_2^2} \sum_{i=1}^m |\mathcal{M}_\lambda(\langle x_i,U\rangle) - \mathcal{M}_\lambda(\langle x_i,v\rangle)|^2 = \mathbb{E} |\mathcal{M}_\lambda(\langle x,U\rangle)|^2,
\]
where $x \in \sqrt{n} \S^{n-1}$ is arbitrary. By denoting by $d \mu_x$ the pushforward of the uniform measure on $\S^{n-1}$ by $\langle x, \cdot \rangle$; that is, the one-dimensional measure such that 
\[
\mathbb{E} f(\langle x,U\rangle) = \int_{\R} f(s) \, d\mu_x(s),
\]
we have that 
\[
\frac{1}{\lambda^2} \mathbb{E} |\mathcal{M}_\lambda(\langle x,U\rangle)|^2 = \int_\R \frac{\left| \mathcal{M}_\lambda (s)\right|^2}{\lambda^2} \, d\mu_{x}(s).
\]
Now, the simplest version of the Funk-Hecke formula \cite[Theorem~1.2.9]{dai} says that the measure $d\mu_x$ is \emph{independent of }$x \in \sqrt{n}\S^{n-1}$, and, as a matter of fact, we have 
\[
d\mu_x(s) = \frac{\Gamma\left( \frac{n}{2}\right)}{\sqrt{n} \Gamma\left( \frac{n-1}{2}\right)} \cdot (1-s^2)^{\frac{n-3}{2}} \cdot \mathds{1}_{[-1,1]}(s). 
\]
Since, on the other hand, we also have that 
\[
\frac{\left| \mathcal{M}_\lambda (s)\right|^2}{\lambda^2} = |\mathcal{M}_1(s/\lambda)|^2,
\]
a change of variables shows that 
\begin{equation}\label{eq:equality-weak-conv} 
\int_\R \frac{\left| \mathcal{M}_\lambda (s)\right|^2}{\lambda^2} \, d\mu_{x}(s) = \frac{\Gamma\left( \frac{n}{2}\right)}{\sqrt{n} \Gamma\left( \frac{n-1}{2}\right)}  \cdot \lambda \int_{-1/\lambda}^{1/\lambda} |\mathcal{M}_1(t)|^2 \cdot  (1-\lambda^2 t^2)^{\frac{n-3}{2}}\, dt,
\end{equation}
and it is then not difficult to check that the right-hand side of \eqref{eq:equality-weak-conv} is bounded from below by a dimensionless, positive constant as $\lambda \to 0$. In other words, the result of Theorem \ref{T:folding} cannot be improved upon for general $u,v \in \B_{\R^n}$, as we wanted to show. 
\end{remark}

\subsection{Sharp declipping in the sparse setting} {We now move on to the proof of Theorem \ref{thm:main-sparse}. As we shall see below, the proof of that result may be obtained by essentially running again the arguments used in the proof of Theorem \ref{thm:main-vc}.
\begin{proof}[Proof of Theorem \ref{thm:main-sparse}] We now highlight the changes needed in the proof of Theorem \ref{thm:main-vc} in order to obtain Theorem \ref{thm:main-sparse}. 

First, note that the set $T_s$ of $s$-sparse vectors can be covered by an $\varepsilon$-net of size $O\left( {n \choose s} \cdot \left( \frac{C}{\varepsilon}\right)^{s}\right)$, and hence, repeating the proof of Theorem \ref{thm:main-vc}, we get that, for $
m \ge C \left( s\log(1/\lambda) + s \log\left(\frac{en}{s}\right)\right),$ the exact same assertions as in that result hold. This proves the first statement in Corollary \ref{cor:sparse-rec}.

In order to prove the second assertion in Theorem \ref{thm:main-sparse}, we need the following result -- for its proof, we refer the reader to \cite[Lemma~3.4]{plan2013one}.
\begin{lemma}[Lemma 3.4 in \cite{plan2013one}]\label{lemma:epsilon-effect-sparse} There exists an $\varepsilon$-net $\mathcal{N}_{s,n}$ of $\Sigma_s \cap \S^{n-1}$ of size 
\[
\log|\mathcal{N}_{s,n}| \le \frac{C}{\varepsilon^2} s \log\left( \frac{en}{s}\right). 
\]
\end{lemma}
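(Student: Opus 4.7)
The plan is to construct the net explicitly using the Maurey empirical method. The starting observation is that any $x \in \Sigma_s \cap \S^{n-1}$ satisfies $\|x\|_1 \le \sqrt{s}\|x\|_2 = \sqrt{s}$, so $\Sigma_s \cap \S^{n-1}$ is contained in $\sqrt{s} \cdot \conv\{0, \pm e_1, \ldots, \pm e_n\}$. This representation will allow me to approximate an arbitrary $x$ by a convex combination of only $k \sim s/\varepsilon^{2}$ extreme points, with $k$ independent of the ambient dimension $n$.

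Concretely, given $x = \sum_i x_i e_i \in \Sigma_s \cap \S^{n-1}$, I would introduce a random vector $W$ supported on $\{0, \pm e_1,\ldots,\pm e_n\}$ defined by $\mathbb{P}(W = \mathrm{sign}(x_i)\,e_i) = |x_i|/\sqrt{s}$ and $\mathbb{P}(W=0) = 1 - \|x\|_1/\sqrt{s}$. A direct check yields $\mathbb{E}[\sqrt{s}\,W] = x$ and $\mathbb{E}\|W\|_2^2 = \|x\|_1/\sqrt{s} \le 1$. Setting $k := \lceil s/\varepsilon^{2}\rceil$ and $\hat{x} := \frac{\sqrt{s}}{k}\sum_{j=1}^{k} W_j$ for i.i.d.~copies $W_1,\ldots,W_k$ of $W$, an elementary variance computation gives
\[
\mathbb{E}\|\hat{x}-x\|_2^2 = \frac{s}{k^{2}}\sum_{j=1}^{k}\mathbb{E}\|W_j - \mathbb{E}W\|_2^2 \le \frac{s}{k}\,\mathbb{E}\|W\|_2^2 \le \frac{s}{k} \le \varepsilon^{2},
\]
so at least one realization of $\hat{x}$ satisfies $\|\hat{x}-x\|_2 \le \varepsilon$. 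Crucially, every such $\hat{x}$ lies in the finite set
\[
\mathcal{N}_{s,n} := \left\{\frac{\sqrt{s}}{k}\sum_{j=1}^{k}w_j \,:\, w_j \in \{0,\pm e_1,\ldots,\pm e_n\}\right\},
\]
which is independent of $x$; hence $\mathcal{N}_{s,n}$ serves as a uniform $\varepsilon$-net for $\Sigma_s \cap \S^{n-1}$.

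To conclude, I would estimate $|\mathcal{N}_{s,n}|$ by counting multisets of size $k$ drawn from a ground set of $2n+1$ elements, which yields
\[
|\mathcal{N}_{s,n}| \le \binom{2n+k}{k} \le \left(\frac{e(2n+k)}{k}\right)^{k}.
\]
Taking logarithms, substituting $k = \lceil s/\varepsilon^{2}\rceil$ (so $k \ge s$ for $\varepsilon \le 1$), and using the elementary bound $\log((2n+k)/k) \le C \log(en/s)$ for $n \ge s$ produces the desired estimate $\log|\mathcal{N}_{s,n}| \le C\varepsilon^{-2}s\log(en/s)$. I do not foresee any serious technical obstacle: the variance bound is the standard Maurey computation, and the cardinality estimate is purely combinatorial. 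The only delicate points are verifying the last logarithmic simplification and handling the trivial regimes $\varepsilon \ge 1$ or $s = n$ separately, each of which admits a net of trivially bounded cardinality.
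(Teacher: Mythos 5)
Your argument is correct and complete. Note that the paper does not actually prove this lemma: it is quoted verbatim from Plan--Vershynin and the reader is referred there, so you have supplied a self-contained proof where the paper only gives a citation. Your route is the standard Maurey empirical method, which is also the natural way to establish the cited result; the main alternative would be to invoke Sudakov minoration together with the Gaussian width bound $w(\Sigma_s\cap \B_{\R^n})\lesssim \sqrt{s\log(en/s)}$, which yields the same $\varepsilon^{-2}s\log(en/s)$ scaling but is less explicit about what the net actually is. All the individual steps check out: the containment $\Sigma_s\cap\S^{n-1}\subset \sqrt{s}\,\conv\{0,\pm e_1,\dots,\pm e_n\}$ follows from $\|x\|_1\le\sqrt{s}\|x\|_2$; the random vector $W$ has the right mean and second moment; the variance bound $\mathbb{E}\|\hat{x}-x\|_2^2\le s/k\le\varepsilon^2$ is the standard Maurey computation; and the multiset count $\binom{2n+k}{k}\le (e(2n+k)/k)^k$ with $s\le k\le 2s/\varepsilon^2$ gives the claimed logarithmic bound. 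The only point worth flagging is that your $\mathcal{N}_{s,n}$ is an \emph{external} net (its points need not lie in $\Sigma_s\cap\S^{n-1}$); this is harmless for every use of the lemma in the paper, and in any case one can pass to an internal $2\varepsilon$-net by replacing each useful net point with a nearby point of the set, which only changes the constant $C$. Your remark about treating the trivial regimes $\varepsilon\ge 1$ and $s=n$ separately is also the right precaution.
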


We then note the following improvement over the proof of Theorem \ref{thm:spherical}: we may use Proposition \ref{prop:vc-quantif-lln} together with Proposition \ref{prop:bounded-differences} (in the same way as in the proof of Theorem \ref{General distributions}, for instance) in order to say that, with probability at least $1 - e^{-c \lambda^2 m}$, we have 
\[
\frac{1}{m} \sum_{i=1}^m \mathds{1} \left( |\langle X_i, u-u_0\rangle| \ge \lambda\right) \le \mathbb{P} \left( |\langle X,u-u_0\rangle| \ge \lambda\right) + C \sqrt{\frac{n}{m}} + t.
\]
We now pick $t = c\lambda, \, m \gtrsim \lambda^{-2} \cdot n$, and choose $\varepsilon = c \cdot \frac{\lambda}{\sqrt{\log\left( \frac{1}{\lambda}\right)}}$ for the $\varepsilon$-net given by Lemma \ref{lemma:epsilon-effect-sparse} above. This shows that 
\[
\mathbb{P}\left( |\langle X,u-u_0\rangle| \ge \lambda\right) \le C \cdot e^{- \frac{\lambda^2}{\|u-u_0\|_2^2}} \le c \cdot \lambda,
\]
where we used the fact that $X$ is a subgaussian vector. It follows then that, with probability at least $1 - e^{-c \lambda^2 m},$ we have 
\[
\frac{1}{m} \sum_{i=1}^m \mathds{1}{ \left( |\langle X,u-u_0\rangle| \ge \lambda\right)} \le c \cdot \lambda.
\]
By using this together with the bound on the sizes of $\varepsilon$-nets (Lemma \ref{lemma:epsilon-effect-sparse}) in the set of effectively sparse vectors on the sphere $\Sigma_s \cap \S^{n-1}$ in the proof of Theorem \ref{thm:spherical}, we obtain that, for $m \ge C \lambda^{-3} \log\left(1/\lambda\right) s \log\left( \frac{en}{s}\right)$, our result holds as claimed. 
\end{proof}

 \begin{remark} We note that the dependency on $\lambda$ in the lower bound in Theorem \ref{thm:main-sparse} is again \emph{best possible}. Analogously, by repeating the proof of Theorem \ref{thm:best-lam}, we obtain that, once more, the bound asserted by Theorem \ref{thm:main-sparse} is also best possible when considering any sequence of vectors $x_1,\dots,x_m$ on $\sqrt{n} \S^{n-1}$, with $m \ge n.$

\end{remark}
}

\section{Sharp declipping for spherical vectors and one-bit compressed sensing}\label{sec:one-bit}

We start with a couple of lemmas, the first of which shows that, whenever two vectors on the unit sphere are at least separated by a bit, we can get a good lower bound on the probability that one of the vectors gets clipped above a certain threshold and the other below.

\begin{lemma}\label{lemma:estimates-sets} There is an absolute constant $C>0$ for which the following assertion holds: Let $\lambda >0$ and $u,v \in \S^{n-1}$ satisfy that $\|u-v\|_2 \ge C \lambda.$ Then there is an absolute constant $c>0$ such that 
\[
\mathbb{P} \left( \langle X,u\rangle \le -\lambda, \, \langle X, v \rangle \ge \lambda \right) \ge c \cdot \|u-v\|_2.
\]
\end{lemma}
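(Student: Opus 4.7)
The plan is to work in the 2-plane spanned by $u$ and $v$, and exploit the explicit density for the projection of the uniform distribution on $\sqrt{n}\S^{n-1}$ onto a 2-dimensional subspace. Set $a := \|u-v\|_2/2$ and $b := \|u+v\|_2/2$; since $u,v \in \S^{n-1}$, one has $a^2+b^2 = 1$. I first dispose of the antipodal case $v=-u$ (so $b=0$): then both conditions collapse to $\{\langle X,u\rangle \le -\lambda\}$, whose probability is a constant by \eqref{E:proportional} (for $\lambda$ bounded away from $1$, which the hypothesis enforces when $C$ is chosen suitably), and this matches $c\|u-v\|_2 = 2c$. Assuming now $b > 0$, put $w_1 := (v-u)/\|v-u\|_2$ and $w_2 := (u+v)/\|u+v\|_2$; these are orthonormal since $\langle v-u,u+v\rangle = \|v\|^2-\|u\|^2 = 0$. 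A direct computation gives $u = -aw_1+bw_2$ and $v = aw_1+bw_2$, so writing $Y_i := \langle X, w_i\rangle$, the event of interest reads
\[
\{\langle X,u\rangle \le -\lambda,\ \langle X,v\rangle \ge \lambda\} = \{aY_1 - bY_2 \ge \lambda,\ aY_1+bY_2 \ge \lambda\} = \{aY_1 \ge \lambda + b|Y_2|\}.
\]

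Next I would exhibit an explicit sub-event whose measure is easy to estimate. Choose the constant $C = 4$, so that the hypothesis $\|u-v\|_2 \ge C\lambda$ yields $2\lambda/a \le 1$, and consider
\[
E := \left\{Y_1 \in \left[\tfrac{2\lambda}{a}+1,\ \tfrac{2\lambda}{a}+2\right]\right\} \cap \left\{|Y_2| \le \min\!\left(\tfrac{a}{b},\,1\right)\right\},
\]
a rectangle contained in the compact set $[1,3]\times[-1,1]$. On $E$ one has $aY_1 \ge 2\lambda+a$ and $b|Y_2| \le \min(a,b) \le a$, hence $aY_1 - b|Y_2| \ge 2\lambda \ge \lambda$, so $E$ is contained in the event above.

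To lower bound $\mathbb{P}(E)$, I would invoke the explicit 2D marginal of the uniform distribution on $\sqrt{n}\S^{n-1}$, which (as in the Funk-Hecke computation used in Remark \ref{rmk:sharpness}) has density on the disk of radius $\sqrt{n}$ given by
\[
\rho_n(y_1,y_2) = \frac{n-2}{2\pi n}\left(1-\frac{y_1^2+y_2^2}{n}\right)^{(n-4)/2}_+.
\]
On the compact set $\{y_1^2+y_2^2 \le 10\}$ this is bounded below by an absolute positive constant, uniformly for $n$ above an absolute threshold (the finitely many small $n$ being handled directly). Consequently,
\[
\mathbb{P}(E) \gtrsim \operatorname{area}(E) = 1 \cdot 2\min(a/b,1) \ge \min(2a,2) \gtrsim \|u-v\|_2,
\]
where the last step uses $b \le 1$ when $a/b \le 1$, and $\|u-v\|_2 \le 2$ in the opposite regime. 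This gives the claimed bound with an absolute constant $c$.

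The only genuinely technical point is the uniform lower bound on $\rho_n$ over a fixed compact set, but this is immediate from the explicit formula since $(1 - 10/n)^{(n-4)/2} \to e^{-5}$ as $n \to \infty$ and is bounded below for moderate $n$. The rest of the argument is a bookkeeping exercise carefully tracking the two regimes $a \lesssim b$ and $a \gtrsim b$, which together cover all admissible $u,v$.
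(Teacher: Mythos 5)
Your argument takes a genuinely different route from the paper's: the paper represents $X=\sqrt{n}\,Y/\|Y\|_2$ with $Y$ standard Gaussian, and lower-bounds an explicit bivariate Gaussian integral after subtracting the probability that $\|Y\|_2$ is atypical, whereas you project directly onto the plane spanned by $u$ and $v$ and integrate the explicit two-dimensional spherical marginal over a fixed rectangle. For $n$ above an absolute threshold your proof is correct and arguably cleaner: the reduction of the event to $\{aY_1\ge \lambda+b|Y_2|\}$ is exact, the inclusion $E\subseteq\{aY_1\ge\lambda+b|Y_2|\}$ and the area bound $\operatorname{area}(E)\ge\|u-v\|_2$ both check out, and the density lower bound on $\{y_1^2+y_2^2\le 10\}$ is indeed uniform once $n\ge 11$. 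It also avoids the slightly delicate two-sided comparison of probabilities that the paper needs.

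The gap is the treatment of small $n$, which you dismiss as ``handled directly'' and ``immediate from the explicit formula.'' It is not: the obstruction in low dimensions is not the size of the density but its \emph{support}. Your rectangle $E$ reaches out to $y_1^2+y_2^2=10$, while the two-dimensional marginal of the uniform distribution on $\sqrt{n}\,\S^{n-1}$ is supported in the disk of radius $\sqrt{n}$; for $n\le 10$ part (or all) of $E$ lies outside that disk or on its boundary where the density vanishes, so $\mathbb{P}(E)$ can be zero (for $n=2$ the marginal is moreover singular and the density formula does not apply at all). The lemma is still true in those dimensions, and the fix is routine --- shrink the rectangle to $Y_1\in[2\lambda/a+\delta_n,\,2\lambda/a+2\delta_n]$, $|Y_2|\le \delta_n\min(a/b,1)$ with $\delta_n$ small enough that it sits strictly inside the disk of radius $\sqrt{n}$, and treat $n=2$ by a direct arc-length computation on the circle --- but as written these finitely many cases are simply not covered, and the stated reason they are supposed to be easy is the wrong one. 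A smaller bookkeeping point: the constant $C=4$ fixed for the main case must also be enlarged so that the antipodal case goes through (you need $1-c_1\lambda$ bounded below, hence $C\gtrsim c_1$), so the final absolute constant is $\max(4,4c_1)$ rather than $4$.
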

\begin{proof} We start by noting that we may write 
\[
X = \sqrt{n} \frac{Y}{\|Y\|_2} \, , 
\]
where $Y \distas{d} \mathcal{N}(0,I_{n \times n})$. Hence, for each $C'>0$,
\[
\mathbb{P} \left( \langle X,u\rangle \le -\lambda, \, \langle X, v \rangle \ge \lambda \right)  = \mathbb{P} \left( \langle Y,u\rangle \le - \lambda \frac{\|Y\|_2}{\sqrt{n}}, \, \langle Y, v \rangle \ge \lambda \frac{\|Y\|_2}{\sqrt{n}}\right)  
\]
\[
\ge \mathbb{P} \left( \langle Y,u\rangle \le - \lambda \frac{\|Y\|_2}{\sqrt{n}}, \, \langle Y, v \rangle \ge \lambda \frac{\|Y\|_2}{\sqrt{n}}, \, \frac{1}{C'} \sqrt{n} \le \|Y\|_2 \le C' \sqrt{n}\right)  
\]
\[
\ge \mathbb{P} \left( \langle Y,u\rangle \le -C' \lambda, \langle Y,v\rangle \ge C' \lambda, \frac{1}{C'} \sqrt{n} \le \|Y\|_2 \le C' \sqrt{n}\right)  
\]
\begin{equation}\label{eq:two-probs}
\ge \mathbb{P}\left(  \langle Y,u\rangle \le -C' \lambda, \langle Y,v\rangle \ge C' \lambda\right) - \mathbb{P}\left( \langle Y,u\rangle \le 0, \, \langle Y,v \rangle \ge 0, \, \frac{1}{C'} \sqrt{n} > \|Y\|_2 \text{ or } \|Y\|_2 > C' \sqrt{n}\right).
\end{equation}
We need to find a suitable lower bound for the first term and an upper bound for the second one. Let us thus start with the latter. Since $Y$ is invariant under rotations and since the event 
$$\{ \langle Y,u \rangle \le 0, \langle Y,v \rangle \ge 0\}$$
denotes an angle in $\R^n$ with opening of the order of $\|u-v\|_2$, we have that the latter probability is at most 
\[
\|u-v\|_2 \cdot \mathbb{P}\left( \|Y\|_2 \ge C' \sqrt{n} \text{ or } \|Y\|_2 \le \frac{1}{C'} \sqrt{n}\right) \le \|u-v\|_2 \cdot e^{-(C')^2}.
\]
We now seek a lower bound on the other probability. Since the vector $Y$ is rotation-invariant, we may write the probability we wish to estimate explicitly as 
\[
\int_{C'\lambda}^{\infty} \int_{-\infty}^{-C'\lambda} \frac{1}{2\pi\sqrt{1-\rho^2}} \exp\left\{ - \frac{1}{2(1-\rho^2)} (x^2 + y^2 - 2\rho xy)\right\}  \, dx \, dy,
\]
where we let 
$$\rho = 1 - \frac{\|u-v\|_2^2}{2}.$$
Changing variables $x = \|u-v\|_2 \cdot x', \, y = -\|u-v\|_2\cdot y',$ we see that the integral above is bounded from below by 
\begin{align}\label{eq:gaussian-two-d}
\|u-v\|_2 \cdot & \int_{C' \frac{\lambda}{\|u-v\|_2}}^{\infty} \int_{C' \frac{\lambda}{\|u-v\|_2}}^{\infty}  \exp\left\{ - \left( (x')^2 + (y')^2 + 2\rho x' y'\right)\right\} \, d x' \, dy' \cr 
 & \ge \|u-v\|_2 \int_{C'/100}^{\infty} \int_{C'/100}^{\infty} \exp(-2((x')^2 + (y')^2)) \, dx' \, dy'.
\end{align}
Now, since 
\[
\int_x^{\infty} e^{-2t^2} \, dt = \frac{1}{\sqrt{2}} \int_{\sqrt{2}x}^{\infty} e^{-s^2} \, ds \ge \frac{1}{4x} e^{-2x^2},
\] 
we have that 
\[
\int_{C'/100}^{\infty} e^{-2 (x')^2} \, dx' \ge \frac{1}{C'} e^{-\frac{(C')^2}{5000}}.
\]
It follows that the right-hand side of \eqref{eq:gaussian-two-d} is bounded from below by 
$$e^{- \frac{(C')^2}{1000}} \|u-v\|_2,$$ 
as long as $\|u-v\|_2 > 100 \lambda,$ which shows that the first probability in \eqref{eq:two-probs} above is larger than two times the second, finishing our proof, as long as we take $C \ge 100$ in the statement of the lemma. 
\end{proof}

The next result complements the previous one, showing that, on the other hand, even if both vectors are close by, then the probability of both of them \emph{not} getting clipped is non-trivial. 

\begin{lemma} For any $C>0$ there is $c>0$ such that the following holds. Let $u,v \in \S^{n-1}$ be such that $\|u-v\|_2 \le C \lambda.$ Then, for $\theta > 0$ a sufficiently small absolute constant, we have
\[
\mathbb{P}\left( \left| \left\langle X,\frac{u-v}{\|u-v\|_2}\right\rangle \right| \ge \theta \, \cap |\langle X,u\rangle| \le \lambda \, \cap |\langle X,v \rangle| \le \lambda\right) \ge c \cdot \lambda. 
\]
\end{lemma}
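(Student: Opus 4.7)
The plan begins by removing the condition on $v$ using a triangle inequality. Setting $w = (u-v)/\|u-v\|_2$, one has $v = u - \|u-v\|_2 w$, and hence
\[
|\langle X,v\rangle| \le |\langle X,u\rangle| + \|u-v\|_2\,|\langle X,w\rangle|.
\]
Since $\|u-v\|_2 \le C\lambda$, imposing $|\langle X,u\rangle| \le \lambda/2$ together with $|\langle X,w\rangle| \le 1/(2C)$ automatically forces $|\langle X,v\rangle| \le \lambda$. Thus it suffices to estimate from below
\[
\mathbb{P}\!\left( |\langle X,u\rangle| \le \lambda/2 \, \cap \, \theta \le |\langle X,w\rangle| \le 1/(2C) \right),
\]
a joint event involving only two linear functionals of $X$.

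\textbf{Two-dimensional reduction via rotational invariance.} The next step is to compute $\langle u,w\rangle = (1-\langle u,v\rangle)/\|u-v\|_2 = \|u-v\|_2/2 =:\rho$, which is bounded by $C\lambda/2$, so $u$ and $w$ are \emph{almost} orthogonal. By the rotational invariance of the uniform distribution on $\sqrt{n}\,\S^{n-1}$, I may choose coordinates in which $u = e_1$ and $w = \rho e_1 + \sqrt{1-\rho^2}\,e_2$. Writing $X_i = \langle X,e_i\rangle$, the constraint $|\langle X,w\rangle| \in [\theta,1/(2C)]$ becomes a constraint on $\rho X_1 + \sqrt{1-\rho^2} X_2$; but on the event $\{|X_1|\le \lambda/2\}$ the first term has absolute value at most $C\lambda^2/4$, which for $\lambda$ small (and $\theta$ a sufficiently small absolute constant) is negligible relative to $\theta$ and $1/(2C)$. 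Thus, after shrinking the interval slightly, the event in question contains the simpler rectangle-shaped event
\[
\mathcal{E}'' := \{|X_1| \le \lambda/2\} \cap \{ |X_2| \in [2\theta,1/(4C)]\},
\]
valid once $\theta < 1/(16C)$.

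\textbf{Estimating the two-dimensional marginal density.} The joint density of $(X_1,X_2)$ under the uniform measure on $\sqrt{n}\,\S^{n-1}$ is (for $n \ge 4$)
\[
f(x_1,x_2) \;=\; \frac{n-2}{2\pi n}\left(1 - \frac{x_1^2 + x_2^2}{n}\right)^{\!(n-4)/2}, \qquad x_1^2 + x_2^2 \le n,
\]
which on the compact set $\{x_1^2 + x_2^2 \le 1\}$ is bounded below by a positive absolute constant (it converges pointwise to the bivariate standard Gaussian density as $n\to\infty$, and is uniformly bounded below for $n\ge n_0$). Integrating over $\mathcal{E}''$ in the $(x_1,x_2)$ plane yields
\[
\mathbb{P}(\mathcal{E}'') \;\ge\; c \cdot \lambda \cdot 2\!\left(\tfrac{1}{4C} - 2\theta\right) \;\ge\; c(C,\theta)\,\lambda,
\]
which is the desired bound.

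\textbf{Main obstacle and edge cases.} The only real subtlety is keeping track of the small coupling between $\langle X,u\rangle$ and $\langle X,w\rangle$ caused by $u$ and $w$ being only \emph{approximately} orthogonal (with inner product $O(\lambda)$); this is why one needs $\theta$ small relative to $1/C$ and why the lemma is natural in the regime where $\lambda$ is small. For $\lambda$ bounded below by an absolute constant $\lambda_0$, each of the three events in the statement has probability bounded below by a fixed positive constant (by \eqref{E:proportional} and its analogue for $w$), so the conclusion holds trivially by adjusting $c$. Low dimensions ($n\le 3$) can be handled separately since either the configuration $\|u-v\|_2 \le C\lambda$ is essentially forced to be trivial, or the explicit spherical density admits the same lower bound up to changing the constants.
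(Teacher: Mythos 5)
Your proposal is correct in its main thrust but follows a genuinely different route from the paper. The paper writes $X=\sqrt{n}\,Y/\|Y\|_2$ with $Y$ standard Gaussian, restricts to the shell $\tfrac{1}{C'}\sqrt{n}\le\|Y\|_2\le C'\sqrt{n}$, rotates $u,v$ into $\mathrm{span}(e_1,e_2)$, and exhibits a product event in $(Y_1,Y_2,Y_3^2+\cdots+Y_n^2)$ contained in the target, whose probability factors by independence of the Gaussian coordinates. You instead (i) eliminate the constraint on $v$ outright via the triangle inequality $|\langle X,v\rangle|\le|\langle X,u\rangle|+\|u-v\|_2|\langle X,w\rangle|$, reducing to a joint event in the two functionals $\langle X,u\rangle,\langle X,w\rangle$, and (ii) compute $\langle u,w\rangle=\|u-v\|_2/2=O(\lambda)$ and integrate the \emph{exact} two-dimensional marginal density of the spherical distribution over a rectangle. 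This is more direct — no Gaussian comparison, no norm-shell truncation, and the reduction in (i) is a genuine simplification the paper does not use — at the modest cost of having to quote (or rederive) the marginal density formula and check its uniform lower bound for all $n\ge n_0$ plus the finitely many small dimensions.

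Two points deserve care. First, your main argument requires $C\lambda^2/4\lesssim\theta$ and $\rho=\|u-v\|_2/2$ bounded away from $1$, so it only covers $\lambda\le\lambda_0(C,\theta)$; your claim that the complementary regime is ``trivial'' because each of the three events separately has probability bounded below is not right as stated — a union bound fails since $\mathbb{P}(|\langle X,u\rangle|>\lambda_0)$ can be close to $1$, and shrinking $\lambda$ to $\lambda_0$ changes the hypothesis $\|u-v\|_2\le C\lambda$ as well. This is not fatal (the lemma is only used, and only meaningful, for $\lambda$ a small absolute constant, and the paper's own proof carries the same implicit restriction), but the large-$\lambda$ case should either be excluded from the statement or handled honestly. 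Second, for $n=2$ there is no two-dimensional marginal density, so that case must be treated by a direct computation on the circle, as you indicate.
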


\begin{proof} We do the same as in the previous lemma: we have that 
\begin{align*} 
& \mathbb{P}\left( \left| \left\langle X,\frac{u-v}{\|u-v\|_2}\right\rangle \right| \ge \theta,|\langle X,u\rangle| \le \lambda, |\langle X,v \rangle| \le \lambda\right) \cr 
& \ge \mathbb{P} \left( \left| \left\langle Y,\frac{u-v}{\|u-v\|_2}\right\rangle \right| \ge C'\theta,|\langle Y,u\rangle| \le \frac{\lambda}{C'}, |\langle Y,v\rangle| \le \frac{\lambda}{C'}, \, \frac{1}{C'} \sqrt{n} \le \|Y\|_2 \le C' \sqrt{n}\right). 
\end{align*} 
Now note that, by rotation-invariance of $Y$ once more, we may assume that $u = e_1$ and $v = \alpha e_1 + \beta e_2$, where $\{e_j\}_{j=1}^n$ denotes the canonical basis in $\R^n$. Since we have $\|u-v\|_2 \le C \lambda$, it follows that 
\[
(1-\alpha)^2 + \beta^2 \le C^2 \lambda^2. 
\]
In particular, an elementary calculation shows that, under the conditions of the lemma, we have that belonging to the event 
\[
\left\{ |Y_1| \le \frac{\lambda}{\tilde{C}}, 2C'\theta \le |Y_2|\le \frac{1}{\tilde{C}},\frac{\sqrt{n}}{C'} \le Y_3^2 + \cdots + Y_n^2 \le C'\sqrt{n} - 2 \right\}
\]
implies belonging to the event
\[
\left\{ \left| \left\langle Y,\frac{u-v}{\|u-v\|_2}\right\rangle \right| \ge C'\theta,|\langle Y,u\rangle| \le \frac{\lambda}{C'}, |\langle Y,v\rangle| \le \frac{\lambda}{C'}, \, \frac{1}{C'} \sqrt{n} \le \|Y\|_2 \le C' \sqrt{n}\right\}.
\]
Here, note that the absolute constant $\tilde{C}$ depends only on the constant $C>0.$ Hence, we have 
\[
\mathbb{P} \left( \left| \left\langle Y,\frac{u-v}{\|u-v\|_2}\right\rangle \right| \ge C'\theta,|\langle Y,u\rangle| \le \frac{\lambda}{C'}, |\langle Y,v\rangle| \le \frac{\lambda}{C'}, \, \frac{1}{C'} \sqrt{n} \le \|Y\|_2 \le C' \sqrt{n}\right) 
\]
\[
\ge \mathbb{P}\left( |Y_1| \le \frac{\lambda}{20C'}\right) \cdot \mathbb{P}\left( 2C'\theta \le |Y_2| \le \frac{1}{20}\right) \cdot \mathbb{P}\left(\frac{\sqrt{n}}{C'} \le Y_3^2 + \cdots + Y_n^2 \le C'\sqrt{n} - 2 \right),
\]
by independence of the entries of $Y$. Since the two events
\[
\left\{ \frac{\sqrt{n}}{C'} \le Y_3^2 + \cdots + Y_n^2 \le C'\sqrt{n} - 2 \right\}, \quad \left\{ 2C'\theta \le |Y_2| \le \frac{1}{20}\right\}
\]
have probability bounded from below by an absolute constant as long as we take $\theta$ to be a sufficiently small absolute constant and since 
\[
\mathbb{P}\left( |Y_1| \le \frac{\lambda}{20C'}\right) \ge c \cdot \lambda, 
\]
for some small absolute constant $c>0,$ we have that 
\[
\mathbb{P}\left( \left| \left\langle X, \frac{u-v}{\|u-v\|_2} \right\rangle \right| > \theta,|\langle X,u\rangle| \le \lambda, |\langle X,v \rangle| \le \lambda\right) \ge c \cdot \lambda,
\]
for some possibly different absolute constant $c>0,$ as desired. 
\end{proof} 

We now move to the proof of Theorem \ref{thm:spherical}. 
\begin{proof}[Proof of Theorem \ref{thm:spherical}] We divide our task once more into two cases. \medskip

\noindent\textbf{Case 1: $\|u-v\|_2 \ge C\lambda$.} In this case, we bound 
\[
\frac{1}{m} \sum_{i=1}^m \frac{|\Phi_\lambda(\langle X_i,u\rangle) - \Phi_{\lambda}(\langle X_i,v\rangle)|^2}{\|u-v\|_2^2} \ge \frac{4\lambda^2}{\|u-v\|_2^2} \frac{1}{m} \sum_{i=1}^m \mathds{1}\left( \langle X_i,u\rangle \le - \lambda, \langle X_i, v \rangle \ge \lambda \right).  
\]
Now, take $\mathcal{N}'$ to be a $c\lambda^2$-net on $\B_{\R^n}$. Let $u_0,v_0$ denote the closest elements of $\mathcal{N}'$ to $u,v$, respectively. We then bound 
\begin{align*}
\frac{1}{m} & \sum_{i=1}^m \mathds{1}( \langle X_i,u\rangle \le -\lambda, \langle X_i,v\rangle \ge \lambda) \cr 
&\ge \frac{1}{m} \sum_{i=1}^m \mathds{1}( \langle X_i,u_0\rangle \le -2\lambda, \langle X_i,v_0\rangle \ge 2\lambda, |\langle X_i,u-u_0\rangle| \le \lambda,|\langle X_i,v-v_0\rangle|\le \lambda) \cr 
& \ge \frac{1}{m} \left( \sum_{i=1}^m \mathds{1}( \langle X_i,u_0\rangle \le -2\lambda, \langle X_i,v_0\rangle \ge 2\lambda) - \sum_{i=1}^m \mathds{1}(|\langle X_i,u-u_0\rangle| \ge \lambda \cup |\langle X_i,v-v_0\rangle|\ge \lambda)\right).
\end{align*}
Since we know that, for each individual pair $(u_0,v_0)\in \mathcal{N}' \times \mathcal{N}'$ such that $\|u_0 - v_0\| \ge \frac{C}{2}\lambda,$ we have with probability at least $1 - e^{-c \lambda m}$ that 
\[
\frac{1}{m} \sum_{i=1}^m \mathds{1}( \langle X_i,u_0\rangle \le -2\lambda, \langle X_i,v_0\rangle \ge 2\lambda) \ge \frac{c}{2} \|u_0-v_0\|_2.
\]
We next note that
\begin{align*} 
\frac{1}{m}&\sum_{i=1}^m \mathds{1}(|\langle X_i,u-u_0\rangle| \ge \lambda \cup |\langle X_i,v-v_0\rangle|\ge \lambda) \cr 
& \le \frac{1}{m} \sum_{i=1}^m \mathds{1}(|\langle X_i,u-u_0\rangle| \ge \lambda ) + \frac{1}{m} \sum_{i=1}^m \mathds{1}(|\langle X_i,v-v_0\rangle|\ge \lambda) \cr 
& \le \frac{1}{m\lambda^2} \sum_{i=1}^m \left( |\langle X_i,u-u_0\rangle|^2 + |\langle X_i,v-v_0\rangle|^2\right) \cr 
& \le C\lambda^2,
\end{align*} 
where the last line follows from Proposition \ref{prop:vershynin-subgauss}, and it holds with probability at least $1- e^{-cm}.$ Thus, with probability at least $1-|\mathcal{N}'|^2e^{-c \lambda m} \ge 1 - e^{C n \log(1/\lambda) - c \lambda m}$, we have that 
\[
\frac{1}{m} \sum_{i=1}^m \mathds{1}(\langle X_i,u\rangle \le -\lambda, \langle X_i,v\rangle \ge \lambda) \ge \frac{c}{3} \|u-v\|_2,
\]
which concludes the claim in this case, since $2 \ge \|u-v\|_2 \ge C\lambda$ and $m \gtrsim \lambda^{-1} \log(1/\lambda)n.$ \medskip

\noindent\textbf{Case 2: $\|u-v\|_2 \le C\lambda$.} In this case, one cannot do better than restricting matters to the clipped vectors. Indeed, we use the bound 
\[
\frac{1}{m} \sum_{i=1}^m \frac{|\Phi_\lambda(\langle X_i,u\rangle) - \Phi_{\lambda}(\langle X_i,v\rangle)|^2}{\|u-v\|_2^2} \ge \frac{1}{m} \sum_{i=1}^m \left| \left\langle X_i,\frac{u-v}{\|u-v\|_2}\right\rangle\right|^2 \mathds{1}\left(|\langle X_i,u\rangle| \le \lambda, |\langle X_i,v\rangle| \le \lambda\right)
\]
\[
\ge \frac{\theta^2}{m} \sum_{i=1}^m \mathds{1} \left( \left| \left\langle X_i,\frac{u-v}{\|u-v\|_2}\right\rangle\right| > \theta, |\langle X_i,u\rangle| \le \lambda, |\langle X_i,v\rangle| \le \lambda\right).
\]
By arguing as in the previous case, we are able to conclude that, with probability at least $1 - e^{-c \lambda m}$, we have 
\[
\frac{1}{m} \sum_{i=1}^m \mathds{1} \left( \left| \left\langle X_i,\frac{u-v}{\|u-v\|_2}\right\rangle\right| > \theta, |\langle X_i,u\rangle| \le \lambda, |\langle X_i,v\rangle| \le \lambda\right) \ge \frac{c}{3} \lambda, \, \forall \, u,v\in\S^{n-1}. 
\]
This finishes the proof. 
\end{proof}

\begin{proof}[Proof of Corollary \ref{cor:sparse-rec}] The proof of Corollary \ref{cor:sparse-rec} may be achieved by simply combining the proof of Theorem \ref{thm:spherical} with the proof of Theorem \ref{thm:main-sparse} sketched above. We omit the details. 
\end{proof}

\begin{remark}\label{remark on samples} Note that, in the statement of Theorem \ref{cor:sparse-rec}, the minimal asserted number of samples required for the second part of the result to hold depends at least like $\lambda^{-3} \log\left( \frac{1}{\lambda}\right)$ times the square of the Gaussian width of the set $\Sigma_s$ of effectively sparse vectors. We note that, in an analogous manner, it has been conjectured in \cite{plan-vershynin-2} that a dependence of $\delta^{-2}$ times the square of the Gaussian width should be necessary -- and sufficient -- in order for a random hyperplane tesselation on the set to approximate the distance between two points up to confidence $\delta$ with high probability. 

In that regard, we highlight \cite{jacques2013robust}, where in the case of the \emph{exactly} sparse vectors a similar result has been obtained as Corollary \ref{cor:one-bit-cs}, and the work of S.~Dirksen and S.~Mendelson \cite{dirksen2021non} which showed that, under the effect of adding noise before quantization -- the so-called process of \emph{dithering} -- one may actually recover effectively sparse vectors from the whole unit \emph{ball} with relatively few samples. In this context, a natural step would be to consider a suitable dithering version of the problems above, which we shall delegate to future work.
\end{remark}
Finally, we present a proof of Corollary \ref{cor:one-bit-cs}.  
\begin{proof}[Proof of Corollary \ref{cor:one-bit-cs}] 
We focus on the first assertion of Corollary \ref{cor:one-bit-cs}. Note that the proof of \textbf{Case 1} of Theorem \ref{thm:spherical} applies with the modifications highlighted for Corollary \ref{cor:sparse-rec} above if we only suppose that $\|u-v\|_2 \ge \delta$ and we redo the argument accordingly. In more specific words, we can bound
\[
\frac{1}{m} \sum_{i=1}^m \frac{|\Phi_\lambda(\langle X_i,u\rangle) - \Phi_{\lambda}(\langle X_i,v\rangle)|^2}{\|u-v\|_2^2} \ge \frac{4\lambda^2}{\|u-v\|_2^2} \frac{1}{m} \sum_{i=1}^m \mathds{1}\left( \langle X_i,u\rangle \le - \lambda, \langle X_i, v \rangle \ge \lambda \right).  
\]
We may still estimate accordingly in this case by 
\begin{align*}
\frac{1}{m} & \sum_{i=1}^m \mathds{1}( \langle X_i,u\rangle \le -\lambda, \langle X_i,v\rangle \ge \lambda) \cr 
&\ge \frac{1}{m} \sum_{i=1}^m \mathds{1}( \langle X_i,u_0\rangle \le -\lambda-\delta, \langle X_i,v_0\rangle \ge \lambda + \delta, |\langle X_i,u-u_0\rangle| \le \delta,|\langle X_i,v-v_0\rangle|\le \delta) \cr 
& \ge  \left( \frac{1}{m}\sum_{i=1}^m \mathds{1}( \langle X_i,u_0\rangle \le -\lambda-\delta, \langle X_i,v_0\rangle \ge \lambda+\delta) \right. \cr 
& \left. - \frac{1}{m}\sum_{i=1}^m \mathds{1}(|\langle X_i,u-u_0\rangle| \ge \delta \cup |\langle X_i,v-v_0\rangle|\ge \delta)\right).
\end{align*}
As long as we take $\lambda < \delta,$ the first sum is bounded from below by 
\[
\frac{1}{m} \sum_{i=1}^m \mathds{1}( \langle X_i,u_0\rangle \le -2\delta, \langle X_i,v_0\rangle \ge 2\delta),
\]
which like in the proof of Theorem \ref{thm:spherical}, is bounded from below by $c \|u-v\|_2$ in an event of probability at least $1 - |\mathcal{N}|^2 e^{- c\delta m}$, while the second sum is bounded from above by $c \cdot \delta$ with probability at least $1 - e^{-c \delta^2 m}$. By choosing $\mathcal{N}$ to be a $c\frac{\delta}{\sqrt{\log\left(\frac{1}{\delta}\right)}}$-net of $\Sigma_s \cap \S^{n-1}$, we obtain that, with probability at least $1 - e^{-c \delta^2 m}$, we have that for all $u, v \in \Sigma_s \cap \S^{n-1}$with $\|u-v\|_2 \ge C \delta,$ 
\[
\frac{1}{m} \sum_{i=1}^m \frac{|\Phi_\lambda(\langle X_i,u\rangle) - \Phi_{\lambda}(\langle X_i,v\rangle)|^2}{\|u-v\|_2^2} \ge c \cdot \lambda^2,
\]
whenever $m \ge C \delta^{-3} \log\left( \frac{1}{\delta}\right) s \log\left( \frac{en}{s}\right).$ A quick analysis of the proof above shows that the events above can be seen to be \emph{independent} of $\lambda < \delta$. This concludes the first assertion of Corollary \ref{cor:one-bit-cs}. 

\vspace{2mm}

In order to prove the second assertion of Corollary \ref{cor:one-bit-cs}, let $(u_{\delta},v_{\delta}) \in (\Sigma_s\cap\S^{n-1})^2$ be such that $\|u_\delta - v_\delta\|_2 \ge \delta$ and 
\[
\inf_{\substack{{u,v \in \Sigma_s \cap \S^{n-1}} \\ {\|u-v\|_2 \ge \delta}}} \frac{1}{m\|u-v\|_2^2} \sum_{i=1}^m |\text{sign}(\langle X_i, u \rangle) - \text{sign}(\langle X_i,v\rangle)|^2 \]
\[
\ge \frac{1}{2} \frac{1}{m\|u_\delta-v_\delta\|_2^2} \sum_{i=1}^m |\text{sign}(\langle X_i, u_\delta \rangle) - \text{sign}(\langle X_i,v_\delta\rangle)|^2
\]
\[
= \frac{1}{2m\|u_\delta - v_\delta\|_2^2} \lim_{\lambda \to 0} \sum_{i=1}^m \frac{|\Phi_\lambda(\langle X_i,u_\delta\rangle) - \Phi_\lambda(\langle X_i,v_{\delta}\rangle)|^2}{\lambda^2}.
\]
By the first assertion, this last right-hand side is bounded from below by an absolute constant with probability at least $1 - e^{-c\delta^2 m}$, which concludes the proof of the second assertion, and hence also that of Corollary \ref{cor:one-bit-cs}. 
\end{proof}

\printbibliography[heading=bibintoc,title=References]
%\bibliographystyle{plain}
%\bibliography{References.bib}

\end{document}